\documentclass[aps,10pt,groupedaddress,prx,superscriptaddress,twocolumn,balancelastpage, tightenlines]{revtex4-2}

\usepackage{env} 
\usepackage{soul}
\usepackage{subcaption}
\usepackage{tocloft}
\usepackage{thmtools} 
\usepackage{thm-restate}

\usepackage{titletoc}
\newcommand\DoToC{%
  \startcontents
  \printcontents{}{1}{\textbf{Contents}\vskip3pt\hrule\vskip5pt}
  \vskip3pt\hrule\vskip5pt
}

\theoremstyle{plain}  % Bold label, upright body (for definitions, examples)
\newtheorem{simtech}{Simulation Technique}

\crefname{simtech}{Simulation Technique}{Simulation Techniques}

\hypersetup{
    colorlinks=true,
    linkcolor=navy,
    citecolor=navy,
    urlcolor=navy,
    breaklinks=true,
}

\definecolor{HeaderRowColor}{RGB}{220, 234, 247}
\definecolor{FirstColumnColor}{RGB}{238, 243, 248}
\definecolor{EmphasisColorGreen}{HTML}{DCEFD8}
\definecolor{EmphasisColorRed}{HTML}{F6D6D1}
\definecolor{EmphasisColorPurple}{HTML}{E1D5F0}
\definecolor{tabblue}{HTML}{1f77b4}
\definecolor{tabred}{HTML}{d62728}
\definecolor{inpA}{rgb}{0.785, 0.544, 0.585}
\definecolor{inpB}{rgb}{0.770, 0.550, 0.596}
\definecolor{inpC}{rgb}{0.740, 0.563, 0.619}
\definecolor{inpD}{rgb}{0.706, 0.578, 0.645}
\definecolor{inpE}{rgb}{0.670, 0.594, 0.673}
\definecolor{inpF}{rgb}{0.633, 0.610, 0.702}
\definecolor{inpG}{rgb}{0.593, 0.628, 0.732}
\definecolor{inpH}{rgb}{0.556, 0.644, 0.760}
\definecolor{inpI}{rgb}{0.512, 0.663, 0.795}
\definecolor{inpJ}{rgb}{0.473, 0.680, 0.824}
\definecolor{inpK}{rgb}{0.658, 0.599, 0.682}

\newcommand{\diagfillcustom}[5]{%
\begin{tikzpicture}[baseline=(n.base)]
\node[
    anchor=center,
    minimum width=7em
] (n) {\strut #1};
\fill[#2] (n.south west)--(n.north west)--(n.south east)--cycle;
\fill[#3] (n.north west)--(n.north east)--(n.south east)--cycle;
\node[anchor=south west, inner sep=2pt] at (n.south west) {\small #4};
\node[anchor=north east, inner sep=2pt] at (n.north east) {\small #5};
\node[anchor=center] {#1};
\end{tikzpicture}%
}

\newcommand{\solidcelllabelled}[3]{%
\begin{tikzpicture}[baseline=(n.center)]
  \node[anchor=center, inner sep=0pt,
        minimum width=7em, minimum height=5.7em] (n) {};
  \fill[#1] (n.south west) rectangle (n.north east);   % solid background first
  \node[anchor=south west, inner sep=2pt] at (n.south west) {\small #2};
  \node[anchor=north east, inner sep=2pt] at (n.north east) {\small #3};
\end{tikzpicture}%
}

\newcommand{\diagfillcustomhighlighted}[5]{%
\begin{tikzpicture}[baseline=(n.base)]
\node[
    anchor=center,
    minimum width=7em
] (n) {\strut #1};
\fill[#2] (n.south west)--(n.north west)--(n.south east)--cycle;
\fill[#3] (n.north west)--(n.north east)--(n.south east)--cycle;
\draw[line width=1.5pt, orange!90!black]
    ($(n.south west)+(0.5pt,0.5pt)$) rectangle ($(n.north east)+(-0.5pt,-0.5pt)$);
\node[anchor=south west, inner sep=2pt] at (n.south west) {\small #4};
\node[anchor=north east, inner sep=2pt] at (n.north east) {\small #5};
\node[anchor=center] {#1};
\pgfresetboundingbox
\path[use as bounding box] (n.south west) rectangle (n.north east);
\end{tikzpicture}%
}

\makeatletter
\renewcommand\@fnsymbol[1]{\ensuremath{\ifcase#1\or *\or\dagger\or\ddagger\or\mathsection\or\mathparagraph\else\fi}}
\makeatother

\begin{document}

\title{Classical simulation and model concentration in passive linear optics}

\author{Léo Monbroussou}\email{leo.monbroussou@gmail.com}
\affiliation{School of Informatics, University of Edinburgh, Edinburgh, United Kingdom}
\affiliation{Terra Quantum AG, Kornhausstrasse 25, St.~Gallen, 9000, Switzerland}

\author{Hugo Thomas}
\thanks{These authors contributed equally to this work.}
\affiliation{Laboratoire d’Informatique de Paris 6, CNRS, Sorbonne Université, Paris, France}
\affiliation{Quandela, 7 rue Léonard de Vinci, Massy, France}
\affiliation{DIENS, \'Ecole Normale Supérieure, PSL University, CNRS, INRIA, Paris, France}

\author{Hela Mhiri}
\thanks{These authors contributed equally to this work.}
\affiliation{Laboratoire d’Informatique de Paris 6, CNRS, Sorbonne Université, Paris, France}
\affiliation{Institute of Physics, Ecole Polytechnique Fédérale de Lausanne (EPFL), Lausanne, Switzerland}

\author{Zoë Holmes}
\affiliation{Institute of Physics, Ecole Polytechnique Fédérale de Lausanne (EPFL), Lausanne, Switzerland}
\affiliation{Centre for Quantum Science and Engineering, École Polytechnique Fédérale de Lausanne (EPFL), Lausanne, Switzerland}

\author{Elham Kashefi}
\affiliation{School of Informatics, University of Edinburgh, Edinburgh, United Kingdom}
\affiliation{Laboratoire d’Informatique de Paris 6, CNRS, Sorbonne Université, Paris, France}

\begin{abstract}
    Passive linear optics is a restricted model of quantum computation, with complexity-theoretic evidence of quantum advantage for sampling tasks and low losses that make it attractive for near-term algorithms. In qubit architectures, a growing body of work has revealed a close connection between barren plateaus and classical simulability. Whether an analogous tradeoff exists for bosonic systems remains largely unexplored. Building on a recently developed representation-theoretic framework for moments of random passive linear-optical circuits, we characterize the concentration of expectation values for relevant families of particle-number-preserving observables by explicitly evaluating their projections into irreducible representations of the unitary group and analyzing their asymptotic scaling. We show that concentration is governed by the misalignment of the projections into irreducible representations of the input state and the observable, providing a unified representation-theoretic interpretation of generalized entanglement and locality in the bosonic setting. We further relate these concentration properties to existing classical simulation techniques, identifying broad classes of trainable observables that admit efficient classical simulation. In parallel, we identify Fock-state inputs and observables that evade exponential concentration. 
    We identify regimes that appear to evade exponential concentration while retaining a polynomially large signal component that is not accessible to known efficient classical simulation methods.
    However, the separation is only partial: most of the signal remains classically tractable, and the residual part, while not exponentially suppressed, is small enough that a truncation serves as a classical surrogate with polynomially small error. Our framework nonetheless provides a systematic route for searching for regimes that unambiguously combine the absence of exponential concentration and lies beyond known efficient classical simulation methods. %Although we do not identify a clear example that both avoids exponential concentration and unambiguously lies beyond known efficient classical simulation methods, our framework provides a systematic route for searching for such regimes.
\end{abstract}

\maketitle

\begin{figure*}[t!]
    \centering
    \includegraphics[width=1.0\linewidth]{fig/Intro_fig_2.tex}
    \caption{\justifying \textbf{Summary of the main results}: we study passive linear-optical models through the lens of the decomposition of its operator into irreducible components. By adapting the iterative procedure from \protect\cite{mhiri_boson_2026}, we study the irreducible component purities of the input state $\rho$ and observable $O$ in order to study the concentration of the model. We present the example of Fock-state purities and the number-phase observable introduced in \cref{subsec:Analytical_Study_Operators} compared with the irreducible dimension, for $n=10$ photons and $m=20$ modes. Finally, we compare our results with simulation techniques, and we identify four regimes of interest.}
    \label{fig:Introduction_Figure}
\end{figure*}

\section{Introduction}\label{sec:Introduction}

Passive linear optics has emerged as a promising model of computation for near term applications of quantum computing. It is a restricted but experimentally mature model of quantum computation: photons evolve through networks of beam splitters and phase shifters with low losses, room-temperature operations, and demonstrated large-scale integration~\cite{wang_integrated_2020, taballione_universal_2021, bourassa_blueprint_2021, aghaeerad_scaling_2025, alexander_manufacturable_2025, maring_versatile_2024}.

Variational quantum algorithms, in which a parametrized quantum circuit is trained by a classical optimizer, are a flexible paradigm with countless proposed applications~\cite{preskill_quantum_2018, cerezo_variational_2021, biamonte_quantum_2017}. However, their viability is threatened by barren plateaus (BPs), where loss functions concentrate exponentially with system size, causing gradients to vanish and training with polynomially many measurements to fail~\cite{larocca_barren_2025, mcclean_barren_2018, arrasmith_equivalence_2022}.
BPs find their origins in several interrelated sources, including circuit expressibility~\cite{holmes_connecting_2022}, global cost functions~\cite{cerezo_cost_2021}, entanglement of the input state~\cite{ortiz_marrero_entanglement-induced_2021}, and hardware noise~\cite{wang_noise-induced_2021}. 
For circuits whose generators close on a dynamical Lie algebra, these diagnoses have recently been unified into an exact Lie-algebraic theory of loss concentration~\cite{ragone_lie_2024, fontana_characterizing_2024}.

In qubit architectures, it has recently been argued that the structures used to establish trainability typically also enable classical simulation~\cite{toth_permutationally_2010, uvarov_barren_2021,cerezo_cost_2021,zhao_analyzing_2021,liu_presence_2022,larocca_diagnosing_2022,fontana_characterizing_2024,zhang_absence_2024,ragone_lie_2024,monbroussou_trainability_2025,cerezo_does_2025,angrisani2025simulatingquantumcircuitsarbitrary}. The central intuition is that BPs reflect a curse of dimensionality: exploring an exponentially large space causes loss functions to concentrate~\cite{cerezo_does_2025}. Methods that avoid this concentration generally restrict the dynamics to a smaller \emph{effective} subspace, which in turn makes the model classically simulable. These results have so far focused on qubit and fermionic architectures, leaving open whether the same link between trainability and simulability persists in passive linear optics. 
 
In this work, we build on recent results from~\cite{mhiri_boson_2026} to analyse closed-form expressions for the variance of expectation values of particle-number-preserving observables under random linear optical interferometers. We 
first recall how to express and evaluate the variance in terms of the projection norm of the input state and observable onto irreducible representations of the unitary group on $m$ modes, which we denote by irrep purities. We then analyze their dependence on photon bunching for Fock state inputs, and develop construction rules connecting observable structure to irrep support .
We interpret the state irrep purities in terms of \emph{generalized entanglement} and the observable irrep purities as a representation-theoretic notion of \emph{generalized locality}, and show how their asymptotic scaling governs concentration~\cite{larocca_diagnosing_2022,diaz2023showcasingbarrenplateautheory}. Intriguingly, in contrast to the usual ``curse of dimension'' story in the qubit case~\cite{cerezo_does_2025}, it is seemingly possible to identify observables where there is a non-vanishing signal from large-dimensional irreps.

We then compare these concentration regimes with existing average-case and worst-case, exact and approximate classical simulation methods, distinguishing representation-theoretic techniques from algorithms that exploit the algebraic structure of passive linear optics.

Although we identify settings that evade exponential concentration as well as specific simulation mechanisms, we do not find a clear example that simultaneously avoids exponential concentration and unambiguously lies beyond known efficient classical simulation methods. In examples we obtain using number-phase operators, most of the signal remains classically tractable, and the residual part, while not exponentially suppressed, is small enough that a truncation serves as a classical surrogate with polynomially small error. Our analysis nonetheless provides a systematic route for searching for such regimes, and highlights potential targets for new approximate simulation algorithms.

\section{Passive Linear Optics}\label{subsubsec:PLO}

The Hilbert space of states of indistinguishable photons in $m$ modes is the bosonic Fock space $\mathcal{F} = \bigoplus_{n \geq 0} \mathcal{F}_n$, where each subspace $\mathcal{F}_n = \mathrm{Sym}^n(\mathbb{C}^m)$ carries states of $n$ identical particles. This symmetric structure is realized using bosonic creation and annihilation operators, that are defined on $\mathcal{F}_n$ by
\begin{equation}
    \begin{split}
        a_i^\dagger \ket{s_1, \dots, s_j, \dots, s_m} &= \sqrt{s_j +1} \ket{s_1, \dots, s_j + 1, \dots, s_m}\,, \\
        a_i \ket{s_1, \dots, s_j, \dots, s_m} &= \sqrt{s_j} \ket{s_1, \dots, s_j - 1, \dots, s_m}\,,
    \end{split} 
\end{equation}
and satisfy the commutation relations $[a_i,a_j^\dagger]= \delta_{i,j}\Id$, and $[a_i,a_j]=[a_i^\dagger,a_j^\dagger]=0$. An orthonormal basis of $\mathcal{F}_n$ is given by occupation-number states, or Fock states
\begin{equation}
    \ket{S} = \prod_{i=1}^m \frac{(a_i^\dagger)^{s_i}}{\sqrt{s_i!}}\ket{\boldsymbol{0}}, 
    \quad \sum_{i=1}^m s_i = n,
\end{equation}
where $\ket{\boldsymbol{0}}$ corresponds to the $m$-mode vacuum state. These occupation-number states are
labelled by configurations $S \in \Phi_m^n$ defined as
\begin{equation}
    \Phi_m^n = \left\{(s_1,\dots,s_m)\ |\ \sum_{i=1}^m s_i=n\right\} \;,
\end{equation}
with $|\Phi_m^n| =  \binom{n+m-1}{n}$. Passive linear-optical transformations are generated by quadratic Hamiltonians of the form 
\begin{equation}
    \hat H = \sum_{i,j = 1}^m h_{i,j}\hat a_i^\dag \hat a_j,
\end{equation}
where $h_{i,j} = h_{j,i}^*$ since $\hat H$ is Hermitian. They can be physically implemented with beam splitters and phase shifters~\cite{makarov_theory_2022,pan_multiphoton_2012, clements_optimal_2016}. These operators form the Lie algebra $\mathfrak{u}(m)$ under commutation with their Hermitian conjugate. Exponentiation yields the unitary group $U(m)$, which characterizes all passive linear-optical transformations on $m$ modes~\cite{garcia-escartin_multiple_2019}.

Concretely, a linear interferometer is fully characterized by its action on the single-particle space $\mathcal{F}_1 = \mathbb{C}^m$, given by an $m \times m$ unitary matrix. This action lifts to the full Fock space via a unitary representation, known as the \emph{photonic homomorphism}~\cite{aaronson_computational_2011},
\begin{equation}\label{eq:photonic_homomorphism}
    \varphi: U(m) \rightarrow U(\mathcal{F})\;,
\end{equation}
with $\mathcal{F} = \bigoplus_{n\geq0} \mathcal{F}_n$ the bosonic Fock space, where each $\mathcal{F}_n = \mathrm{Sym}^n(\mathbb{C}^m)$ describes configurations of $n$ identical particles and $U(\mathcal{F})\subseteq U\left(|\Phi_m^n|\right)$ is the set of unitary transformations on the Fock space $\mathcal F$. Particle-number conservation implies that $\varphi$ can be decomposed into orthogonal, finite-dimensional unitary representations, where each block represents a fixed number of particles $n$:
\begin{equation}\label{eq:phi_n}
    \varphi = \bigoplus_{n \geq 0} \varphi_n\;,   
\end{equation}
where $\varphi_n$
is the maximally symmetric  irreducible representation of the group $U(m)$ in the $n$-particle Fock space $\mathcal{F}_n$~\cite{aniello_exploring_2006}.

In this work, we study expectation values of \emph{particle-number-preserving observables} on Fock states evolved under passive linear-optical transformations.
We  recall that the algebra of bosonic particle-number-preserving operators is spanned by normally ordered monomials containing equal numbers of creation and annihilation operators,
\begin{equation}\label{eq:subalgebra_commute}
    \mathcal{W} = {\rm Span} \{  a_{j_1}^\dagger \dots a_{j_d}^\dagger a_{i_1} \dots a_{i_d} \;|\;  d\geq 0\}\;.
 \end{equation}
 By construction, every operators in 
$\mathcal{W}$ commutes with the total number operator and therefore preserves each fixed-particle-number sector of the Fock space. Consequently,
\begin{equation}\label{eq:decomp_big_W}
    \mathcal{W} \simeq \bigoplus_{n \geq 0} W_n\;,
\end{equation}
where $W_n = \mathcal{F}_n \otimes \mathcal{F}_n^*$ denotes the  space of operators acting on the $n$--photon sector  $\mathcal{F}_n$. 

Throughout this work, we consider an initial state \(\rho\) with a fixed
photon number \(n\), supported on \(\mathcal F_n\). Hence, for
\(O\in\mathcal W\), its expectation value after a passive linear-optical
transformation \(U\) is
\begin{equation}\label{eq:linear_form_n}
    f_U(\rho,O)
    =
    \Tr\!\left[
        \rho\,
        \varphi_n(U)^\dagger
        O^{(n)}
        \varphi_n(U)
    \right],
\end{equation}
where \(O^{(n)}\) denotes the restriction of \(O\) to the \(n\)-photon
sector \(\mathcal F_n\). Moreover, we fix $m = \lceil2.1n\rceil$, in light of recent hardness results of Boson Sampling in this regime \cite{bouland_complexitytheoretic_2023}.

Although the dynamics within each fixed-particle-number sector are described by the representation $\varphi_n(U)$ of the low dimensional Lie group $U(m)$, evaluating expectation values of the form in \cref{eq:linear_form_n} can nevertheless be computationally intractable when the input state and/or the observable are not generated by passive linear optics. A canonical example is provided by Boson Sampling, where both the input state and the measurement are projectors onto Fock states. In this case, the output probabilities are given by
\begin{equation}
    \Tr[\ketbra{T}{T} \varphi_n(U)^\dagger \ketbra{S}{S}  \varphi_n(U)] = \frac{|\per{U_{S, T}}|^2}{\prod_{i=1}^{m}s_i! \prod_{j=1}^{m}t_j!},
\end{equation}
where $U_{S, T}$ is obtained from $U$ by taking $s_i$ copies of the $i$-th row
and $t_j$ copies of the $j$-th column~\cite{aaronson_computational_2011}, and
${\text{Per}(A) = \sum_{\sigma \in S_n} \prod_{i=1}^n A_{i,\sigma(i)}}$ is the matrix permanent function for an $n \times n$ matrix~\cite{marcus_permanents_1965}. Computing matrix permanents is $\#P-$hard in general, making these transition probabilities computationally intractable in the worst case. This hardness underpins the computational complexity of Boson Sampling and motivates the study of the more general class of expectation values in \cref{eq:linear_form_n}, of which Boson Sampling probabilities are a particular instance.

\section{Concentration of observable expectation values}

\subsection{Framework}
Since passive linear-optical transformations are described by  the compact Lie group $U(m)$, the canonical notion of sampling a random interferometer corresponds to drawing a unitary matrix $U$ from the Haar measure, which we denote by $U \sim U(m)$. The Haar measure, defined as the unique probability measure on $U(m)$ that is invariant under both left and right group multiplication \cite{collins_integration_2006}, provides the natural notion of a uniformly random interferometer used throughout Boson Sampling and related photonic quantum advantage proposals \cite{aaronson_computational_2011,hamilton_gaussian_2017,oszmaniec_fermion_2022}.

Linear-optical interferometers are specified by $m^2$ continuous parameters through an appropriate parametrization of the group, for example via generalized Euler-angle or Reck--Clements decompositions~\cite{ carolan_universal_2015, moralis-pegios_perfect_2024, clements_optimal_2016}. Consequently, sampling these parameters according to the probability distribution induced by the Haar measure is equivalent to sampling a Haar-random interferometer and can be done efficiently \cite{mezzadri_how_2007}. This correspondence naturally links our analysis to parametrized photonic circuits arising in variational quantum algorithms and quantum machine learning, where average-case properties are commonly studied through Haar-random ensembles.

In this work, we are interested in the concentration of expectation values of bosonic observables around their  mean, on average over the Haar measure. We say that an expectation value exhibits \emph{exponential concentration} if the probability that it deviates from its mean by more than a constant $\varepsilon > 0$ decays exponentially with the system size. By Chebyshev's inequality, this condition reads
\begin{equation}
    \Pr[\left|f_U - \e[U \sim U(m)]{f_U}\right|\geq \varepsilon] \leq \frac{\var[U \sim U(m)]{f_U}}{\varepsilon^2},
\end{equation}
so establishing exponential concentration reduces to proving that the variance of $f_U$, and hence its second moment,
decays exponentially with the system size, i.e.
\begin{equation}
\e[U \sim U(m)]{f_U^2} -  \e[U \sim U(m)]{f_U}\ ^{\!\!2} \in \mathcal{O}(\exp(-n)),
\end{equation}
where we have used the notation of \cref{eq:linear_form_n}
for the expected value without specifying the state and observable.

A key observation to compute second moments over random interferometers is that the adjoint action of $U(m)$ partitions the fixed particle-number operator space $W_n$ into irreducible representations (irreps), i.e.,

\begin{align}
     & W_n \simeq \bigoplus_{k=0}^n \lambda_k^{(n)}, \label{eq:decomp_Wn}\\
     &\varphi_n(U) P_k^{(n)} (O) \varphi_n^\dagger(U)  \in \lambda_k^{(n)},\label{eq:decomp_adjoint}
 \end{align}
where $P_k^{(n)} (O)$ denotes the projection of the operator ${O \in W_n}$ onto the irrep $\lambda_k^{(n)}$ and $\simeq$ indicates a basis change (realized by the Clebsch-Gordon transform) \cite{arienzo_bosonic_2025} .
The dimension of these irreps is given by 
\begin{equation}\label{eq:irrep_dim}
    d_k^{(n)}
    =
    \dim\lambda_k^{(n)}
    =
    \frac{2k+m-1}{m-1}
    \binom{k+m-2}{k}^{\!2},
\end{equation}
where the  photo number $n$ determines the maximal module index $k \leq n$. 
One can easily check that the irrep dimensions sum up to the dimension of  the operator space, i.e. 
\begin{equation}
    \sum_{k=0}^{n}d_k^{(n)}
    =
    \binom{n+m-1}{n}^{\!2}
    =
    \dim W_n.
\end{equation}

Since the second moment commutes with this group action, Schur's lemma  implies that distinct irreducible components do not mix under averaging \cite{goodman_symmetry_2009, serre_linear_1977}. As a result, computing the second moment over the Haar measure simply consists in taking the the $2$-norm after projecting the observable and state separately onto the different irreps appearing in \cref{eq:decomp_Wn}. We formalize this result in \cref{thm:1st_2nd_Moment_Bosonic}.

\begin{theorem}[First and second moment]\label{thm:1st_2nd_Moment_Bosonic}
The mean and variance of the expectation value function $f_U(\rho,O)$  over Haar-random interferometers $U \sim U(m)$ are given respectively by
    \begin{align}
        \e[U \sim U(m)]{f_U(\rho,O)} & = \frac{\Tr[O] }{|\Phi_m^n|} \;,\label{eq:1st_Moment} \\ 
        \var[U \sim U(m)]{{f_U(\rho,O)}} & =  \sum_{k=1}^n \frac{\left\|P_k^{(n)}(\rho)\right\|_2^2 \left\|P_k^{(n)}(O)\right\|_2^2}{d_{k}^{(n)}} \,, \label{eq:sec_m_exp}
    \end{align}
    where $|\Phi_m^n|$ is the dimension of the $n$-photon Fock space $\mathcal{F}_n$, $P_k^{(n)}(X)$ is the projection of the operator $X$ onto the irrep $\lambda_k^{(n)}$  of dimension $d_{k}^{(n)}$ introduced in \cref{eq:irrep_dim}. 
\end{theorem}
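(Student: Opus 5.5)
The plan is to rewrite $f_U$ as a Hilbert--Schmidt inner product on $W_n$ and then apply Schur orthogonality to the adjoint representation $\mathrm{Ad}_n(U)(X)=\varphi_n(U)X\varphi_n(U)^\dagger$, using the decomposition \cref{eq:decomp_Wn}. Write $f_U(\rho,O)=\langle \rho,\mathrm{Ad}_n(U^\dagger)(O)\rangle_{HS}$; replacing $U$ by $U^\dagger$ does not change the Haar measure, so we may average over $\mathrm{Ad}_n(U)(O)$ directly. Because $\mathrm{Ad}_n$ is unitary with respect to the Hilbert--Schmidt inner product, the isotypic decomposition $W_n=\bigoplus_k\lambda_k^{(n)}$ is orthogonal. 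This gives
\begin{equation}
f_U=\sum_{k=0}^n\langle P_k^{(n)}(\rho),\mathrm{Ad}_n(U^\dagger)P_k^{(n)}(O)\rangle,
\end{equation}
since $\mathrm{Ad}_n$ commutes with each $P_k^{(n)}$ (equivariance, \cref{eq:decomp_adjoint}).

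\textbf{First moment.} Averaging $\mathrm{Ad}_n(U)$ over the Haar measure gives the projector onto the trivial subrepresentation. By Schur's lemma, since $\varphi_n$ is irreducible, the trivial subrepresentation is $\mathrm{span}\{\Id\}=\lambda_0^{(n)}$, which has $d_0^{(n)}=1$. Hence $\mathbb{E}[\mathrm{Ad}_n(U)(O)]=\Tr[O]\,\Id/|\Phi_m^n|$, and pairing with $\rho$ (where $\Tr\rho=1$) gives \cref{eq:1st_Moment}. The same computation shows that the $k=0$ term of $f_U$ is constant and equals the mean. The variance is therefore the second moment of $\sum_{k\ge1}g_k(U)$, where $g_k(U)=\langle P_k\rho,\mathrm{Ad}_n(U^\dagger)P_kO\rangle$.

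\textbf{Second moment.} Expanding $\mathbb{E}[\sum_{k,l\ge1}g_k\overline{g_l}]$ expresses each term as a matrix coefficient of $\lambda_k\otimes\overline{\lambda_l}$. Here we may take everything real, because $\rho$ and $O$ are Hermitian and each $\lambda_k$ is closed under the adjoint (the irreps are self-conjugate as spaces of operators), so $f_U$ is real. The Schur orthogonality relations for matrix coefficients of unitary irreps,
\begin{equation}
\int dU\,\langle a,\pi_k(U)b\rangle\overline{\langle c,\pi_l(U)e\rangle}=\delta_{kl}\frac{\langle a,c\rangle\overline{\langle b,e\rangle}}{d_k^{(n)}},
\end{equation}
then kill the cross terms with $k\neq l$. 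On the diagonal, setting $a=c=P_k\rho$ and $b=e=P_kO$ gives $\|P_k\rho\|_2^2\|P_kO\|_2^2/d_k^{(n)}$, which yields \cref{eq:sec_m_exp}.

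\textbf{Main obstacle.} The delicate point is the multiplicity-freeness of \cref{eq:decomp_Wn}. Schur orthogonality in the form $\delta_{kl}$ requires that the $\lambda_k^{(n)}$ be pairwise inequivalent. If two of them were equivalent, cross terms would survive and the formula would acquire off-diagonal contributions. I would settle this by identifying $W_n\simeq\mathrm{Sym}^n(\mathbb{C}^m)\otimes\mathrm{Sym}^n(\mathbb{C}^m)^*$ and using the Pieri-type rule, which says the components are labelled by the highest weights $(k,0,\dots,0,-k)$ for $k=0,\dots,n$. These weights are distinct, so the components are inequivalent, and the Weyl dimension formula reproduces \cref{eq:irrep_dim}; this is consistent with the dimension check stated after \cref{eq:irrep_dim}. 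A secondary subtlety is that $\overline{\lambda_k}\cong\lambda_k$. This does not affect the argument, because we use orthogonality between $\pi_k$ and $\pi_l$ directly with complex conjugates, which is exactly the standard form of the relation.
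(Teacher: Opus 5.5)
Your proposal is correct and follows the same route as the paper: Schur's lemma applied to the multiplicity-free decomposition $W_n\simeq\bigoplus_k\lambda_k^{(n)}$ removes the cross terms in the second moment, and the $k=0$ term (the trivial irrep $\mathrm{span}\{\Id\}$) reproduces the squared mean, which is subtracted to obtain the variance. The paper itself defers the formal derivation to its companion reference; your explicit argument for the pairwise inequivalence of the $\lambda_k^{(n)}$, via the distinct highest weights $(k,0,\dots,0,-k)$, supplies the one ingredient the main text leaves implicit.
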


\begin{figure*}[t!]
    \centering
    \includegraphics[width=1.0\linewidth]{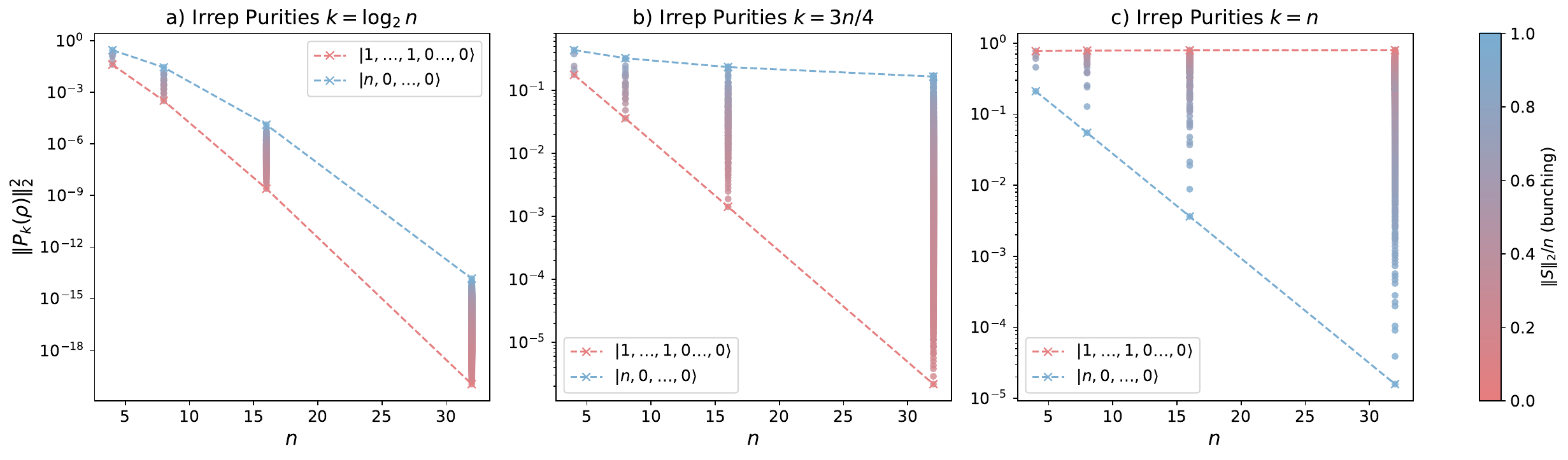}
    \caption{\justifying \textbf{Scaling of the Fock-state purities:} \textbf{a)} Fock-states purities for the irrep $k=\log_2 n$ ; \textbf{b)} Fock-states purities for the irrep $k = 3n/4$; \textbf{c)} Fock-states purities for the irrep $k = n$. The colour-scale corresponds to the $2$-norm of the occupation vector $(s_1, \cdots, s_m)$ for each Fock state.}
    \label{fig:Fock_States_3panels_scatter}
\end{figure*}

The first- and second-moment formulas were formally derived in Ref.~\cite[Proposition~2]{mhiri_boson_2026}. The variance follows by observing that the \(k=0\) contribution to the second moment is precisely the square of the first moment. Exponential concentration of \(f_U(\rho,O)\) occurs when the  contributions from different irreps in \cref{eq:sec_m_exp} are collectively exponentially suppressed. This behaviour is governed by the squared $2$-norms of the state and observable projections onto the irreps of \(W_n\), relative to the corresponding irrep dimensions. 

Henceforth, we refer to the quantities \(\|P_k^{(n)}(X)\|_2^2\) as the \emph{irrep purities} of \(X\). For  states, these purities quantify a generalized notion of entanglement, whereas for observables they quantify a generalized notion of locality, as discussed in Refs.~\cite{larocca_diagnosing_2022,diaz2023showcasingbarrenplateautheory}.
These interpretations are further developed  in \cref{subsec:state_concentration_effects} and \cref{subsec:Obs_Concentration_effects}.

While the subspaces $\lambda_k^{(n)}$ correspond to irreps of $U(m)$ and an orthonormal basis can be obtained through Clebsh-Gordan coefficients \cite{arienzo_bosonic_2025} and used to compute the irrep-purities, this method fails to scale and give closed form analytical expressions enabling to study the scaling of these quantities. In~\cite{mhiri_boson_2026}, we showed that these norms can be expressed as a linear combination of partial trace of the operator in question over particles.

\begin{theorem}[Informal, adapted from~\cite{mhiri_boson_2026}]\label{thm:Projection_Informal}
 Let $O \in W_n$ be an Hermitian operator acting on the $n$-particle Fock sector, and let $P_k^{(n)}(O)$ denote its projection onto the irreducible component $\lambda_k^{(n)}$. Its 2-norm can be expressed as  \begin{equation}\label{eq:norm_proj}
     \left\|P_k^{(n)}(O)\right\|_2^2
    =   \sum_{l=0}^{k}
    b_l \Tr\!\left[\Tr_{n-l}[O]^2\right],
\end{equation}
where $b_l$ are real valued  coefficients $b_l$  and $\Tr_{n-l}[O]$ denotes the partial trace over $n-l$ particles which consists in annihilating one photon in each mode, repeated $n-l$ times, i.e.  
\begin{equation}
    \Tr_{n-l}[O] = \sum_{s_1, \dots, s_{n-l}=1}^m a_{s_{n-l}}  \dots a_{s_1} O a_{s_1}^\dagger \dots a_{s_{n-l}}^\dagger\, .
\end{equation}
\end{theorem}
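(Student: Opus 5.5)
We reduce the claim to the structure of the adjoint representation. The starting point is the identification $W_n = \mathcal F_n\otimes\mathcal F_n^* \cong \mathrm{Sym}^n(\mathbb C^m)\otimes \mathrm{Sym}^n(\mathbb C^m)^*$. The partial-trace maps $\Tr_{n-l}: W_n\to W_l$ are $U(m)$-equivariant, because the sum $\sum_s a_s\,\cdot\,a_s^\dagger$ is invariant under $a_s\mapsto \sum_t U_{st}a_t$. We also need the adjoint maps $\Tr_{n-l}^\dagger: W_l\to W_n$. These act as $X\mapsto \sum_{s} a_{s_1}^\dagger\cdots a_{s_{n-l}}^\dagger X a_{s_{n-l}}\cdots a_{s_1}$ restricted to $\mathcal F_n$, and they are also equivariant.

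\textbf{Step 1 (filtration).} Using the decomposition \cref{eq:decomp_Wn} for both $W_n$ and $W_l$, Schur's lemma shows that $\Tr_{n-l}$ maps $\lambda_k^{(n)}$ to $\lambda_k^{(l)}$ for $k\le l$, and kills $\lambda_k^{(n)}$ for $k>l$. Each $\lambda_k$ appears with multiplicity one, so on $\lambda_k^{(n)}$ the operator $\Tr_{n-l}^\dagger\Tr_{n-l}$ is a scalar $c_{l,k}\ge 0$. Consequently
\[
\Tr\!\left[\Tr_{n-l}[O]^2\right]=\left\|\Tr_{n-l}[O]\right\|_2^2=\sum_{k=0}^{l} c_{l,k}\,\left\|P_k^{(n)}(O)\right\|_2^2 ,
\]
where we use that $O$ is Hermitian and that the projections are orthogonal.

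\textbf{Step 2 (inversion).} Collect these identities for $l=0,\dots,n$ into a lower-triangular linear system relating the vector $\bigl(\Tr[\Tr_{n-l}[O]^2]\bigr)_l$ to the vector $\bigl(\|P_k^{(n)}(O)\|_2^2\bigr)_k$. If the diagonal entries $c_{k,k}$ are nonzero, the system can be inverted. The inverse is again lower triangular, which gives $\|P_k^{(n)}(O)\|_2^2=\sum_{l\le k} b_l\Tr[\Tr_{n-l}[O]^2]$ with coefficients $b_l=(C^{-1})_{k,l}$. These coefficients depend on $k,n,m$ but not on $O$, which is exactly the claimed form.

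\textbf{Step 3 (nonvanishing and explicit constants).} Nonvanishing of $c_{k,k}$ is equivalent to $\Tr_{n-k}$ being injective on $\lambda_k^{(n)}$. We check this on a highest-weight vector of $\lambda_k^{(n)}$, which can be taken to be $(a_1^\dagger)^{k}(a_1^\dagger)^{n-k}\ldots$. More concretely, take the operator $(a_1^\dagger)^k(a_2)^k$ dressed by the number operator's $(n-k)$ background, i.e. the normally ordered monomial $(a_1^\dagger)^k a_2^k$ restricted to $\mathcal F_n$. Its partial trace is an explicit nonzero multiple of $(a_1^\dagger)^k a_2^k$ on $\mathcal F_k$. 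Computing $c_{l,k}$ explicitly is the main obstacle. The plan is to evaluate it on the same highest-weight vector, using the commutation relations to push annihilators through: each $\sum_s a_s X a_s^\dagger$ applied to a normally ordered operator of degree $k$ with $N$ particles gives a factor of the form $(N+m-1+\text{something depending on }k)$. We expect $c_{l,k}$ to factor as ratios of products like $\prod_j (n-j)(n+m-1+k-j)$, consistent with the hook-type dimension formula \cref{eq:irrep_dim}. If this direct computation becomes unwieldy, the fallback is to obtain $c_{l,k}$ as the eigenvalue of the Casimir-compatible operator $\Tr_{1}^\dagger\Tr_1$, computed recursively in $l$.
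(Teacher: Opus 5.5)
Your proposal is correct and follows essentially the route behind the paper's formal version (\cref{th:recursive_irrep_norm_app}, taken from \cite{mhiri_boson_2026}): adjointness of the lowering and raising maps plus Schur's lemma make $g_l=\Tr[\Tr_{n-l}[O]^2]$ a lower-triangular combination $\sum_{r\le l}c_{l,r}\|P_r^{(n)}(O)\|_2^2$, which is then inverted; the paper carries out this expansion explicitly for $l=n-1$ in \cref{eq:gnm1_expansion}. The one difference is how the scalars are obtained: the paper takes them as products of the one-step eigenvalues $\beta_{r,j}=(j-r+1)(m+r+j)$ of \cref{prop:CD_DC}, which gives $c_{k,k}=\alpha_{k,k+1,n}\neq 0$ and hence the explicit $b_l$, whereas your argument only establishes that the $b_l$ exist (your witness $(a_1^\dagger)^k a_2^k$, which lies in $\lambda_k^{(n)}$ by \cref{propop:monomial_single_irrep_support}, reproduces the same value of $c_{k,k}$).
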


A formal version of \cref{thm:Projection_Informal} is provided in \cref{app:2ndMoment_Calculus_Iterative_Projection} for completeness, specifying the expression the the coefficients $b_l$. By combining \cref{eq:sec_m_exp,eq:norm_proj}, one obtains a closed-form expression of the variance of a quantum model over Haar random interferometers.

A similar variance expression to the one in \cref{eq:sec_m_exp}  has been obtained for qubit systems in the context of studying barren plateaus under different circuit architectures, pointing to the fact that exponential concentration can be avoided solely from the contribution of low dimensional subspaces such as low dimensional irreps,  making these expectation values prone to efficient classical simulation~\cite{fontana_characterizing_2024, ragone_lie_2024}. In this work, we explore this phenomena in the bosonic setting by further evaluating the closed form expression of the irrep-projections $2-$norms derived in \cref{eq:norm_proj}, for several observables that are relevant in the context of variational quantum algorithms and quantum machine learning.

Concretely, in \cref{subsec:state_concentration_effects} we first examine how the "generalized entanglement" of initial Fock state determines whether exponential concentration occurs. Building on this, in \cref{subsec:Obs_Concentration_effects} we discuss the role of "operator locality" and give guidelines for choosing observables that avoid exponential concentration, together with propositions that help identify them. We then turn to the other side of the problem: classical simulability. In \cref{sec:CS}, we review classical algorithms for  exact and approximate simulation of passive linear optics and identify the regimes in which they are efficient. Finally, in \cref{sec:Concentration_Simulability}, we bring these questions together to map out the different regimes of concentration and simulability.

\subsection{State-dependent concentration effects}\label{subsec:state_concentration_effects}

Throughout this work, we consider Fock-state inputs with a fixed total photon number \(n\). Since Fock states are non-Gaussian, they constitute resource states for passive linear optics, which preserves Gaussianity and therefore cannot generate them from Gaussian inputs~\cite{weedbrook_gaussian_2012,walschaers_non-gaussian_2021}. Their evolution through passive interferometers can consequently give rise to nontrivial computational dynamics, with Fock-state Boson Sampling providing the paradigmatic example~\cite{aaronson_computational_2011}.

For the concentration analysis, the relevant state-dependent quantities are the irrep purities, which we also refer to as \(k\)-purities,
\begin{equation}
    \left\|  P_k^{(n)}\!\left(\ketbra{S}\right)
    \right\|_2^2, \quad \text{with} \quad  \sum_{k=0}^n \left\|  P_k^{(n)}\!\left(\ketbra{S}\right)
    \right\|_2^2 =1.
\end{equation}
As follows from \cref{thm:1st_2nd_Moment_Bosonic}, these quantities determine
the state-dependent contribution to the variance of the expectation values
defined in \cref{eq:linear_form_n}. Characterizing how the purity of the input
state is distributed across the irreducible sectors of \(W_n\) therefore
provides the state-side irrep profile against which the corresponding
observable profile must be compared.

This irrep-based description also suggests a connection with recent group-theoretic formulations of quantum resource theories~\cite{diaz_unified_2025,bermejo_characterizing_2025,barnum_generalizations_2003}.
In these frameworks, the resourcefulness of a state relative to a group of free operations, here \(U(m)\) representing passive linear-optical transformations, can be characterized through its distribution of irrep purities.
While generalized entanglement is conventionally defined relative to the Lie-algebra component, here we adopt a broader notion relative to each irrep, with higher sectors probing increasingly high-order correlations~\cite{barnum_generalizations_2003,barnum_subsystemindependent_2004,ragone_lie_2024}.
Accordingly, a state with an exponentially vanishing purity on a low dimensional irreps indicates strong generalized entanglement relative to that subspace. From this perspective, the extent to which the irrep-purity profile is supported on higher sectors, and the rate at which this weight decays with system size, provide natural signatures of the resource structure of the state.

To investigate the relation between irrep purities and generalized entanglement of Fock states, we first identify the invariances of their irrep-purity profiles. We then study how these profiles vary with photon bunching and scale with system size.

Irrep purities are invariant under passive linear-optical transformations, including permutations of the optical modes. Recall that the \(k\)-th irrep purity of an operator \(X\in W_n\) is the squared $2-$norm of its projection onto \(\lambda_k^{(n)}\).

\begin{lemma}{(Irrep purity invariance )}\label{propop:irrep_purity_permutation}
Let $V \in U(m)$ denote a passive linear-optical transformation, and $\ket{S} \in \Phi_m^n$ a Fock state. By definition of irreducible representations, we have:
\begin{equation}
    \left\| P_k^{(n)} \!\left(\varphi_n(V) \ketbra{S}\varphi_n(V)^\dagger \right) \right\|_2^2 = \left\| P_k^{(n)}\!\left(\ketbra{S}\right) \right\|_2^2 .
\end{equation}

\end{lemma}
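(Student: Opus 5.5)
The plan is to use two facts: the irrep projectors $P_k^{(n)}$ commute with the adjoint action of $U(m)$, and this action is unitary with respect to the Hilbert--Schmidt inner product on $W_n$. Write $\mathrm{Ad}_V(X) = \varphi_n(V) X \varphi_n(V)^\dagger$ for $X\in W_n$.

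First, I would check that $\mathrm{Ad}_V$ preserves the Hilbert--Schmidt norm. This follows from cyclicity of the trace and unitarity of $\varphi_n(V)$:
\begin{equation}
\Tr\!\left[\mathrm{Ad}_V(X)^\dagger \mathrm{Ad}_V(Y)\right]=\Tr\!\left[\varphi_n(V)X^\dagger Y\varphi_n(V)^\dagger\right]=\Tr[X^\dagger Y].
\end{equation}

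Second, I would argue that $P_k^{(n)}\circ \mathrm{Ad}_V=\mathrm{Ad}_V\circ P_k^{(n)}$ for every $k$. By \cref{eq:decomp_Wn,eq:decomp_adjoint}, each $\lambda_k^{(n)}$ is an $\mathrm{Ad}$-invariant subspace. Since $\mathrm{Ad}_V$ is unitary, the orthogonal complement $\bigoplus_{k'\neq k}\lambda_{k'}^{(n)}$ is also invariant; the decomposition is orthogonal because it is multiplicity-free, with distinct irreps labelled by $k$.

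Now decompose $X=\sum_{k'}P_{k'}^{(n)}(X)$ and apply $\mathrm{Ad}_V$. We get $\mathrm{Ad}_V(X)=\sum_{k'}\mathrm{Ad}_V(P_{k'}^{(n)}(X))$, where each summand lies in $\lambda_{k'}^{(n)}$. By uniqueness of the orthogonal decomposition, $P_k^{(n)}(\mathrm{Ad}_V(X))=\mathrm{Ad}_V(P_k^{(n)}(X))$.

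Combining the two facts,
\begin{equation}
\left\|P_k^{(n)}\!\left(\mathrm{Ad}_V(\ketbra{S})\right)\right\|_2^2=\left\|\mathrm{Ad}_V\!\left(P_k^{(n)}(\ketbra{S})\right)\right\|_2^2=\left\|P_k^{(n)}(\ketbra{S})\right\|_2^2,
\end{equation}
which is the claim. The argument never uses that the operator is a Fock projector, so it holds for any $X\in W_n$.

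The only point that needs care is the orthogonality of the irrep decomposition in the second step, that is, that $P_k^{(n)}$ is the orthogonal projector rather than merely some $U(m)$-equivariant projection. I would settle this by noting that the $\lambda_k^{(n)}$ are pairwise inequivalent (multiplicity-free, as the dimensions in \cref{eq:irrep_dim} already indicate). Schur's lemma then forces distinct isotypic components of a unitary representation to be mutually orthogonal, so the decomposition is canonical and the projector is the orthogonal one.
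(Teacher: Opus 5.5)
Your proposal is correct and matches the paper's argument: the paper gets the invariance from the equivariance of $P_k^{(n)}$ under the adjoint action of $U(m)$ together with the unitary invariance of the Hilbert--Schmidt norm, which are exactly your two steps. Your extra step establishing orthogonality is harmless but not needed for equivariance, since uniqueness of the direct-sum decomposition into invariant subspaces already gives $P_k^{(n)}\circ\mathrm{Ad}_V=\mathrm{Ad}_V\circ P_k^{(n)}$.
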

This invariance follows from the equivariance of \(P_k^{(n)}\) under the adjoint action of \(U(m)\) and the unitary invariance of the $2-$norm. Mode permutations form a special case of \cref{propop:irrep_purity_permutation}. Consequently, the irrep-purity profile of a Fock state depends only on its occupation partition $S$, rather than on the particular modes carrying the occupations. It is therefore sufficient to retain one representative from each equivalence class under mode permutations.

\begin{figure}[h!]
    \centering
    \includegraphics[width=1.0\linewidth]{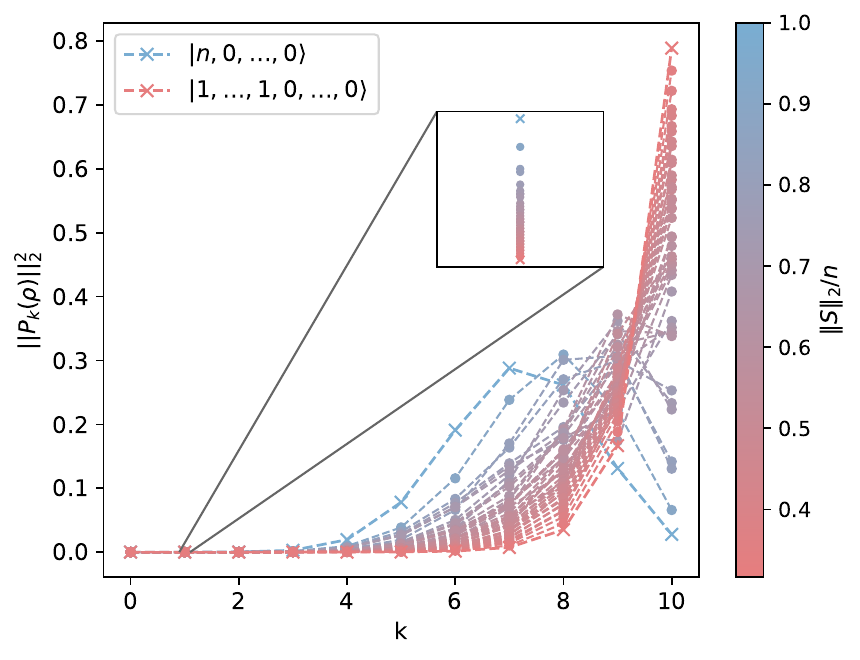}
    \caption{\justifying \textbf{Fock-State purity profile:} Fock state purity distribution for $n=10$ particles in $m=2n$ modes. We only consider the Fock states that are all different under any permutation of the occupation.}
    \label{fig:Fock_States_Purities_QRT}
\end{figure}

Beyond this permutation symmetry, we further observe that Fock states whose occupation vectors have the same \(2\)-norm exhibit identical irrep purities. Applying the general expression of \cref{thm:Projection_Informal} to a Fock basis state shows that its irrep purities are given by the following analytical expression.

\begin{lemma}{(Irrep purities  for Fock states, adapted from \cite{mhiri_boson_2026}).}\label{lemma:puR_fock}
    Consider a Fock state $S$ with a total number of $n$ photons, i.e $S \in \Phi_m^n$. For every $0\leq k \leq n$, its $k$-purity   defined in \cref{eq:norm_proj} can be expressed as 
    \begin{equation}
        \left\|P_k^{(n)}(\ketbra{S}{S})\right\|_2^2
    =    \sum_{l=0}^{k} ((n-l)!)^2 
    b_l \sum_{\substack{|B| = n-l\\0 \preccurlyeq B \preccurlyeq S}} \prod_{i=1}^m \binom{S_i}{B_i}^2\;.
    \end{equation}
    where $b_l$ are real valued coefficients depending on $m,n$ and the irrep index $k$.
\end{lemma}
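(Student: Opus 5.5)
The plan is to specialize \cref{thm:Projection_Informal} to $O=\ketbra{S}{S}$. It then suffices to compute $\Tr[\Tr_{n-l}[\ketbra{S}{S}]^2]$ for each $0\le l\le k$ and show that it equals $((n-l)!)^2\sum_{|B|=n-l,\,0\preccurlyeq B\preccurlyeq S}\prod_i\binom{S_i}{B_i}^2$. The coefficients $b_l$ are carried over unchanged, so the whole proof reduces to a combinatorial evaluation of iterated partial traces on a Fock projector.

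\textbf{Step 1 (action of annihilators).} Expand the partial trace as
\[
\Tr_{n-l}[\ketbra{S}{S}]=\sum_{s_1,\dots,s_{n-l}} a_{s_{n-l}}\cdots a_{s_1}\ketbra{S}{S}a^\dagger_{s_1}\cdots a^\dagger_{s_{n-l}}.
\]
A string of annihilators $a_{s_{n-l}}\cdots a_{s_1}$ removes $B_i$ photons from mode $i$, where $B$ is the multiplicity vector of the word $(s_1,\dots,s_{n-l})$, so $|B|=n-l$. It maps $\ket{S}$ to $\sqrt{\prod_i S_i!/(S_i-B_i)!}\,\ket{S-B}$, and this vanishes unless $B\preccurlyeq S$. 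The amplitude depends only on $B$, not on the ordering of the word, because the annihilators commute.

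\textbf{Step 2 (group the sum by $B$).} There are $(n-l)!/\prod_i B_i!$ words with a given multiplicity vector $B$. Hence
\[
\Tr_{n-l}[\ketbra{S}{S}]=\sum_{B\preccurlyeq S,\,|B|=n-l}\frac{(n-l)!}{\prod_i B_i!}\prod_i\frac{S_i!}{(S_i-B_i)!}\,\ketbra{S-B}{S-B}.
\]
The coefficient simplifies to $(n-l)!\prod_i\binom{S_i}{B_i}$. The operator is diagonal in the Fock basis, and distinct $B$ give distinct states $\ket{S-B}$.

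\textbf{Step 3 (square and trace).} Squaring this diagonal operator and taking the trace gives $((n-l)!)^2\sum_B\prod_i\binom{S_i}{B_i}^2$. Substituting into \cref{eq:norm_proj} yields the claim.

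The main obstacle is bookkeeping rather than substance. The normalization convention for $\Tr_{n-l}$ (no $1/(n-l)!$ factor in the statement of \cref{thm:Projection_Informal}) must match the formal version in \cref{app:2ndMoment_Calculus_Iterative_Projection}. Otherwise the $((n-l)!)^2$ prefactor would be absorbed differently into $b_l$, so I would check that convention first. The multinomial count of orderings is where a factor could slip, so I would verify it on a small case such as $S=(2,0)$ with $l=0$.
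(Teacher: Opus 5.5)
Your proof is correct and follows essentially the same route as the paper. You specialize the partial-trace expansion of the irrep norm to $O=\ketbra{S}{S}$ and evaluate the iterated lowering map, obtaining the coefficient $\frac{(n-l)!}{\prod_i B_i!}\prod_i \frac{S_i!}{(S_i-B_i)!}=(n-l)!\prod_i\binom{S_i}{B_i}$; the paper uses this same coefficient in its $k$-fold lowering-map formula for diagonal observables and in its explicit evaluation of $g_0$ and $g_1$. Your normalization worry is also settled: in \cref{th:recursive_irrep_norm_app}, $g_l$ is defined by $n-l$ plain applications of $\sum_s a_s(\cdot)a_s^\dagger$ with no $1/(n-l)!$ factor, so the $((n-l)!)^2$ prefactor appears exactly as stated, with $b_l=\frac{(-1)^{k-l}}{\alpha_{k,k+1,n}\,(k-l)!\,(k+l+m-1)_{k-l}}$.
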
 

Consequently, one can easily show that the purity in the first non-trivial irrep, is determined by the \(2\)-norm of its occupation vector:
\begin{equation}
    \label{eq:supportLOneFockState}
    \left\| P_1^{(n)}\!\left(\ketbra{S}\right) \right\|_2^2 = \binom{n+m}{n-1}^{-1}\left(\|S\|_2^2-\frac{n^2}{m}\right).
\end{equation}
This dependence is illustrated in the inset of \cref{fig:Fock_States_Purities_QRT} and a derivation is provided in \cref{subsec:First_Irrep_Purity}.

For $k\geq 2$, we evaluate numerically the irrep purities using the analytical expansion derived in \cref{lemma:puR_fock} and plot the complete profiles depicted in \cref{fig:Fock_States_Purities_QRT}. We observe that the same invariance with respect to the occupation-vector norm persists across the full irrep decomposition. Moreover, the resulting profiles, i.e., the list of all irrep purities of a state, reveal a clear
dependence on photon bunching: as photons are distributed among more modes,
the dominant weight shifts toward higher-dimensional irreps. This behaviour is consistent
with interpreting high-irrep weight as a signature of generalized
entanglement and with the distinguished computational role of collision-free
inputs in Boson Sampling
\cite{aaronson_computational_2011,aaronson_generalizing_2014}.

Having established these structural properties, we now examine how the
irrep purities decay with system size. Our numerical results are organized
around the two extremal occupation patterns:
\begin{equation}
    \ket{1,\ldots,1,0,\ldots,0},
    \qquad
    \ket{n,0,\ldots,0},
\end{equation}
corresponding, respectively, to minimal and maximal photon bunching. As
illustrated in \cref{fig:Fock_States_3panels_scatter}, these configurations
provide useful limiting cases for characterizing the decay of the
state-side irrep purities and, consequently, their contribution to the
variance.

For both extremal states, the irrep purities decay exponentially with \(n\)
away from their respective dominant sectors. Numerically, the dominant
sector is located at \(k=n\) for the minimally bunched state and near
\(k=3n/4\) for the maximally bunched state. As a representative
intermediate regime, \cref{fig:Fock_States_3panels_scatter}(a) shows that,
for \(k=\log n\), the purities of all Fock states considered decay
exponentially, with a decay rate that increases as photon bunching
decreases. Additional choices of \(k\) are examined in
\cref{app:Initial_State_Purities}.

At \(k=3n/4\), \cref{fig:Fock_States_3panels_scatter}(b) indicates at most
polynomial decay for the maximally bunched state, while progressively less
bunched configurations recover exponential suppression. Conversely, at
\(k=n\), the minimally bunched state exhibits polynomially decaying purity,
whereas increasingly bunched states exhibit exponential decay.

Thus, except near a state-dependent dominant sector, Fock-state irrep
purities are generically exponentially suppressed. To avoid exponential
concentration of observable expectation values, this suppression must be compensated by sufficiently large
observable irrep purities, even when the relevant irrep has only polynomial
dimension. Among the states examined numerically, the maximally bunched
state carries the largest weight on the lowest irreducible sectors. However, one should bear in mind that this state is the "less resourcefull" among Fock states.

\subsection{Obervable-dependent concentration effects}\label{subsec:Obs_Concentration_effects}

Now that we have characterised the decay rates of  Fock basis state irrep purities, a natural question arises:
\begin{center}
    \emph{Given an initial Fock state, which design choices for the observable can avoid exponential concentration?}
\end{center} 
 Motivated by \cref{eq:sec_m_exp}, we characterize observables at different
levels of irrep resolution. At a coarse level, two relevant quantities are the set of nontrivial irreps on which the observable has support,
\begin{equation}
    \operatorname{Supp}(O) = \left\{ k\in\{0, \ldots,n\} \,\middle|\, P_k^{(n)}(O)\neq 0 \right\},
\end{equation}
and its total squared $2$-norm $\norm{O}_2^2$ on these sectors.
 While the support and $2$-norm of an observable may seem like independent quantities in the sense that an observable can have a large support size and a small $2$-norm, and vice versa, they are in fact linked through the more fine-grained irrep projection norms, i.e.

\begin{equation}
    \norm{O}_2^2 = \sum_{k \in \rm Supp(O)}  \norm{P_k^{(n)}(O)}_2^2
\end{equation}

These projection norms define a generalized notion of locality in irrep space: the support identifies the contributing sectors, while the irrep purities quantify their weights. Consequently, generalized locality  characterizes how this support and its weights scale with system size.
 
To answer the above question about observables design choices, we start by enumerating the scenarios under which the expectation value of a particle-number preserving observable may avoid exponential decay. \\

\paragraph*{\textbf{Signal from ``small'' irreps.}} The first possibility is that a non-negligible contribution originates from non exponentially large irreps. The most direct realization occurs when the observable is supported only on a fixed number of polynomial-dimensional irreps. More generally, the observable may have broad support while its high-irrep projections are negligible; exponential concentration is nevertheless avoided whenever its polynomially large irrep projections have sufficient overlap with those of the initial Fock state. As we discussed in the previous section, Fock states irrep profile is more skewed towards large irreps. This implies that in order to have a substantial signal from small irreps the observable 2-norm needs to be large enough to compensate for the initial state contribution. \\

\paragraph*{\textbf{Signal from ``large'' irreps.}} A qualitatively different possibility is that a non-negligible contribution survives on exponentially large irreps. Because the irrep dimension is then exponentially large, the corresponding observable projection must be sufficiently large, after normalization by $\|O\|_\infty$, to compensate this dimensional suppression. This is not sufficient by itself: the initial Fock state must also carry non-negligible purity on the same irreps. A high-irrep contribution can therefore disappear for either of two reasons. First, the observable may have a small global $2-$norm, or insufficient irrep projection norm, so that $\|P_k^{(n)}(O)\|_2^2$ fails to compensate $d_k^{(n)}$. Second, even when the observable projection compensates the irrep dimension, the Fock-state purity $\|P_k^{(n)}(\rho)\|_2^2$ may be exponentially small on that irrep, resulting in a state--observable misalignment. By contrast, if the observable has a sufficiently large normalized projection on an exponentially large irrep and this projection aligns with a non-negligible component of the Fock state, then signal from large irrep  can be non vanishing. In this case, exponential concentration is avoided even after removing all contributions from small irreps. \\

\paragraph*{\textbf{Irrep support constructions.}}

As discussed above, the support of the observable is an important property both for characterizing the variance decay.

Since every particle-number-preserving observable restricted to $W_n$ can be expanded in normally ordered monomials, their irrep decomposition provides a natural building block for understanding the structure of general observables. We therefore consider
\begin{equation}
   [M_{\boldsymbol{p},\boldsymbol{q}}]_d=  a_{p_1}^\dagger \dots a_{p_d}^\dagger a_{q_1} \dots a_{q_d}
\end{equation}
for $\boldsymbol{p},\boldsymbol{q} \in \{1, \dots, m\}^k\}$ and $d \geq 0$,
which we refer to as \textit{degree-$d$ monomials}.
We further introduce the notation $[M_{\boldsymbol{p},\boldsymbol{q}}]_d^{(n)}$ denoting the projection of the monomial on the fixed-particle number sector $\mathcal{F}_n$.

We proceed to give generic construction guidelines of such monomials which will characterize their irrep support. Proofs of all   following propositions is provided in \cref{app:irrep_supp}.

 Our first result establishes that the operator degree alone imposes an upper bound on the number of irreducible sectors that can contribute.

\begin{proposition}\label{propop:degree_cutoff}
    A monomial $[{M}_{\bm p, \bm q}]_d^{(n)}$ of degree $d \leq n$ acting on $n$ particles has support on at most the first $d+1$ irreps of $W_n$, namely $\lambda_0^{(n)} ,\dots, \lambda_d^{(n)}$. 
\end{proposition}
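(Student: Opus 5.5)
Our plan is to characterize the irrep $\lambda_k^{(n)}$ through the partial-trace structure behind \cref{thm:Projection_Informal}, and to show that a degree-$d$ monomial is orthogonal to every $\lambda_k^{(n)}$ with $k>d$. The concrete model is the standard one: identify $\mathcal F_n=\mathrm{Sym}^n(\mathbb C^m)$ and write $W_n\simeq \mathrm{Sym}^n(\mathbb C^m)\otimes \mathrm{Sym}^n(\mathbb C^m)^*$. We then use the filtration
\begin{equation}
    \mathcal V_d^{(n)} = \mathrm{Span}\{[M_{\bm p,\bm q}]_{d'}^{(n)} \,:\, d'\le d\},
\end{equation}
the space of restrictions to $\mathcal F_n$ of all normally ordered monomials of degree at most $d$.

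The first step is to show that $\mathcal V_d^{(n)}$ is invariant under the adjoint action of $U(m)$. Under $\varphi(U)$ one has $a_i^\dagger\mapsto \sum_j U_{ji}a_j^\dagger$ and $a_i\mapsto\sum_j \bar U_{ji}a_j$. A normally ordered degree-$d'$ monomial is therefore mapped to a linear combination of normally ordered degree-$d'$ monomials, and $\mathcal V_d^{(n)}$ is a subrepresentation.

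The second step is to compute its dimension. On $\mathcal F_n$ one has the identity $N=\sum_i a_i^\dagger a_i = n$. Using the commutation relations, a degree-$d'$ monomial with $d'<d$ can be rewritten as a degree-$d$ monomial times a scalar factor. Concretely, $a^\dagger_{\bm p}a_{\bm q}(N-d')$ equals $\sum_i a^\dagger_{\bm p}a_i^\dagger a_i a_{\bm q}$, and on $\mathcal F_n$ the factor $(N-d')$ acts as the nonzero scalar $n-d'$. Hence $\mathcal V_d^{(n)}$ is spanned by degree-$d$ monomials alone. The map $\mathrm{Sym}^d(\mathbb C^m)\otimes\mathrm{Sym}^d(\mathbb C^m)^*\to W_n$ sending $\ket{\bm p}\bra{\bm q}$ to $a_{\bm p}^\dagger a_{\bm q}$ is $U(m)$-equivariant. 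By the known decomposition of $\mathrm{Sym}^d\otimes\mathrm{Sym}^{d*}$ (Pieri rule), its image is a sum of irreps of types $\lambda_0,\dots,\lambda_d$ only. These correspond to highest weights $(k,0,\dots,0,-k)$ with $k\le d$.

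The third step uses multiplicity-freeness. The decomposition in \cref{eq:decomp_Wn} contains each type $(k,0,\dots,0,-k)$ exactly once, and these are pairwise inequivalent. Any subrepresentation of $W_n$ isomorphic to a sum of the types $k\le d$ must therefore lie in $\lambda_0^{(n)}\oplus\dots\oplus\lambda_d^{(n)}$. Consequently $P_k^{(n)}([M_{\bm p,\bm q}]_d^{(n)})=0$ for $k>d$.

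An alternative route avoids highest weights and uses \cref{thm:Projection_Informal} directly. One would show that $\lambda_k^{(n)}$ is the part of $W_n$ that is annihilated by $\Tr_{n-k+1}$ and orthogonal to lower irreps. Then one would check that a degree-$d$ monomial has the form $\iota_{n-d}(X)$, an embedding of a $d$-particle operator. The adjoint of the partial trace is this particle-embedding map, which is the same idea.

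We expect the main obstacle to be the identification of the irrep types in the image. This requires either the Pieri decomposition of $\mathrm{Sym}^d\otimes\mathrm{Sym}^{d*}$ together with matching the labels $k$ in $\lambda_k^{(n)}$ to highest weights, or a careful adjoint-of-partial-trace argument. The rest of the argument is routine.
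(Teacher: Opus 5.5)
Your argument is correct. It uses the same central object as the paper's proof but establishes the conclusion differently.

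\textbf{How the paper argues.} The paper uses the identity $R_k\bigl([M_{\bm p,\bm q}]_d^{(k-1)}\bigr)=(k-d)\,[M_{\bm p,\bm q}]_d^{(k)}$, which is exactly your relation $a^\dagger_{\bm p}Na_{\bm q}=a^\dagger_{\bm p}a_{\bm q}(N-d)$. From it, it writes $[M_{\bm p,\bm q}]_d^{(n)}=\frac{1}{(n-d)!}R_n\cdots R_{d+1}\bigl([M_{\bm p,\bm q}]_d^{(d)}\bigr)$. It then invokes the imported structural fact (\cref{prop:lower_irreps}) that the raising maps carry $\lambda_r^{(d)}$ onto $\lambda_r^{(n)}$. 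Hence the image of $W_d=\bigoplus_{r\le d}\lambda_r^{(d)}$ lands in the lowest $d+1$ irreps.

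\textbf{How you differ.} Your normal-ordering map $W_d\to W_n$ is, with the convention fixed below, precisely $\frac{1}{(n-d)!}R_n\cdots R_{d+1}$. The difference is how you locate its image. You use the Pieri decomposition of $\mathrm{Sym}^d\otimes\mathrm{Sym}^{d*}$, multiplicity-freeness of $W_n$, and canonicity of isotypic components (Schur's lemma). The paper instead relies on label preservation by the raising maps.

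\textbf{What each route buys.} Your route is self-contained from standard representation theory. It also gives a stronger statement: all operators of degree at most $d$, restricted to $\mathcal F_n$, form a subrepresentation inside $\bigoplus_{k\le d}\lambda_k^{(n)}$. The paper's route avoids explicit highest-weight theory and produces the explicit preimage $[M_{\bm p,\bm q}]_d^{(d)}\in W_d$. That preimage is what the next two propositions manipulate with the lowering map.

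\textbf{Label matching.} The step you flag as the main obstacle is easy. The dimension $d_k^{(n)}$ in \cref{eq:irrep_dim} equals $\dim V_{(k,0,\dots,0,-k)}=\binom{k+m-1}{k}^2-\binom{k+m-2}{k-1}^2$. It is strictly increasing in $k$, so the irrep types are fixed by dimension.

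\textbf{Minor corrections.}
\begin{itemize}
\item The equivariant map must be defined on unnormalized vectors: $P(a^\dagger)\ket{0}\bra{0}Q(a^\dagger)^\dagger\mapsto P(a^\dagger)Q(a^\dagger)^\dagger$, restricted to $\mathcal F_n$. If $\ket{\bm p}$ is read as a normalized Fock state with occupation vector $\bm p$, the rule $\ket{\bm p}\bra{\bm q}\mapsto a^\dagger_{\bm p}a_{\bm q}$ differs from the equivariant map by the factor $\sqrt{\bm p!\,\bm q!}$. That rescaling is not equivariant.
\item Your filtration step is unnecessary for a single degree-$d$ monomial.
\item In your alternative route, ``annihilated by $\Tr_{n-k+1}$ and orthogonal to lower irreps'' describes $\bigoplus_{r\ge k}\lambda_r^{(n)}$, not $\lambda_k^{(n)}$. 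All you actually need is $\bigoplus_{r>d}\lambda_r^{(n)}\subseteq\ker\Tr_{n-d}$. This follows from Schur's lemma, since $\Tr_{n-d}$ is equivariant into $W_d$. Since $\Tr_{n-d}$ is adjoint, up to the constant $(n-d)!$, to the embedding, the required orthogonality follows at once.
\end{itemize}
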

In other words, a degree-$d$ observable can never populate irreducible sectors beyond $\lambda_d^{(n)}$. Thus, the complexity of its decomposition is controlled solely by its operator degree rather than by the dimension of the Hilbert space.

This proposition leaves open the question of whether this bound is typically saturated. The next result shows that it is indeed the case under the following conditions.

\begin{proposition}\label{propop:monomial_single_irrep_support}
     A monomial $[{M}_{\bm p, \bm q}]_d^{(n)}$ of degree $d\leq n$ acting on $n$ particles such that $p_i \neq p_j\, \forall \, 1 \leq i,j \leq d$ has only support on the irrep $\lambda_d^{(n)}$.
\end{proposition}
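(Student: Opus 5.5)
The plan is to characterise $\lambda_d^{(n)}$ inside the span of degree-$d$ monomials as the ``traceless'' part, and then check that $[M_{\bm p,\bm q}]_d$ is traceless. By \cref{propop:degree_cutoff}, $[M_{\bm p,\bm q}]_d^{(n)}$ already lies in $\lambda_0^{(n)}\oplus\cdots\oplus\lambda_d^{(n)}$. So it suffices to show that its projections onto $\lambda_k^{(n)}$ vanish for every $k<d$.

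\textbf{Step 1: an equivariant model for degree-$d$ operators.} The normal-ordering map
\begin{equation}
\Gamma_d:\ \mathrm{Sym}^d(\mathbb{C}^m)\otimes\mathrm{Sym}^d(\mathbb{C}^m)^*\to W_n,\qquad e_{\bm p}\otimes e_{\bm q}^*\mapsto [M_{\bm p,\bm q}]_d^{(n)},
\end{equation}
is $U(m)$-equivariant. This holds because $\varphi(U)a_i^\dagger\varphi(U)^\dagger=\sum_j U_{ji}a_j^\dagger$ and $\varphi(U)a_i\varphi(U)^\dagger=\sum_j \overline{U_{ji}}\,a_j$. The domain decomposes as $\bigoplus_{k=0}^d V_k$. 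Here $V_k$ is the image of the traceless part of $\mathrm{Sym}^k\otimes\mathrm{Sym}^{k*}$ under repeated multiplication by the invariant $\sum_s e_s\otimes e_s^*$. The top summand $V_d$ is exactly the kernel of the contraction map
\begin{equation}
C:\ \mathrm{Sym}^d\otimes\mathrm{Sym}^{d*}\to\mathrm{Sym}^{d-1}\otimes\mathrm{Sym}^{d-1*},
\end{equation}
and it is the irrep with highest weight $(1,0,\dots,0,-1)$ scaled by $d$, i.e.\ the one isomorphic to $\lambda_d^{(n)}$. The decomposition of $W_n$ in \cref{eq:decomp_Wn} is multiplicity free, and the dimensions match \cref{eq:irrep_dim}. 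Hence Schur's lemma forces $\Gamma_d(V_k)\subseteq\lambda_k^{(n)}$ for every $k$, and in particular $\Gamma_d(\ker C)\subseteq\lambda_d^{(n)}$.

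\textbf{Step 2: tracelessness of the monomial.} The contraction of the symmetrised tensor $e_{\bm p}\otimes e_{\bm q}^*$ is a sum of terms, each weighted by a Kronecker delta $\delta_{p_i,q_j}$. It therefore vanishes as soon as no creation index coincides with an annihilation index. I will read the hypothesis of \cref{propop:monomial_single_irrep_support} as this disjointness condition, $p_i\neq q_j$ for all $i,j$. Distinctness among the $p_i$ alone cannot suffice: $a_1^\dagger a_1$ has distinct creation indices, yet contains the number operator component, which lies in $\lambda_0^{(n)}$. Under the disjointness condition, $C(e_{\bm p}\otimes e_{\bm q}^*)=0$. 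Step 1 then gives $[M_{\bm p,\bm q}]_d^{(n)}\in\lambda_d^{(n)}$. Its image is nonzero whenever $d\le n$, since the monomial maps $\ket{S}$ with $S\succcurlyeq\bm q$ to a nonzero vector, so the support is exactly $\{d\}$.

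\textbf{Alternative route and main obstacle.} As a cross-check, one can instead use the partial-trace characterisation behind \cref{thm:Projection_Informal}. An operator has no weight on $\lambda_0^{(n)},\dots,\lambda_{d-1}^{(n)}$ iff $\Tr_{n-d+1}[X]=0$, because the adjoint of $\Tr_{n-l}$ is the symmetric embedding of $W_l$, whose image is $\bigoplus_{k\le l}\lambda_k^{(n)}$. One then verifies directly, by commuting each $a_s$ and $a_s^\dagger$ through the monomial, that every surviving term carries some $\delta_{p_i,q_j}$. The main obstacle is the identification in Step 1: showing that $\Gamma_d$ restricted to $\ker C$ lands in the sector labelled $k=d$ in the paper's convention rather than merely in some isomorphic copy. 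I would settle this by matching highest weights, which is legitimate because the decomposition of $W_n$ is multiplicity free. The highest-weight vector $a_1^{\dagger d}a_m^d$ is manifestly traceless and has weight $d(\epsilon_1-\epsilon_m)$.
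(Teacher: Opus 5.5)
Your proof is correct. Reading the hypothesis as disjointness ($p_i\neq q_j$ for all $i,j$) is the right choice. The appendix restatement requires every index of the creation and annihilation strings to differ, and the paper's proof closes by noting $\alpha_s(J)\beta_s(I)=0$ for every mode $s$, which is exactly disjointness. Your $a_1^\dagger a_1$ example shows that the literal main-text condition is insufficient.

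\textbf{The paper's route.} Both arguments rest on the same two facts: the kernel of contraction is the top irrep, and the contraction of a disjoint monomial vanishes. The paper stays inside the operator spaces and uses its ladder maps. It writes $[M_{I,J}]_d^{(n)}=\frac{1}{(n-d)!}R_n\cdots R_{d+1}([M_{I,J}]_d^{(d)})$, using $R_k([M_{I,J}]_d^{(k-1)})=(k-d)[M_{I,J}]_d^{(k)}$. It then imports two results from the companion work: $\mathrm{Ker}(L_d)=\lambda_d^{(d)}$, and the transport $\lambda_d^{(n)}=R_n\cdots R_{d+1}(\lambda_d^{(d)})$. Finally it evaluates $L_d$ on the monomial through the commutator identity. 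Only $\sum_s\alpha_s(J)\beta_s(I)(a^\dagger)^{J-e_s}a^{I-e_s}$ survives, and this is zero under disjointness.

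\textbf{Your route.} You use a single equivariant normal-ordering map from the abstract bidegree-$(d,d)$ tensor space directly into $W_n$. You combine it with the classical harmonic decomposition of the source, and fix the labels with Schur's lemma, multiplicity-freeness and highest-weight matching. The contraction then reduces to a trivial Laplacian computation on $x^{\alpha(\mathbf p)}y^{\alpha(\mathbf q)}$.

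\textbf{What each buys.} Your version avoids the ladder-operator identities and the commutator bookkeeping. It reproves the degree-cutoff proposition as a by-product. It also gives a more general statement: the $\lambda_k^{(n)}$-component of any normally ordered degree-$d$ operator is the image of the $V_k$-component of its coefficient tensor. Hence any traceless coefficient tensor yields pure $\lambda_d^{(n)}$ support. In addition, you show the support is exactly $\{d\}$, which the paper leaves implicit. The cost is reliance on external representation theory, namely the harmonic decomposition and the highest-weight identification of $\lambda_k^{(n)}$. The paper's argument is self-contained within its iterative-projection framework, and its explicit lowering-map formula is reused directly in the consecutive-support proposition.
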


Monomials with distinct creation and annihilation mode indices therefore occupy the entire range of irreducible sectors allowed by \cref{propop:degree_cutoff}. Consequently, additional structure is required to further reduce the irrep support. One important source of such additional structure is the overlap between the creation and annihilation indices. As the number of common indices increases, destructive cancellations progressively eliminate the highest irreducible sectors.
\begin{proposition}\label{propop:MonomialLastIrrepSupport}
    A monomial $[{M}_{\bm p, \bm q}]_d^{(n)}$ of degree $d\leq n$ acting on $n$ particles such that $|\p \cap \q| = k$, i.e., $\bm p$ and $\bm q$ have exactly $k-1$ common indices, has support only on the $K$ consecutive irreps ending at $\lambda_d^{(n)}$, namely $\lambda_{d-k+1}^{(n)},\dots,\lambda_d^{(n)}$.
\end{proposition}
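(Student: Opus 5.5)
The plan is to prove the containment $\operatorname{Supp}([M_{\bm p,\bm q}]_d^{(n)})\subseteq\{d-j,\dots,d\}$ using weights of the maximal torus of $U(m)$. Here $j$ is the number of indices shared by $\bm p$ and $\bm q$, counted as a multiset intersection, so the intended range reads $\lambda_{d-k+1}^{(n)},\dots,\lambda_d^{(n)}$ in the statement's indexing. The upper end $k\le d$ is exactly \cref{propop:degree_cutoff}, which I assume. The remaining work is the lower cutoff: components on $\lambda_k^{(n)}$ with $k<d-j$ must vanish.

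\textbf{Step 1: the monomial is a torus weight vector.} Take the diagonal unitary $D=\mathrm{diag}(e^{i\theta_1},\dots,e^{i\theta_m})$. Then $\varphi_n(D)\,a_p^\dagger\,\varphi_n(D)^\dagger=e^{i\theta_p}a_p^\dagger$ and $\varphi_n(D)\,a_q\,\varphi_n(D)^\dagger=e^{-i\theta_q}a_q$. Hence $[M_{\bm p,\bm q}]_d^{(n)}$ is a weight vector for the adjoint action, with weight
\begin{equation}
\mu=\sum_{i=1}^d e_{p_i}-\sum_{i=1}^d e_{q_i}\in\mathbb Z^m .
\end{equation}
After cancelling the $j$ common indices, the positive part of $\mu$ has total size $\sum_r\max(\mu_r,0)=d-j$. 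The projections $P_k^{(n)}$ commute with the adjoint action of $U(m)$, and in particular with the torus. So each $P_k^{(n)}(M)$ is again a weight vector of weight $\mu$ inside $\lambda_k^{(n)}$, or it is zero.

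\textbf{Step 2: weights of $\lambda_k^{(n)}$.} The irrep $\lambda_k^{(n)}\subset \mathrm{Sym}^n(\mathbb C^m)\otimes\mathrm{Sym}^n(\mathbb C^m)^*$ has highest weight $(k,0,\dots,0,-k)$. This matches the dimension formula \cref{eq:irrep_dim} and is the standard decomposition. I would check it by noting that $\lambda_k^{(n)}$ sits inside $\mathrm{Sym}^k\otimes\mathrm{Sym}^{k*}$ with traces removed. Every weight $\nu$ of this irrep lies in the convex hull of the Weyl orbit of $(k,0,\dots,0,-k)$, and it differs from the highest weight by roots. Consequently $\sum_r\max(\nu_r,0)\le k$. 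A more concrete route is to observe that $\lambda_k^{(n)}$ is spanned by $U(m)$-translates of $(a_1^\dagger)^k a_m^k$ restricted to $\mathcal F_n$. Each translate is a combination of degree-$k$ monomials, whose weights have positive part at most $k$.

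\textbf{Step 3: conclude.} If $P_k^{(n)}(M)\neq0$, it is a nonzero weight vector of weight $\mu$ in $\lambda_k^{(n)}$. Step 2 then forces $d-j\le k$. Combining this with \cref{propop:degree_cutoff} gives $k\in\{d-j,\dots,d\}$, which is $j+1$ consecutive irreps ending at $\lambda_d^{(n)}$. Setting $j=0$ recovers \cref{propop:monomial_single_irrep_support}, which is a useful consistency check. I would also run the sanity check $a_1^\dagger a_1$: here $j=1$, and the support should be $\{0,1\}$, since $\Tr$ is nonzero and the traceless part is nonzero.

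\textbf{Main obstacle.} The delicate step is Step 2: justifying rigorously the identification of the highest weight of $\lambda_k^{(n)}$ and the bound on the positive part of its weights. I would first try the concrete spanning argument, since it only uses that $\lambda_k^{(n)}$ is generated from the degree-$k$ operator $(a_1^\dagger)^k a_m^k$. Verifying that this operator projects nontrivially onto $\lambda_k^{(n)}$ can be done with \cref{propop:monomial_single_irrep_support}.
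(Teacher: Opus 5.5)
Your proof is correct, and it takes a genuinely different route from the paper's.

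\textbf{The paper's route.} The paper stays inside its ladder-map framework. It writes $[M_{\bm p,\bm q}]_d^{(n)}$ as $\frac{1}{(n-d)!}R_n\cdots R_{d+1}$ applied to $[M_{\bm p,\bm q}]_d^{(d)}$. It then expands the latter as $\sum_r R_d\cdots R_{r+1}(H_r)$ with $H_r\in\lambda_r^{(r)}=\mathrm{Ker}(L_r)$. Next it observes that an $s$-fold lowering removes the components on the top $s$ irreps of $W_d$, while mapping the remaining components injectively via the nonzero eigenvalues of \cref{prop:CD_DC}. Finally it uses the explicit contraction formula $L([M_{I,J}]_d^{(d)})=\sum_s\alpha_s(J)\beta_s(I)\,[M_{J-e_s,I-e_s}]_{d-1}^{(d-1)}$. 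This shows that iterated lowering annihilates the monomial once every common index has been contracted and one further step is taken.

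\textbf{Your route and what it buys.} You replace all of this with torus equivariance. The monomial has weight $\alpha(\bm p)-\alpha(\bm q)$, the projectors $P_k^{(n)}$ preserve weights, and $\lambda_k^{(n)}$ carries only weights whose positive part is at most $k$. This gains several things:
\begin{itemize}
\item It avoids the contraction bookkeeping entirely.
\item It makes transparent that common indices must be counted with multiplicity, and that $j$ of them confine the support to $j+1$ consecutive irreps. The paper's proof text says $K$ lowerings kill a monomial with $K$ overlapping modes; this is off by one, as $a_1^\dagger a_1$ shows. Your reading of the statement is the consistent one.
\item It applies verbatim to any torus weight vector, for instance any combination of monomials with the same net transfer $\alpha(\bm p)-\alpha(\bm q)$.
\end{itemize}

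\textbf{What it costs.} You need extra input on the weights of $\lambda_k^{(n)}$. As in the paper, \cref{propop:degree_cutoff} is still required for the upper end, since weights alone cannot exclude the higher irreps.

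\textbf{Step 2.} After your convex-hull remark, one line is missing. The map $\nu\mapsto\sum_r\max(\nu_r,0)$ is convex and permutation invariant, so on the hull it is maximized at the Weyl images of $(k,0,\dots,0,-k)$, where it equals $k$.

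You can also make this step entirely internal to the paper. By \cref{prop:lower_irreps}, $\bigoplus_{r\le k}\lambda_r^{(n)}=R_n\cdots R_{k+1}(W_k)$. By the computation in the proof of \cref{propop:degree_cutoff}, this space is spanned by the restrictions to $\mathcal F_n$ of degree-$k$ monomials, and all of their weights have positive part at most $k$. This also removes your concrete route's reliance on \cref{propop:monomial_single_irrep_support}. Its stated hypothesis does not literally cover $(a_1^\dagger)^k a_m^k$ for $k\ge 2$, although its proof does, since it only uses that no mode appears among both the creation and the annihilation indices.
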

This result quantifies how index overlap compresses the irrep decomposition. In the extreme case where all indices coincide, the monomial has support on all irreps up to $\lambda_d^{(n)}$ , whereas monomials with no repeated indices are fully supported on the largest irrep $\lambda_d^{(n)}$.

\medskip

\paragraph*{\textbf{Observable normalization and shot noise.}}
To ensure that avoiding concentration reflects a genuine improvement in trainability, we restrict to observables with bounded operator norm, $\lVert O\rVert_\infty=\mathcal{O}(1)$. Indeed, estimating $\langle O\rangle_\rho$ from $N_{\mathrm{shots}}$ projective measurements gives
\begin{equation}
  \operatorname{Var}(\widehat{\langle O\rangle}_\rho)
=\frac{\operatorname{Var}_\rho(O)}{N_{\mathrm{shots}}}
\leq \frac{\lVert O\rVert_\infty^2}{N_{\mathrm{shots}}},
\end{equation}
where $\widehat{\langle O\rangle}_\rho$ denotes the approximation of ${\langle O\rangle}_\rho$ obtained from sampling.
Without this normalization, one could artificially increase the variance across the loss landscape by rescaling $O\mapsto cO$. This, however, increases the shot-noise variance by the same factor $c^2$ and therefore provides no improvement in the number of measurements required to resolve the signal. Bounding $\lVert O\rVert_\infty$ rules out such trivial rescalings \cite{Aghaei_Saem_2026}.

\medskip

In \cref{sec:Concentration_Simulability}, we identify families of observables that do and do not concentrate, based on the principles and propositions developed above. Before doing so, motivated by the close connection between concentration and classical simulability highlighted in Refs.~\cite{ragone_lie_2024,cerezo_does_2025}, we first discuss the regimes in which passive linear-optical systems can be simulated classically. This will allow us to analyse concentration and simulability in parallel in \cref{sec:Concentration_Simulability} and identify the different characteristic regimes. Readers seeking an immediate overview may skip ahead to Table~\cref{tbl:BP_CS_table}.

\section{Classical Simulation Algorithms}\label{sec:CS}

In qubit and fermionic settings, the mechanisms responsible for the absence of barren plateaus have often been found to enable efficient classical simulation. Motivated by this connection, we briefly review the classical-estimation methods relevant to passive linear optics and relate them to the same properties that govern concentration. Although our main objective is to identify if/how it is possible to avoid exponential concentration and evade known simulation methods, classically estimable observables remain relevant to proposals in which a model is trained classically and subsequently deployed on quantum hardware.

The methods considered below exploit two broad types of structure: compression in irrep space and algebraic structure of linear-optical amplitudes. For our purposes, however, the most important distinction concerns the nature of their guarantees. Irrep truncation generally provides an average-case approximation over Haar-random interferometers, whereas exact irrep simulation and the algebraic methods discussed below apply to arbitrary interferometers.
\medskip

\paragraph*{\textbf{Representation-theoretic compression.}}

If the irrep support of the  observable is confined to efficiently representable irreps, the expectation value can be evaluated within this reduced space for every interferometer. This gives the following exact simulation regime.

\begin{simtech}[Exact $\mathfrak{g}$-sim, adapted
from~\cite{goh_liealgebraic_2023}]
\label{simtech:exact_gsim}
Consider a passive linear-optical model \(f_U(\rho,O)\) as defined in
\cref{eq:linear_form_n}. If the shared support of \(\rho\) and \(O\)
is restricted to irreducible components of total polynomial dimension,
and the corresponding projections can be constructed efficiently, then
\(f_U(\rho,O)\) can be computed in polynomial time for arbitrary
\(U\in U(m)\).
\end{simtech}

This criterion applies, in particular, to constant-degree observables,
whose support is restricted to the first few irreps by
\cref{propop:degree_cutoff}. The same construction also gives an
average-case approximation when the support is not exactly restricted
but the contribution of the large irreps is negligible. Define the
dimension-weighted tail
\begin{equation}
    \Delta_K(\rho,O)
    :=
    \sum_{k=K}^{n}
    \frac{
        \|P_k^{(n)}(\rho)\|_2^2
        \|P_k^{(n)}(O)\|_2^2
    }{
        d_k^{(n)}
    }.
    \label{eq:irrep_simulation_tail}
\end{equation}
Retaining only the first \(K\) irreps yields the surrogate
\begin{equation}
    f_C(\rho,O)
    :=
    \sum_{k=0}^{K-1}
    \Tr\!\left[
        P_k^{(n)}(\rho)\,
        \varphi_n(U)^\dagger
        P_k^{(n)}(O)\,
        \varphi_n(U)
    \right],
    \label{eq:surrogate_mode_gsim}
\end{equation}
whose average squared error is exactly
\begin{equation}
    \e[U\sim U(m)]{|f_U-f_C|^2}
    =
    \Delta_K(\rho,O).
    \label{eq:approx_gsim_average}
\end{equation}

\begin{simtech}[Approximate $\mathfrak{g}$-simulation, adapted
from~\cite{goh_liealgebraic_2023}]
\label{simtech:approx_gsim}
If the first \(K\) irreps have polynomial dimension and
\(\Delta_K(\rho,O)=\mathcal O(e^{-n})\), then \(f_C\) approximates
\(f_U\) with exponentially small mean-squared error over Haar-random
interferometers.
\end{simtech}

Thus, the same second-moment analysis used to diagnose concentration also directly constructs an average-case classical surrogate. The exponentially small error required in \cref{simtech:approx_gsim} should, however, be viewed as a strong sufficient condition rather than a necessary requirement for useful simulation. For many applications, an inverse-polynomial additive accuracy may suffice, as considered for example in~\cite{Angrisani_2025}.

Simply replacing the exponential condition on $\Delta_K$ by an unspecified polynomial one is nevertheless too weak to meaningfully distinguish simulable landscapes. Indeed, the trivial surrogate $ f_{\mathrm{triv}}:= \mathbb E_{U\sim U(m)}[f_U]$ has mean-squared error $\operatorname{Var}_{U\sim U(m)}[f_U]$. Consequently, whenever the landscape variance itself decays polynomially, the trivial estimator already achieves a polynomially small mean-squared error while capturing none of the parameter-dependent signal.

A fairer criterion might be to compare the truncation error with the scale of the non-constant landscape. Assuming $\operatorname{Var}_{U}[f_U]>0$, we can define the signal-relative error
\begin{equation}
    \varepsilon_{\mathrm{rel}}^2(K)
    :=
    \frac{\epsilon_{\rm sim}(K)}
    {\operatorname{Var}_{U\sim U(m)}[f_U(\rho,O)]}.
    \label{eq:relative_gsim_error}
\end{equation}
where $\epsilon_{\rm sim}(K) \in \Theta(\Delta_K(\rho,O))$ is the approximation of an order $K$ $\mathfrak{g}$-simulation.
By \cref{eq:approx_gsim_average}, $\varepsilon_{\mathrm{rel}}(K)$ is precisely the root-mean-squared error of the truncated surrogate relative to the standard deviation of the full landscape. Hence, for a specified relative accuracy $\varepsilon$, approximate $\mathfrak g$-simulation is useful when an efficient cutoff $K$ satisfies
\begin{equation}
    \Delta_K(\rho,O)
    \leq
    \varepsilon^2
    \operatorname{Var}_{U\sim U(m)}[f_U(\rho,O)].
\end{equation}
Consequently, evading approximate $\mathfrak g$-simulation on the scale of the landscape signal requires the tail beyond every efficient truncation must remain a non-negligible fraction of the full centered variance.  \\

\paragraph*{\textbf{Worst-case algebraic methods.}}

Passive linear optics also admits algorithms that exploit the structure of its transition amplitudes rather than a truncation in irrep space for classical simulation. 
Unlike the average-case approximation of \cref{simtech:approx_gsim}, these methods apply to every fixed interferometer and therefore provide worst-case (i.e. arbitrary interferometer) guarantees. They may also cover a broader class of observables than the low-irrep families admitting exact evaluation via \cref{simtech:exact_gsim}, at the cost of estimating expectation values up to additive error. 

One such approach estimates expectation values with running time scaling with the norm and locality properties of the observable. The algorithm of Ref.~\cite{lim_classical_2025} applies to product inputs and observables that act nontrivially only on a subsystem \(A\) of modes. Its runtime is polynomial in the \(\|O_A\|_2^2 \times \Tr(\rho_A^2)\), where \(\rho_A\) is the reduced output state on \(A\).

\begin{simtech}[Estimation of expectation values on product states, adapted
from~\cite{lim_classical_2025}]
\label{simtech:Expected_value_LO}
For a product input and an observable \(O_A\otimes\mathbb I_B\), the
expectation value of an arbitrary linear-optical circuit admits an
efficient worst-case additive-error approximation whenever the relevant
observable-norm and reduced-purity factors grow at most polynomially.
\end{simtech}

\begin{figure*}[t!]
    \centering
    \includegraphics[width=1.0\linewidth]{plots/Concentration_Comparison.tex}
    \includegraphics[width=1.0\linewidth]{plots/Concentration_Comparison_legend.tex}
    \caption{\justifying \textbf{Passive linear optical model concentration:}  Expectation value variance of the different observable of interest presented in \cref{subsec:Analytical_Study_Operators}. For the Fock-state projector case, we consider a projection over the Fock state $\ket{S}$ such that $\occ{S}=p$. A comparison between the different scaling and simulation method is presented in \cref{tbl:BP_CS_table} and discussed in \cref{sec:con}.}
    \label{fig:Concentration_Comparison}
\end{figure*}

Unlike approximate \(\mathfrak g\)-sim, this result does not rely on Haar averaging. It may therefore provide a worst-case approximation for observables whose small norm already makes them easy on average. A second class of algorithms applies when the quantity of interest can be reduced to a linear-optical transition amplitude.

\begin{simtech}[Linear-optical transition-amplitude approximation on product states, adapted from~\cite{lim_classical_2025}]
\label{simtech:Transition_Approx_LO}
The transition amplitude
\(\langle\phi|\varphi_n(U)|\psi\rangle\) between product states of the form $\ket{\psi} = \otimes_{i=1}^m \ket{\psi_i}$ and $\ket*{\phi} = \otimes_{i=1}^m \ket*{\phi_i}$ admits a polynomial time algorithm for computing
an additive-error approximation for arbitrary
\(U\in U(m)\).
\end{simtech}

This result allows to estimate classically the probability of reaching an arbitrary state from any input state. It extends the results from Gurvits' algorithm~\cite{gurvits_complexity_2005, aaronson_computational_2011}, limited to collision-free Fock states, and its generalization for any output Fock state~\cite{aaronson_linearoptical_2011}. This method allow to retrieve the expectation values of passive phase shifters, as 

\begin{align}
        \expval*{e^{i\bm\theta \cdot \hat{\bm {n}}}}
    &=
    \bra{\psi}
        \varphi_n(U)^\dagger
        e^{
        i\sum_{j=1}^{m}\theta_j\hat n_j}
        \varphi_n(U)
    \ket{\psi}.
    \label{eq:linear_statistic_characteristic_function}
\end{align}
The operator in the middle is a layer of phase shifters and hence remains passive linear optical, thus this can be estimated within inverse-polynomial additive-error efficiently. Parity and number phase measurements, which we detail in \cref{subsec:Analytical_Study_Operators} are a particular instance of this construction. This approximation algorithm is central to classical algorithms for \emph{binned-} or \emph{grouped-} probability distributions, via a Fourier transformation of \cref{eq:linear_statistic_characteristic_function} \cite{oh_quantuminspired_2024,seron_efficient_2024}.

We now consider the case where the observable used is defined by power of each creation and annihilation operator. We use the notation ${E}_{\bm p, \bm q}$ to contrast with $M_{\bm p, \bm q}$:
\begin{equation}\label{eq:monomial}
    {E}_{\bm p, \bm q} = \hat{a}_{1}^{p_1\dagger} \cdots \hat{a}_{m}^{p_m\dagger} \hat{a}_1^{q_1} \cdots \hat{a}_m^{q_m}
\end{equation}
is a monomial of degree $k$,
with $\bm p, \bm q \in \Phi_m^k$.

We call $\text{occ}(S)$ the number of non-zero entries of $S$ (corresponding to the number of occupied modes of $\ket{S}$) and 
\begin{equation}
    \loc{{M}_{\bm p, \bm q}} = \max(\occ{\bm p}, \occ{\bm q})
\end{equation}
the \emph{locality} of the monomial operator. 

\begin{simtech}[Direct Expansion, adapted form \cite{thomas_shedding_2025}]\label{simtech:direct_expansion}
    Let $n \in \mathbb{N}^\ast, 1 \leq k \leq n$ and $m \geq n$, with $n$ and $m$ the number of photons and modes respectively, and $k$ the degree of the considered monomial ${M}_{\bm p, \bm q}$. Then, for an occupation $\bm n \in \Phi_m^n$, one can compute classically $f_U(\ketbra{S}{S},{M}_{\bm p, \bm q})$ exactly if either of the following holds: 
    \begin{enumerate}
        \item $k = \mathcal{O}(1)$,
        \item $\occ S = \mathcal{O}(1)$,
        \item $\loc{{M}_{\bm p, \bm q}} = \mathcal{O}(1)$,
        \item $k = \mathcal{O}(\log n)$ and $\occ S = \mathcal{O}(\log n)$.
    \end{enumerate}
\end{simtech}

We discuss this technique in \cref{app:Simulation_Techniques}. This result exploits two methods for computing the matrix permanent, namely
Ryser's algorithm~\cite{ryser_combinatorial_1973}, whose running time is $O(n2^n)$ for an $n\times n$ matrix and Barvinok's~\cite{barvinok_two_1996}, whose running time is $O(n^{\rank (A)})$.

As an illustrative example, we show how the expected value of $\hat n_i^k$ acting on the $m$ mode Hilbert space, can be computed exactly. First, combinatorial analysis of commutation relations gives the normal form
\begin{equation}
    \hat n_i^k = \sum_{j=1}^k S(k,j) \hat a_i^{\dagger j}\hat a_i^j,
\end{equation}
where $S(k,j)$ are Stirling numbers of the second kind~\cite{blasiak_combinatorics_2005}. The operator $a_i^{\dagger j}\hat a_i^j$ is associated with exponent vectors $\bm p = \bm q = (0, \dots, j, 0\dots)$ with a $j$ at the $i$-th position. We show via condition (iii) of \cref{simtech:direct_expansion} that it requires computing the permanent of  rank-$1$ matrices, for which Barvinok's algorithm runs in polynomial time. Thus, $\expval*{\hat n_i^k}$ can be computed in polynomial time for any $k$.

\section{Concentration and Simulability Regimes}\label{sec:Concentration_Simulability}

\begin{table*}[t!]
\centering
\begin{tabular}{|c|c|c|c|c|c|c|c|c|c|c|c|c|}
    \hline
    \multirow{2}{*}{\diagbox[width=1.6cm, height=1.15cm]{\small$p$}{\small$O$}}
    & \multicolumn{2}{c|}{$\hat{n}_i^p$} 
    & \multicolumn{2}{c|}{$\hat{n}_{i_1} \dots \hat{n}_{i_p}$} 
    & \multicolumn{2}{c|}{$\frac{\prod \hat{a}^\dagger_{i_k} \hat{a}_{j_k} + \hat{a}^\dagger_{j_k} \hat{a}_{i_k}}{\sqrt{2}}$} 
    & \multicolumn{2}{c|}{$\displaystyle\prod_{i=1}^p (-1)^{\alpha_i \hat{n}_{i}^2}$} 
    & \multicolumn{2}{c|}{$\displaystyle\prod_{i=1}^p (-1)^{\hat{n}_{i}}$} 
    & \multicolumn{2}{c|}{$\; \ketbra{S}{S} \;$} 
    \\ \hhline{|~|------------|}

    % --- Sub-header row: cols 1-2 empty (covered by multirow), cols 3-14 alternate ---
    & $\mathbb{V}[f_U]$ & Sim
    & $\mathbb{V}[f_U]$ & Sim
    & $\mathbb{V}[f_U]$ & Sim
    & $\mathbb{V}[f_U]$ & Sim
    & $\mathbb{V}[f_U]$ & Sim
    & $\mathbb{V}[f_U]$ & Sim
    \\ \hline

    % --- Row 1: Theta(1) ---
    $\Theta(1)$
    & \multicolumn{2}{@{}c@{}|}{\diagfillcustom{}{EmphasisColorGreen}{EmphasisColorRed}{\st{BP}}{\ref{simtech:exact_gsim},\ref{simtech:direct_expansion}}} 
    & \multicolumn{2}{@{}c@{}|}{\diagfillcustom{}{EmphasisColorGreen}{EmphasisColorRed}{\st{BP}}{\ref{simtech:exact_gsim},\ref{simtech:direct_expansion}}}
    & \multicolumn{2}{@{}c@{}|}{\diagfillcustom{}{EmphasisColorGreen}{EmphasisColorRed}{\st{BP}}{\ref{simtech:exact_gsim},\ref{simtech:direct_expansion}}} 
    & \multicolumn{2}{@{}c@{}|}{\diagfillcustom{}{EmphasisColorGreen}{EmphasisColorRed}{\st{BP}}{\ref{simtech:Expected_value_LO}}} 
    & \multicolumn{2}{@{}c@{}|}{\diagfillcustom{}{EmphasisColorGreen}{EmphasisColorRed}{\st{BP}}{\ref{simtech:Expected_value_LO}}} 
    & \multicolumn{2}{@{}c@{}|}{\multirow{3}{*}{\solidcelllabelled{EmphasisColorRed}{BP}{\ref{simtech:Transition_Approx_LO}}}}
    \\ \hhline{-----------|~~|}

    % --- Row 2: Theta(log n) ---
    $\Theta(\log_2 n)$
    & \multicolumn{2}{@{}c@{}|}{\diagfillcustom{}{orange!30}{EmphasisColorRed}{unclear}{\ref{simtech:exact_gsim},\ref{simtech:approx_gsim},\ref{simtech:direct_expansion}}} 
    & \multicolumn{2}{@{}c@{}|}{\diagfillcustom{}{orange!30}{EmphasisColorRed}{unclear}{\ref{simtech:exact_gsim},\ref{simtech:approx_gsim},\ref{simtech:direct_expansion}}} 
    & \multicolumn{2}{@{}c@{}|}{\diagfillcustom{}{orange!30}{EmphasisColorRed}{unclear}{\ref{simtech:exact_gsim},\ref{simtech:direct_expansion}}}
    & \multicolumn{2}{@{}c@{}|}{\diagfillcustom{}{EmphasisColorGreen}{EmphasisColorRed}{\st{BP}}{\ref{simtech:Expected_value_LO}}} 
    & \multicolumn{2}{@{}c@{}|}{\diagfillcustom{}{EmphasisColorGreen}{EmphasisColorRed}{\st{BP}}{\ref{simtech:Expected_value_LO}}} 
    & \multicolumn{2}{@{}c@{}|}{}
    \\ \hhline{-----------|~~|}

    % --- Row 3: Theta(sqrt(n)) ---
    $\Theta(\sqrt{n})$
    & \multicolumn{2}{@{}c@{}|}{\diagfillcustom{}{EmphasisColorRed}{EmphasisColorRed}{BP}{\ref{simtech:approx_gsim},\ref{simtech:direct_expansion}$^\dagger$}}
    & \multicolumn{2}{@{}c@{}|}{\diagfillcustom{}{EmphasisColorRed}{EmphasisColorRed}{BP}{\ref{simtech:approx_gsim},\ref{simtech:direct_expansion}$^\dagger$}}
    & \multicolumn{2}{@{}c@{}|}{\diagfillcustom{}{EmphasisColorRed}{EmphasisColorRed}{BP}{\ref{simtech:approx_gsim}}}
    & \multicolumn{2}{@{}c@{}|}{\diagfillcustomhighlighted{}{EmphasisColorGreen}{EmphasisColorPurple}{\st{BP}}{}}
    & \multicolumn{2}{@{}c@{}|}{\diagfillcustom{}{EmphasisColorGreen}{EmphasisColorRed}{\st{BP}}{\ref{simtech:Expected_value_LO}}} 
    & \multicolumn{2}{@{}c@{}|}{}
    \\ \hline
\end{tabular}
\begin{tikzpicture}[
    font=\small,
    sq/.style={draw=black, line width=0.4pt, minimum size=1.8ex, inner sep=0pt}
]
  \node[sq, fill=EmphasisColorRed] (r) at (0,0) {};
  \node[anchor=west] (rl) at ([xshift=1mm]r.east) {BP and/or classical simulation};

  \node[sq, fill=EmphasisColorPurple, anchor=west] (p) at ([xshift=7mm]rl.east) {};
  \node[anchor=west] (pl) at ([xshift=1mm]p.east) {Poly small tail contribution with unknown simulation algorithm};

  \node[sq, fill=EmphasisColorGreen, anchor=west] (g) at ([xshift=7mm]pl.east) {};
  \node[anchor=west] (gl) at ([xshift=1mm]g.east) {No BP};
\end{tikzpicture}
\caption{\justifying \textbf{Exponential concentration and classical simulation for different observables in passive linear optics.} The different regimes are described by $p$, as presented in \cref{subsec:Analytical_Study_Operators}. We normalize observables by their infinity norm so as to make them efficient to estimate on a quantum computer. We highlight in orange box a regime where a non-linear number-phase operator avoids barren plateau with a non-negligible signal from large irreps and no clear worst-case classical simulation, when $\alpha_i = 1/r_i$ with $r_i$ denotes the $i$-th prime number. The symbol $\dag$ indicates a dependency on the input state for classical simulation, see \cref{simtech:direct_expansion}.}
\label{tbl:BP_CS_table}
\end{table*}

In this section, we examine some concrete families of particle-number preserving bosonic observables  commonly used in the literature or motivated by the present study, which will be the focus of our classical simulability and concentration analysis.
In \cref{subsec:Analytical_Study_Operators}, for each family of considered observables, we first comment on its irrep support and relate it to the design principles introduced in \cref{subsec:Obs_Concentration_effects}. Then, we identify the appropriate worst-case classical simulation technique  among the ones presented in \cref{sec:CS}, whenever applicable. Finally, we comment on how irrep purities of such observables can be evaluated efficiently. 

In \cref{sec:con},  we evaluate the variance expression from \cref{thm:1st_2nd_Moment_Bosonic} for this list of observables with Fock state initial inputs and classify its decay behaviour into $4$ main categories , which we summarize in the phase diagram in  \cref{fig:Introduction_Figure}. 
Precise classical simulability and concentration results are summarized in \cref{tbl:BP_CS_table}.

\subsection{Bosonic observable families }\label{subsec:Analytical_Study_Operators}

\medskip

\noindent \textbf{Product of photon number operators}~\cite{killoran_continuous-variable_2019, killoran_strawberry_2019, bromley_applications_2020, gan_fock_2022, steinbrecher_quantum_2019}: 
\begin{equation}
    \prod_{i=1}^p \hat{n}_{i} = \prod_{i=1}^p \hat{a}_i^\dagger \hat{a}_i \, .
\end{equation}
A single photon number operator $\hat{n}_i$ on mode $i$ describes a photon-count, and can be realised with photon-number-resolving (PNR) measurement, a detection apparatus that determines how many photons arrive in a given mode. A product on different modes defines photon-number correlations between these modes~\cite{laiho_measuring_2022}. This observable, of degree $p \leq n$, only has support on the first $p$ irreps according to \cref{propop:degree_cutoff}, and can thus be simulated by exploiting its limited support~\cite{goh_lie-algebraic_2025} (see \cref{simtech:exact_gsim}) in polynomial time for constant $p$. In addition, existing techniques of simulation can be used when this observable can be written as a constant number of monomials (see \cref{simtech:direct_expansion}). Finally, we provide a more detailed analysis  of this observable, including how to derive its projection over the different irreps, and how to adapt the corresponding simulation techniques in \cref{app:Product_Number_Operators}. Broadly, this type of observables are diagonal in the Fock basis, hence their irrep purities will make use of the irrep purities derivations of Fock states. \\

\noindent \textbf{Single Support Observable}:
\begin{equation}\label{eq:obs_single_irrep}
    \prod_{k=1}^p \frac{\hat{a}_{i_k}^\dagger \hat{a}_{j_k} + \hat{a}_{j_k}^\dagger \hat{a}_{i_k}}{\sqrt{2}} \, ,
\end{equation}
with all the indexes $\{i_k, j_k\}_{k=1}^p$ different. This observable only has support on the last irrep $p \leq n$ (see \cref{propop:monomial_single_irrep_support}). It can be measured experimentally by using beam splitters between each pair of modes in $\{(i_k,j_k)\}_{k=1}^p$ and PNR measurement. Moreover, it can be simulated efficiently classically for $p = O(1)$ with respect to $n$ using its decomposition into a sum of monomial (see \cref{simtech:direct_expansion}) or its limited support on small polynomially large irrep (see \cref{simtech:exact_gsim} and \cref{propop:degree_cutoff}). A detailed study is presented in \cref{app:Single_Support_Obs}, where the projection over the single support irrep can simply be replaced by the $2$-norm of the observable. \\

\noindent \textbf{Number-phase Operators}:

We call $g_p(\{\hat{n}_i \}_{i \in I})$, a polynomial function of number operators of degree $p \leq n$, with $n$ the number of particles. We call \emph{Number-phase Operators}, operators of the form:
\begin{equation}
    O = (-1)^{g_p(\{\hat{n}_i \}_{i \in I})} = \sum_{R \in \Phi_m^n} (-1)^{g_p(\{R_i \}_{i \in I})} \ket{R}\bra{R} \,.
\end{equation}
The case where $p=1$ is commonly used in the context of generative modelling~\cite{kurkin_universality_2026, kolarovszki_generative_2026, gottlieb_efficient_2026}, inspired by the used of parity operator in qubit setting~\cite{rudolph_trainability_2024,recio-armengol_train_2026,herrero-gonzalez_born_2025}. A complete study of this operator can be found in \cref{app:Parity_Operators}. In particular, we remind in \cref{lem:Decomposition_Phase_Shifter} that the expectation value of such operators can be approximated within additive-error by decomposing it into a sum of  phase shifters for which a classical approximation algorithm exist~\cite{upreti_exponentially-improved_2026, jin-wei_superposition_1996}. It can thus be simulated when the number of terms is constant with respect to $n$. While this operator is not Hermitian in general, we  consider the associate Hermitian observable defined as $(O+ O^\dagger)/2$. This observable in also diagonal in the Fock basis by construction and hence irrep purities calculations follows from those of Fock states, as detailed in \cref{app:Parity_Operators}.\\

\noindent \textbf{Projector over a Fock state}~\cite{wang_quantum_2021, di_bartolo_time-series_2026, chabaud_quantum_2021}:
\begin{equation}
    \ketbra{S}{S}, \quad \text{with} \quad \text{occ}(S) = p \; ,
\end{equation}
with $S \in \Phi_m^n$ and $ \text{occ}(S)$ the number of occupation of the Fock state. It is well known that the projection over a Fock state, i.e., output probabilities of Boson Sampling, can be approximated classically efficienly ~\cite{gurvits_complexity_2005, aaronson_computational_2011,lim_classical_2025} (see \cref{simtech:Transition_Approx_LO}). As we showed in  \cref{subsec:state_concentration_effects} and according to \cref{propop:MonomialLastIrrepSupport}, this operator is supported by all irreps and its irrep purity decay is extensively studied in \cref{subsec:state_concentration_effects}.

\subsection{Concentration regimes and link to classical simulability }\label{sec:con}

In this section, we analyze and interpret the variance decay of the observable families introduced above, as summarized in \cref{tbl:BP_CS_table}. For each observable, by combining their irrep purities expansions derived in \cref{app:Product_Number_Operators}-\cref{app:Parity_Operators} with irrep purities of input Fock states established in \cref{lemma:puR_fock}, we numerically evaluate the  variance formula given in \cref{eq:sec_m_exp} as a function of system size to determine its scaling. Based on the resulting behaviour, we classify the observables into four categories according to which irreducible sectors yield non-exponentially suppressed contributions to the variance.

The absence of exponential variance decay also yields an average-case classical surrogate for the corresponding expectation-value model, as detailed in \cref{simtech:approx_gsim}. Whenever applicable, we compare this consequence with the worst-case simulation guarantees discussed in \cref{subsec:Analytical_Study_Operators} and further discuss when non of the aforementioned classical simulation techniques apply.

Let \(\widetilde O=O/\|O\|_\infty\) denote the observable of interest rescaled by its $\infty$-norm, and denote the contribution of the \(k\)-th irrep by
\begin{equation}
    \Gamma_k(\rho,O)
    =
    \frac{
        \left\|P_k^{(n)}(\rho)\right\|_2^2
        \left\|P_k^{(n)}(\widetilde O)\right\|_2^2
    }{
        d_k^{(n)}
    }.
    \label{eq:irrep_contribution_regimes}
\end{equation}
Throughout our analysis, we consider the two extremal Fock-state inputs introduced previously: the \emph{minimally bunched} Fock state, with one photon per occupied mode, and the \emph{maximally bunched} Fock state, with all photons occupying a single mode. Detailed variance scaling of the list of studied observables with respect to both minimally and maximally bunched input Fock states  are reported in \cref{app:Product_Number_Operators}-\cref{app:Parity_Operators}. Since the two inputs lead to the same qualitative scaling regimes for the observables considered, we do not distinguish between them in the summary table, although simulation techniques could exploit the input Fock state degree of bunching.

To identify the origin of a non-negligible signal, we introduce an irrep cutoff \(K\),  and define
\begin{equation}
    \mathcal V_{\leq K}
    =
    \sum_{k=1}^{K}\Gamma_k(\rho,O),
    \qquad
    \mathcal V_{>K}
    =
    \sum_{k=K+1}^{n}\Gamma_k(\rho,O).
    \label{eq:low_high_irrep_signal}
\end{equation}
For $K = \Theta(1)$, the cut-off separates (classically) efficiently accessible irreps, i.e., that of polynomial dimension, from the larger, exponential-dimensional ones. Henceforth, we refer to irreps of polynomial dimension by \emph{low} or \emph{small} irreps, and to irreps of at least super-polynomial dimension by \emph{large} irreps. 
In the Appendices corresponding to each observable family, we provide the corresponding numerical scaling of the full variance and of the truncated contributions for different choices of the irrep cut-off \(K\). To diagnose the observed decay regime, we compare polynomial and exponential fits and report their  \(R^2-\)score.  The resulting concentration and simulability regimes are summarized in \cref{tbl:BP_CS_table} and interpreted in the following sections.

Specifically, the concentration behavior can then be organized into the following four regimes.

\subsubsection{No concentration and exactly simulable}\label{subsec:NoConcentration_Exact_Sim}

The simplest way to avoid exponential concentration is to restrict the
observable exactly to polynomial-dimensional irreps. If
\begin{equation}
    \operatorname{Supp}(O)
    \subseteq
    \{0,\ldots,K\},
\end{equation}
then \(\mathcal V_{>K}=0\), and the second moment is entirely
determined by the small-irrep contribution. Provided that
\(\mathcal V_{\leq K}\) is not exponentially small, the model avoids
exponential concentration.

As depicted in \cref{tbl:BP_CS_table} (first row, first three columns), this regime includes constant-degree photon number product observables, whose irrep support
is restricted by \cref{propop:degree_cutoff}, as well as
observables explicitly constructed within a fixed low irrep, introduced in \cref{eq:obs_single_irrep}.

Here we note that while the dimension of the irreps for $p= \Theta(1)$ is at most polynomial in system size, the irrep purities of Fock states are nevertheless exponentially decaying. However, the non decay of the term $\Gamma_p(\rho,O)$ is due to the large 2-norm of these observables (up to normalization). The detailed variance scaling analysis is summarized in \cref{tbl:Decay_Product_Number_Operator} and \cref{tbl:Decay_Single_Irrep_Observable} for the photon number product observables and single-irrep observable respectively.

The same restriction that protects the model from concentration also enables exact \(\mathfrak g\)-simulation through \cref{simtech:exact_gsim}. This therefore realizes the familiar scenario in which trainability is obtained by confining the model to a small classically accessible subspace. Other simulation methods such as \cref{simtech:direct_expansion} can also be applied in this context.

We note that for the same class of observables (first three columns) with $p = \rm log(n)$, the variance decay scaling are not conclusive since the $R^2$ score for the polynomial and exponential fits were close.

\subsubsection{No concentration due to signal from "small" irreps}\label{subsec:NoConcentration_SmallIrreps}

An observable may have support on large irreps while receiving a non-negligible contribution only from the small ones. In this case,
\begin{equation}
    \mathcal V_{\leq K}
    \notin \mathcal O(\exp(-n)),
    \qquad
    \mathcal V_{>K}
    \in \mathcal O(\exp(-n)).
\end{equation}
The model avoids exponential concentration because of its low-irrep
signal, even though the observable is not exactly supported on that
sector.

This regime cannot be diagnosed from the support or total
Hilbert--Schmidt norm alone. An observable may have broad support and
a large \(2\)-norm while its irrep-purities on 
large irreps remain negligible. The relevant quantity is instead the
fine-grained tail \(\mathcal V_{>K}\). When this tail decays
exponentially, the model admits the average-case irrep surrogate of
\cref{simtech:approx_gsim}. Thus, broad irrep support does not by
itself prevent classical approximation: the model must carry a
substantial signal on the large irreps, rather than merely having
nonzero projections onto them.

\subsubsection{Exponential concentration}\label{subsec:Exp_Concentration}

Exponential concentration occurs when neither the low- nor the
high-irrep sectors provide a non-negligible contribution,
\begin{equation}
    \mathcal V_{\leq K}
    +
    \mathcal V_{>K}
    \in
    \mathcal O(\exp(-n)).
\end{equation}
This suppression can arise through two complementary mechanisms.

First, the normalized observable projections may be too small to
compensate the irrep dimensions. This is particularly clear for Fock
projectors, whose Hilbert--Schmidt norm is equal to one. Their
irrep-purities therefore cannot compensate the dimensions of the large
irreps and  are exponentially suppressed for low dimensional irreps as discussed in \cref{subsec:state_concentration_effects} ,  leading to an overall exponential concentration.

Second, even when
\(\|P_k^{(n)}(\widetilde O)\|_2^2\) compensates
\(d_k^{(n)}\), the state purity
\(\|P_k^{(n)}(\rho)\|_2^2\) may decay exponentially faster than the ratio $\|P_k^{(n)}(\widetilde O)\|_2^2 / d_k^{(n)}$. This is reminiscent of  a misalignment between the state and observable
profiles. Fock-state inputs place their dominant purity on relatively
large irreps, with the collision-free state shifted furthest toward
the end of the decomposition. Consequently, observables whose weight
is concentrated on substantially smaller irreps or simply not properly aligned with maximal state weight sector can exhibit
exponential concentration even when their total \(2\)-norm is large.

This mechanism appears, for instance, for products of photon-number
operators when their degree grows with the photon number. Their
irrep-purities need not be exponentially suppressed by
themselves, but their overlap with the Fock-state profile may still
decay exponentially. In this regime, the Haar mean already provides
an average-case additive surrogate. Depending on their degree,
locality, and the bunching of the input state, some of these
observables may additionally admit stronger worst-case
permanent-based approximations.

\subsubsection{No concentration and signal from "large" irreps}\label{subsec:NoConcentration_Signal_Large_Irreps}

The most interesting regime occurs when a non-negligible contribution
survives on irreps beyond every efficient low-irrep truncation:
\begin{equation}
    \mathcal V_{>K}
    \notin
    \mathcal O(\exp(-n)).
    \label{eq:substantial_high_irrep_signal}
\end{equation}

While the contribution term $\Gamma_k(\rho ,O)$ cannot avoid exponential concentration for $k>n/2$ by simple dimensionality arguments developed in \cref{app:pos}, signal from irrep indexes growing with $n$ can nevertheless escape exponential decay.

Specifically, since the dimensions of these irreps can grow exponentially, this
requires the normalized observable projections to compensate the
dimension factor while remaining aligned with non-negligible
components of the Fock-state profile. A large total \(2\)-norm is
therefore useful only if a sufficient fraction of it is distributed
over the appropriate high irreps.

Number-phase observables provide examples of this behavior. They
generally have broad irrep support and a large Hilbert--Schmidt norm,
allowing the variance to remain non-exponentially vanishing even after the
low-irrep contributions are removed.

Their absence of concentration therefore cannot be attributed simply to confinement within a small representation-theoretic subspace. However, for the minimally bunched input and  number-phase observable, our numerical fits indicate that the large-irrep tail is polynomially smaller than the full centered variance. Thus, for the cutoffs considered, approximate \(\mathfrak g\)-simulation would potentially provide a reasonably inverse-polynomial signal-relative approximation (as discussed in \cref{simtech:approx_gsim}). Whether this level of relative precision is sufficient for useful classical simulability remains open in practise.

Nevertheless, these average-case classical surrogate guarantees do not directly extend to worst-case approximation guarantees.
Specifically, 
the (non-)decay of high-irrep signal does not by itself characterize
classical hardness. While, the expectation value of standard parity observables (5th column in \cref{tbl:BP_CS_table}), consisting of products of
phase shifters, reduce to transition
amplitudes of passive linear-optical circuits and can be estimated using \cref{simtech:Transition_Approx_LO}, non-linear number-phase observables may instead be expanded into sums of linear phase shifts through a discrete Fourier decomposition, as discussed
in \cref{lem:Decomposition_Phase_Shifter}. The resulting classical
algorithm is efficient only when the number and total weight of the
Fourier components remain polynomially controlled.

In \cref{app:Parity_Operators}, we further argue that direct approximation methods such as importance sampling and truncation based on Fourier coefficients amplitudes would fail for such an observable whenever the initial state considered in the min bunching Fock states in the regime $m=cn$ where $c>1$ is a constant. Similarly all worst-case simulation techniques presented in \cref{sec:CS} do not apply. Hence, worst-case approximation for such an observable remains out of reach of currently available simulation methods.

\section{Discussion}

In this work, we presented an adaptation of a representation-theoretic framework to study the concentration of passive linear optical quantum models. We identified the main causes of exponential concentration, conditions for its absence, and examined how they relate to classical simulation techniques. While our study aims to be as exhaustive as possible, particularly regarding the review of simulation techniques, we believe that it could serve as the starting point for a broader discussion on the trainability and simulation of quantum models.

Recent works have shown that regimes used to prove the absence of barren plateaus often also admit classical simulation~\cite{cerezo_does_2025}. In the qubit setting, the variance can be related to the irrep-purities of the initial state and observable~\cite{fontana_characterizing_2024, ragone_lie_2024} relative to the adjoint representation of the Lie group corresponding to the circuit in question, and this connection has been attributed to a curse of dimensionality when loss functions compare objects in exponentially large spaces. Our results identify a possible route to circumventing this curse in bosonic systems (and maybe beyond). In the second-moment expression, the dimensional suppression of a large irrep can, in principle, be compensated by a bounded suitably aligned observable whose projection has an exponentially large squared $2$-norm. Non-concentration could therefore arise from signal retained in an exponentially large operator space, rather than from confinement to a classically accessible subspace. The iterative projection procedure adapted from~\cite{mhiri_boson_2026} makes this mechanism directly testable. This raises the question of whether analogous constructions exist in qubit settings, for example using Lie-algebraic ans\"atze~\cite{larocca_diagnosing_2022, kerenidis_quantum_2022, monbroussou_trainability_2025, wiersema_classification_2024, kokcu_fixed_2022}.

However, we have not obtained a definitive separation between non-concentration and classical simulability. For a minimally bunched input and nonlinear number-phase observable, our numerics indicate both an absence of exponential concentration and (potentially - the fits are inconclusive) inverse-polynomial contribution from high-dimensional irreps after removing lower sectors. Yet this high-irrep contribution is a large polynomial smaller than the full centered landscape variance, leaving open whether approximate $\mathfrak{g}$-simulation could reproduce the landscape to useful inverse-polynomial relative precision. The example therefore illustrates the proposed mechanism without providing a clear case that is both non-concentrated and beyond known efficient classical simulation methods.

More broadly, our framework provides a systematic pipeline: construct bounded physical observables with large high-irrep projections, align them with the input-state purity profile, quantify the signal remaining beyond every efficient irrep truncation relative to the full variance, and test the model against simulation methods exploiting additional algebraic structure. A more thorough search may reveal examples for which even signal-relative approximate $\mathfrak{g}$-simulation fails.

These observations raise the question of extending worst-case classical approximation methods to nonlinear number-phase observables and, more generally, to observables following the recipe described above. For every other observable studied here that admits a clearly efficient irrep truncation, we identified a corresponding worst-case simulation algorithm in the bosonic literature. Whether such an algorithm exists for the non-linear number-phase observable considered here remains unclear, as none of the worst-case techniques reviewed in this work applies directly to it.

Finally, in this work we considered only passive linear optical transformations. While this setting permits the use of the various representation-theoretic techniques presented here, extending the analysis to other frameworks such as adaptive \cite{rodari_beyond_2026, hoch_quantum_2025, monbroussou_toward_2025, monbroussou_photonic_2025} or non-linear \cite{spagnolo_non-linear_2023, scheel_measurement-induced_2003, costanzo_measurement-induced_2017} optics  would be a valuable direction for future work.

\medskip
\emph{Note added}. Upon completion of this work, we became aware of independent work by Makarovskiy et al. \cite{makarovskiy_trainability_2026} that studies the trainability of linear optical models, through the same second moment expression.\\

\section*{Acknowledgments}

The authors acknowledge Ulysse Chabaud, Verena Yacoub, Pierre-Emmanuel Emeriau, and Marco Cerezo for fruitful discussions. LM and EK acknowledge support from the EPSRC Quantum Advantage Pathfinder research (EP/X026167/1), and the Hub for Quantum Computing via Integrated and Interconnected Implementations (QCI3, EP/Z53318X/1) programs within the UK’s National Quantum Computing Center. HM acknowledges support from the grant ANR-22-PETQ-0007. HT acknowledges support from the European Commission as part of the EIC accelerator program under the grant agreement 190188855 for SEPOQC project, the Horizon-CL4 program under the grant agreement 101135288 for EPIQUE project, and the the CIFRE grant N. 2023/1746. ZH acknowledges support from the Sandoz Family Foundation-Monique de Meuron program for Academic Promotion.\\

\paragraph*{Code Availability.} The code, and the data used in this article are available on the following \href{https://github.com/quantumsoftwarelab/Classical-simulation-and-model-concentration-in-passive-linear-optics}{repository}.

\bibliographystyle{apsrev4-2}
%\bibliography{references}
\bibliography{references}

\clearpage

\appendix

\onecolumngrid
\newpage
% \section*{Appendix}

% \begingroup
% \makeatletter
% \par 
% \newcommand{\contentsname}{Appendix Contents}
% \@starttoc{atoc}
% \makeatother
% \endgroup

% ##### start TOC stuff #####
\DoToC
% ##### end TOC stuff #####
\section{Second Moment Calculus via the Iterative Projection Procedure}\label{app:2ndMoment_Calculus_Iterative_Projection}

In order to derive the observable and input state purities over the different irreducible representations, we use the iterative projection procedure developed in \cite{mhiri_boson_2026}. Methods such as the Clebsch--Gordan construction provides an explicit, basis-level realization of the irreducible decomposition, that is hard to realize on larger system size. The recursive projection method operates directly at the level of invariant maps, isolating irreducible contributions through the ladder structure without explicit computations of coefficients. This basis-independent formulation is particularly well suited for evaluating quantities such as Hilbert--Schmidt norms, i.e., irrep purities necessary to evaluate the second moment given in \cref{thm:1st_2nd_Moment_Bosonic}.

We introduce the \emph{lowering} and \emph{raising} maps, denoted respectively by $\Cm$ and $\Dm$,
\begin{align}
     \Cm (\cdot) & = \sum_{s=1}^m a_s (\cdot) a_s^\dagger \;,\label{eq:C_map}\\
     \Dm(\cdot)  & = \sum_{s=1}^m a_s^\dagger (\cdot) a_s.\label{eq:D_map}
\end{align}

These maps admit a simple interpretation in terms of photon number: when acting on operators supported on a fixed sector $W_n$, $\Cm$ decreases the photon number by one, while $\Dm$ increases it by one. They therefore provide a natural mechanism for relating operator spaces associated with different photon numbers. 

\begin{theorem}[Irrep norms, from Theorem 5 in \cite{mhiri_boson_2026}]\label{th:recursive_irrep_norm_app}
Let $O$ be an observable acting on the $n$-photon Fock sector, i.e., $O \in W_n$. 
For every $0 \le k \le n$, define
\begin{equation}\label{eq:g_k_th}
    g_k(O) = \Tr\!\left[\big(\Cm_{k+1 \rightarrow n}(O)\big)^2\right] := \Tr[\Tr_{n-k}[O]^2]\;,
\end{equation}
where $\Cm_{k+1 \rightarrow n}$ denotes the composition of lowering maps and corresponds equivalently to partial tracing $n-k$ particles. 
Let $O_k^{(n)}$ denote the component of $O$ supported on the irreducible subspace $\lambda_k^{(n)}$ (see \cref{eq:decomp_Wn}). 
Then the expression of Hilbert--Schmidt norm of $O_k^{(n)}$ is given by
\begin{equation}
   \Tr\left[(O_k^{(n)})^2\right] = \frac{1}{\alpha_{k,k+1,n}} \sum_{l=0}^k \frac{(-1)^{k-l}}{(k-l)! (k+l+ m-1)_{k-l}} g_l\;,
\end{equation}
where $(x)_p$ denotes the Pochhammer symbol defined as $(x)_p := x (x+1) \dots (x+p-1)$ and $\alpha_{k,k+1,n} = \frac{(n-k)! (m+n+k-1)!}{(m+2k-1)!}$.
\end{theorem}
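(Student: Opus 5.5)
The plan is to exploit equivariance of the lowering map $\Cm$ together with Schur's lemma. This turns each $g_l(O)$ into a triangular linear combination of the irrep norms $\Tr[(O_k^{(n)})^2]$, which we then invert.

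First, $\Cm : W_n \to W_{n-1}$ commutes with the adjoint action of $U(m)$, because $\sum_s a_s(\cdot)a_s^\dagger$ is invariant under $a_s \mapsto \sum_t U_{st} a_t$. Hence $\Cm$ sends $\lambda_k^{(n)}$ to the isomorphic copy $\lambda_k^{(n-1)}$ when $k \le n-1$, and annihilates $\lambda_n^{(n)}$, since $W_{n-1}$ has no copy of it. Next, with respect to the Hilbert--Schmidt inner product the adjoint of $\Cm$ is $\Dm$, as follows from cyclicity of the trace. Therefore $\Dm\Cm$ restricted to $W_n$ is an equivariant self-map of a multiplicity-free decomposition. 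By Schur's lemma it acts on $\lambda_k^{(n)}$ as a scalar $c_{k,n}\ge 0$, which gives
\begin{equation}
\|\Cm(O_k^{(n)})\|_2^2 = c_{k,n}\,\|O_k^{(n)}\|_2^2,
\end{equation}
and images of distinct irreps remain orthogonal.

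To compute $c_{k,n}$, I would evaluate $\Dm\Cm$ on a convenient representative of $\lambda_k^{(n)}$. A natural choice is the highest-weight vector $(a_1^\dagger)^k a_m^k\, \Pi_n$ restricted to $W_n$, or equivalently the normally ordered monomial $a_1^{\dagger k} a_m^{k}$ with its trace parts removed (cf.\ \cref{propop:monomial_single_irrep_support}). Alternatively, I would use the commutation identity between $\Cm\Dm$ and $\Dm\Cm$ obtained from $[a_s,a_t^\dagger]=\delta_{st}$, which expresses their difference through the number operator and the quadratic Casimir. The expected result is $c_{k,n}=(n-k)(n+k+m-1)$. Iterating the lowering map down to $W_l$ then gives
\begin{equation}
g_l(O)=\sum_{k=0}^{l}\frac{(n-k)!\,(n+k+m-1)!}{(l-k)!\,(l+k+m-1)!}\,\|O_k^{(n)}\|_2^2 ,
\end{equation}
whose diagonal coefficient at $l=k$ is exactly $\alpha_{k,k+1,n}$. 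This is a useful consistency check against the stated normalisation.

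The last step is to invert this lower-triangular system. After factoring out $(n-k)!(n+k+m-1)!$, the matrix has entries $A_{lk}=1/\big((l-k)!\,(l+k+m-1)!\big)$. The claim is that its inverse has entries $(-1)^{k-l}(2k+m-1)!/\big((k-l)!\,(k+l+m-1)_{k-l}\big)$, up to the normalisation. Verifying $\sum_l B_{kl}A_{lj}=\delta_{kj}$ reduces to a terminating alternating sum of Gould/Chu--Vandermonde type. I would prove it either by recognising a ${}_2F_1$ at unit argument, which Chu--Vandermonde evaluates to zero, or by induction on $k-j$.

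I expect the main obstacles to be obtaining the exact eigenvalue $c_{k,n}$ and checking this inversion identity. Of the two, the eigenvalue computation is the more delicate step, because the trace-removed representative of $\lambda_k^{(n)}$ must be handled carefully to avoid contamination from lower irreps.
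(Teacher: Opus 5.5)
Your proposal is correct and uses the same ladder-map mechanism the paper relies on. The scalar you predict, $c_{k,n}=(n-k)(n+k+m-1)$, is exactly $\beta_{k,n-1}$ of \cref{prop:CD_DC}, which the paper imports from \cite{mhiri_boson_2026} rather than recomputing. Your expansion of $g_l$ is the $(n-l)$-fold iterate of the identity \cref{eq:gnm1_expansion} used in the proof of \cref{th:top_irrep_fock_bound}, with diagonal entry $\alpha_{k,k+1,n}$. Your inverse also checks out: for $N=k-j\geq 1$, each off-diagonal entry of the product is, up to a nonzero factor, $\sum_{t=0}^{N}(-1)^{N-t}\binom{N}{t}\prod_{i=1}^{N-1}(t+a+i)$ with $a=m-1+2j$. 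This is an $N$-th finite difference of a degree-$(N-1)$ polynomial, equivalently ${}_2F_1(-N,a+N;a+1;1)=0$ by Chu--Vandermonde.

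The eigenvalue step you flag as delicate is in fact routine. With $M=(a_1^\dagger)^k a_m^k$, the operator $P_j M P_j$ is already a highest-weight vector of $\lambda_k^{(j)}$ (weight $k(e_1-e_m)$), so no trace removal is needed. A direct normal-ordering computation then gives $\Cm_j(P_jMP_j)=(j+k+m-1)P_{j-1}MP_{j-1}$ and $\Dm_j(P_{j-1}MP_{j-1})=(j-k)P_jMP_j$, cf.\ \cref{eq:raising_trivial}.
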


\section{Generic design guidelines for bosonic observables}\label{app:irrep_supp}

In this section, we present useful characterizations of particle number preserving bosonic operators. These results are used  to construct observable that only have support on certain irreps. We consider the space of rank $d$-symmetric tensors $W_d:= \rm Sym^d(\mathbb{C}^m) \otimes \rm Sym^d(\mathbb{C}^m)^*$ which correspond to the space bosonic opera spanned by normally ordered monomials in creation annihilation operators on $m$ modes. 
Indeed any particle number preserving operator of degree $d$ can be written as 
\begin{equation}
    O = \sum_{i_1,\dots,i_d =1}^m \sum_{j_1,\dots,j_d=1}^m O_{i_1,\dots,i_d}^{j_1,\dots,j_d} a^\dagger_{i_1} \dots a^\dagger_{i_d} a_{j_1} \dots a_{j_d} = \sum_{I,J} O_{I}^J [M_{I,J}]_d^{(n)} \;.
\end{equation}
Here, one can easily see that any particle number preserving observable $O$ of degree $d$ can be indeed represented as a symmetric tensor of rank $d$ since $[a_i^\dagger , a_j^\dagger]=[a_i , a_j]=0$.

We introduce the following notation of a normally ordered monomial of degree $d$.
\begin{equation}
    [M_{I,J}]_d := a_{i_1}^\dagger \dots a_{i_d}^\dagger a_{j_1} \dots a_{j_d} 
\end{equation}

Another notation of the same monomial which will be convenient in some cases is the one where we track the number of creation and annihilation operators on each mode. Precisely,
\begin{equation}\label{eq:compact_monomial}
    [M_{I,J}]_d := a_{i_1}^\dagger \dots a_{i_d}^\dagger a_{j_1} \dots a_{j_d} = (a_1^\dagger)^{\alpha_1(I)}  \dots (a_m^\dagger)^{\alpha_m(I)} a_1^{\alpha_1(J)} \dots a_m^{\alpha_m(J)} = (a^\dagger)^{\alpha(I)} (a)^{\alpha(J)}
\end{equation}
Here $\alpha(I)$ and $\alpha(J)$ are multi-indices of size $m$ whose components take values between $0$ and $d$ such that $\sum_{s=1}^m \alpha_s(I) = \sum_{s=1}^m \alpha_s(J) = d$.

When acting on $n$ particles, a monomial of degree $d$ has non trivial action only for $d\leq n$. In this setting, we denote the monomial projected onto the $n$-photon sector by
\begin{align}
    [M_{I,J}]_d^{(n)} &:=P_n [M_{I,J}]_d P_n
\end{align}
where $I$ and $J$ are multi-indices of size $d\leq n$ with each component taking integer values from $1$ to $m$.

%%%%

In the remainder of this section, we derive the proofs of the irrep support \cref{propop:degree_cutoff}, \cref{propop:monomial_single_irrep_support} and \cref{propop:MonomialLastIrrepSupport}.

To do so we recall some useful results for this derivation which were established in \cite{mhiri_boson_2026}.

\begin{lemma}\label{lemma:com_property}
Under the bosonic commutations relations, we have
    \begin{equation}
        [a,[a^\dagger, (a^\dagger)^p (a)^q]] = -pq  (a^\dagger)^{p-1} (a)^{q-1} \;, \forall p,q \geq 1
    \end{equation}
\end{lemma}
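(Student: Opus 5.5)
The plan is a direct computation in a single mode. It uses only the Leibniz rule for commutators, $[A,BC]=[A,B]C+B[A,C]$, together with the canonical relation $[a,a^\dagger]=\Id$. I will evaluate the inner commutator first and then the outer one, using two elementary identities as building blocks:
\begin{equation}
    [a^\dagger, a^q] = -q\, a^{q-1}, \qquad [a,(a^\dagger)^p] = p\,(a^\dagger)^{p-1}, \qquad p,q\geq 1 .
\end{equation}
Each follows by induction on the exponent. For instance,
\begin{equation}
    [a^\dagger,a^{q}] = [a^\dagger,a]\,a^{q-1} + a\,[a^\dagger,a^{q-1}] = -a^{q-1} - (q-1)\,a^{q-1},
\end{equation}
and the second identity is the adjoint of the first up to sign, or follows by the same induction.

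For the inner commutator, $a^\dagger$ commutes with $(a^\dagger)^p$, so the Leibniz rule gives
\begin{equation}
    [a^\dagger,(a^\dagger)^p a^q] = (a^\dagger)^p [a^\dagger, a^q] = -q\,(a^\dagger)^p a^{q-1}.
\end{equation}
For the outer commutator, $a$ commutes with $a^{q-1}$, so
\begin{equation}
    [a,-q(a^\dagger)^p a^{q-1}] = -q\,[a,(a^\dagger)^p]\,a^{q-1} = -pq\,(a^\dagger)^{p-1}a^{q-1},
\end{equation}
which is the claimed identity. An equivalent viewpoint: on normally ordered expressions, $[a,\cdot\,]$ acts as $\partial/\partial a^\dagger$ and $[a^\dagger,\cdot\,]$ acts as $-\partial/\partial a$. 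The double commutator is then the mixed second derivative, which makes the factor $-pq$ transparent.

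There is no real obstacle here. The only points needing care are the edge cases $p=1$ or $q=1$, where $(a^\dagger)^0=a^0=\Id$; the formulas above remain valid there. It is also worth noting that the result stays normally ordered, so no reordering terms appear. The multi-mode use in the paper follows by applying this identity to mode $s$, since operators on distinct modes commute.
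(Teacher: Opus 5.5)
Your proof is correct, including the edge cases $p=1$ or $q=1$. The inner commutator acts only on $a^q$ and gives $-q\,(a^\dagger)^p a^{q-1}$; the outer commutator acts only on $(a^\dagger)^p$ and contributes the factor $p$.

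The paper gives no proof of this lemma; it recalls it from \cite{mhiri_boson_2026}, so there is no in-paper argument to compare against. Your Leibniz-rule derivation is the standard one. The two single-commutator identities you prove, $[a^\dagger,a^q]=-q\,a^{q-1}$ and $[a,(a^\dagger)^p]=p\,(a^\dagger)^{p-1}$, are the same ones the paper uses when it expands $a(a^\dagger)^{\alpha}a^{\beta}a^\dagger$ in the proof of the single-irrep support proposition.
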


\begin{proposition}{(Top irrep of the $k$-photon operator subspace)} \label{prop:kernel_top_irrep}
 The kernel of the lowering map $\Cm_k$ coincides with the top irrep in $W_k$, i.e.
    \begin{equation}\label{eq:top_irrep_ker}
   {\rm Ker}(\Cm_k) = \lambda_k^{(k)}\;.
    \end{equation}    
\end{proposition}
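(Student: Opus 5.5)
Throughout, $\Cm_k$ denotes the lowering map $\Cm$ of \cref{eq:C_map} restricted to $W_k$, so that $\Cm_k: W_k \to W_{k-1}$. The plan is to show two things: that ${\rm Ker}(\Cm_k)$ is a nonzero $U(m)$-invariant subspace of $W_k$, and that $\Cm_k$ is injective on every irrep $\lambda_j^{(k)}$ with $j<k$. Schur's lemma then forces the kernel to be exactly $\lambda_k^{(k)}$, using that the decomposition \cref{eq:decomp_Wn} is multiplicity free.

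\textbf{Equivariance.} First check that $\Cm$ intertwines the adjoint actions. Under an interferometer, $\varphi(U) a_s \varphi(U)^\dagger = \sum_t \bar U_{st} a_t$ (or the analogous formula with $U$). Hence
\[
\sum_s \varphi(U) a_s X a_s^\dagger \varphi(U)^\dagger = \sum_{t,t'} \Big(\sum_s \bar U_{st} U_{st'}\Big) a_t X' a_{t'}^\dagger = \sum_t a_t X' a_t^\dagger ,
\]
where $X'=\varphi(U)X\varphi(U)^\dagger$, by unitarity of $U$. So $\Cm_k$ is a $U(m)$-module map $W_k\to W_{k-1}$. Its kernel is therefore a submodule, i.e.\ a direct sum of some of the $\lambda_j^{(k)}$, each of which appears exactly once in $W_k$.

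\textbf{Top irrep lies in the kernel.} The target space $W_{k-1}\simeq\bigoplus_{j=0}^{k-1}\lambda_j^{(k-1)}$ contains no copy of the $k$-th irrep type. By Schur's lemma, $\Cm_k(\lambda_k^{(k)})=0$. Here one needs that $\lambda_j^{(n)}$ is isomorphic to $\lambda_j^{(n')}$ across $n$ (the irrep of highest weight $(j,0,\dots,0,-j)$), while irreps with distinct $j$ are nonisomorphic. These facts follow from the standard decomposition $\mathrm{Sym}^n\otimes\mathrm{Sym}^{n*}$, or simply from the dimension formula \cref{eq:irrep_dim}, which depends only on $k$ and $m$.

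\textbf{Injectivity on lower irreps.} This is the main step. I would show that $\Cm_k$ is surjective onto $W_{k-1}$. Then, since each $\lambda_j^{(k-1)}$ for $j\le k-1$ must be hit, and by Schur only the isomorphic $\lambda_j^{(k)}$ can map onto it, $\Cm_k$ must be nonzero, hence an isomorphism, on each $\lambda_j^{(k)}$ with $j<k$.

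To get surjectivity, use the raising map $\Dm$ as a near-inverse. One computes $\Cm\Dm - \Dm\Cm$ acting on $W_{k-1}$, using the commutation relations, in the spirit of \cref{lemma:com_property}. This should give an expression of the form $\Cm\Dm(X)=\Dm\Cm(X)+ c_1 X + c_2\,\Tr$-type terms, with $c_1$ depending on $m$ and $k$. The map $\Cm\Dm$ then acts on $\lambda_j^{(k-1)}$ as a scalar. An induction on the photon number, using the ladder structure, should give this scalar in closed form, plausibly $(k-j)(k+j+m-1)$, which is consistent with the Pochhammer factors in \cref{th:recursive_irrep_norm_app}. That scalar is positive for $j<k$, so $\Cm\Dm$ is invertible on $W_{k-1}$, and hence $\Cm_k$ is surjective.

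An equivalent route is to note that $\Dm$ is the Hilbert--Schmidt adjoint of $\Cm$ and that $\Cm\Dm$ is positive definite. I expect the explicit commutator computation of the eigenvalue to be the main obstacle. A fallback is a dimension count: surjectivity gives $\dim{\rm Ker}(\Cm_k)=\dim W_k-\dim W_{k-1}=d_k$, by the sum identity under \cref{eq:irrep_dim}. Combined with the containment $\lambda_k^{(k)}\subseteq{\rm Ker}(\Cm_k)$ from the previous step, this already yields the equality.
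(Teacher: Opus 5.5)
The paper does not prove this proposition; it imports it from \cite{mhiri_boson_2026}, so your argument has to stand on its own. Its architecture is sound. You show that $\Cm$ is equivariant and that $W_k$ is multiplicity free. Schur's lemma then gives $\lambda_k^{(k)}\subseteq{\rm Ker}(\Cm_k)$; here distinct dimensions $d_k^{(k)}>d_j^{(k-1)}$ suffice for the non-isomorphism you need, although the dimension formula alone would not prove $\lambda_j^{(n)}\cong\lambda_j^{(n')}$. Your fallback count $\dim{\rm Ker}(\Cm_k)=\dim W_k-\dim W_{k-1}=d_k^{(k)}$ finishes the argument once $\Cm_k$ is known to be surjective.

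The gap is that surjectivity, the only non-formal step, is never established. Your main route defers the computation and guesses the scalar. Your ``equivalent route'' is circular: $\Cm_k\Dm_k=\Cm_k\Cm_k^\dagger$ is automatically positive semidefinite, and it is positive definite exactly when $\Dm_k$ is injective, i.e.\ exactly when $\Cm_k$ is surjective, which is what you are trying to prove.

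The missing computation is one line. It contains no trace-type terms, because $X$ has fixed photon number. For $X\in W_{k-1}$, using $a_ta_s^\dagger=a_s^\dagger a_t+\delta_{st}$ on both sides,
\[
\Cm_k\Dm_k(X)=\sum_{s,t}a_ta_s^\dagger X a_s a_t^\dagger=\Dm_{k-1}\Cm_{k-1}(X)+\hat N X+X\hat N+mX=\Dm_{k-1}\Cm_{k-1}(X)+(m+2k-2)X .
\]
By the adjointness relation \cref{eq:CD_adjoint}, $\Dm_{k-1}\Cm_{k-1}=\Cm_{k-1}^\dagger\Cm_{k-1}\succeq 0$. Hence $\|\Dm_k(X)\|_2^2=\langle X,\Cm_k\Dm_k(X)\rangle\geq(m+2k-2)\|X\|_2^2$, so $\Dm_k$ is injective and its adjoint $\Cm_k$ is surjective. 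With that inserted, your proof is complete.

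No closed-form eigenvalue is needed. Still, telescoping the identity from $\Dm_j\Cm_j|_{\lambda_j^{(j)}}=0$ gives $\sum_{l=j+1}^{k}(m+2l-2)=(k-j)(m+k+j-1)$ on $\lambda_j^{(k-1)}$. This confirms your guess and matches $\beta_{j,k-1}$ in \cref{prop:CD_DC}. Citing \cref{prop:CD_DC} directly instead would be risky, since that eigenvalue statement (also imported) is naturally derived from the kernel characterization together with \cref{prop:lower_irreps}.
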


\begin{proposition}{(Irreducible component characterization).}\label{prop:lower_irreps}
For every fixed $r \geq 0$, the irreducible components $\lambda_r^{(i)}$ and $\lambda_r^{(j)}$ are isomorphic for all $i,j \geq r$. Moreover, for any $k \geq 0$ and every $r < k$, the irreducible component $\lambda_r^{(k)}$ is obtained from $\lambda_r^{(r)}$ by successive application of the raising maps, namely
\begin{equation}\label{eq:irrep_expression}
\lambda_r^{(k)} 
= \Dm_k \circ \Dm_{k-1} \circ \dots \circ \Dm_{r+1}\bigl(\lambda_r^{(r)}\bigr) \; .
\end{equation}
\end{proposition}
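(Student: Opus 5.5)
The plan is to prove both claims together by induction on $k$, using the raising map $\Dm$ as an intertwiner from $W_{k-1}$ into $W_k$. Write $\Dm_k : W_{k-1}\to W_k$ and $\Cm_k : W_k \to W_{k-1}$ for the restrictions of \cref{eq:D_map,eq:C_map}.

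\textbf{Step 1: both maps are equivariant.} Under a passive transformation, $\varphi(U)\,a_s^\dagger\,\varphi(U)^\dagger=\sum_t U_{ts}a_t^\dagger$. By unitarity of $U$, the sum $\sum_s a_s^\dagger(\cdot)a_s$ is invariant under this substitution. Hence $\Dm_k$ intertwines the adjoint actions on $W_{k-1}$ and $W_k$, and the same argument applies to $\Cm_k$. Moreover $\Cm_k$ is the Hilbert--Schmidt adjoint of $\Dm_k$, since $\Tr[Y^\dagger a_s^\dagger X a_s]=\Tr[(a_sYa_s^\dagger)^\dagger X]$.

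\textbf{Step 2: $\Dm_k$ is injective.} Expand $\Cm\Dm(X)=\sum_{s,t}a_sa_t^\dagger X a_t a_s^\dagger$ using $a_sa_t^\dagger=a_t^\dagger a_s+\delta_{st}$. This should give the commutation identity
\begin{equation}
\Cm_k\Dm_k(X)=\Dm_{k-1}\Cm_{k-1}(X)+\hat N X+X\hat N+mX=\Dm_{k-1}\Cm_{k-1}(X)+(m+2k-2)X
\end{equation}
for $X\in W_{k-1}$, where $\hat N$ is the total number operator. Since $\Dm\Cm=\Cm^\dagger\Cm\succeq0$, we get $\|\Dm_k X\|_2^2=\langle X,\Cm_k\Dm_kX\rangle\ge (m+2k-2)\|X\|_2^2$. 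Therefore $\Dm_k$ is injective and $\Dm_k(W_{k-1})\cong W_{k-1}$ as $U(m)$-representations.

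\textbf{Step 3: decompose $W_k$.} Because $\Cm_k=\Dm_k^\dagger$, we have $W_k=\operatorname{Im}\Dm_k\oplus\ker\Cm_k$ orthogonally. By \cref{prop:kernel_top_irrep}, $\ker\Cm_k=\lambda_k^{(k)}$, so
\begin{equation}
W_k=\Dm_k(W_{k-1})\oplus\lambda_k^{(k)}.
\end{equation}
As a consistency check, the dimensions match: $\dim W_k-\dim W_{k-1}=d_k$.

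\textbf{Step 4: induction.} Assume $W_{k-1}=\bigoplus_{r\le k-1}\lambda_r^{(k-1)}$ with $\lambda_r^{(k-1)}=\Dm_{k-1}\circ\cdots\circ\Dm_{r+1}(\lambda_r^{(r)})$. Applying the injective intertwiner $\Dm_k$ gives
\begin{equation}
W_k=\bigoplus_{r\le k-1}\Dm_k(\lambda_r^{(k-1)})\oplus\lambda_k^{(k)},
\end{equation}
and each summand $\Dm_k(\lambda_r^{(k-1)})$ is an irrep isomorphic to $\lambda_r^{(r)}$.

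\textbf{Step 5: identify the summands.} It remains to show that $\Dm_k(\lambda_r^{(k-1)})$ is exactly the type-$r$ component $\lambda_r^{(k)}$ of \cref{eq:decomp_Wn}. I expect this to be the main obstacle, because the $\lambda_r$ are a priori only labels. The approach is to prove that the irreps $\lambda_0^{(0)},\lambda_1^{(1)},\dots$ are pairwise inequivalent. The quickest route is that their dimensions $d_r$ from \cref{eq:irrep_dim} are strictly increasing in $r$. Alternatively, one can note that $\lambda_r^{(r)}$ contains the highest-weight vector $(a_1^\dagger)^r a_m^r$, of weight $r(e_1-e_m)$. Once inequivalence holds, $W_k$ is multiplicity-free. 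Schur's lemma then forces each isotypic component to be unique, so $\Dm_k(\lambda_r^{(k-1)})=\lambda_r^{(k)}$. This also yields $\lambda_r^{(i)}\cong\lambda_r^{(j)}$ for all $i,j\ge r$ by transitivity, which is the first claim.
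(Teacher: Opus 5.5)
Your proof is correct. There is no in-paper proof to compare it with: the paper recalls this proposition from \cite{mhiri_boson_2026} together with \cref{prop:kernel_top_irrep} and \cref{prop:CD_DC}, and does not prove it.

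What you give is a self-contained derivation inside the raising/lowering framework the paper uses elsewhere. That framework consists of Hilbert--Schmidt adjointness, as in \cref{eq:CD_adjoint}, and the orthogonal splitting $W_k=\Dm_k(W_{k-1})\oplus\ker\Cm_k$.

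The main addition in your route is the commutation identity $\Cm_k\Dm_k=\Dm_{k-1}\Cm_{k-1}+(m+2k-2)\Id$. It gives injectivity of $\Dm_k$ with an explicit lower bound. Once the present statement is known, it also yields \cref{prop:CD_DC}. Indeed, $\Dm_r\Cm_r$ vanishes on $\lambda_r^{(r)}=\ker\Cm_r$. Telescoping along the chain then gives $\sum_{j=r}^{k}(m+2j)=(k-r+1)(m+r+k)=\beta_{r,k}$.

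Two dependencies are worth stating explicitly.

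\emph{Label matching in Step 5.} It is rigorous because \cref{eq:irrep_dim} makes $\dim\lambda_r^{(k)}=d_r$ independent of $k$ and strictly increasing in $r$ for $m\ge 2$. So, in the multiplicity-free $W_k$, the paper's $\lambda_r^{(k)}$ is the unique irreducible invariant subspace of that dimension. Your Step 3 count, $\dim\ker\Cm_k=\dim W_k-\dim W_{k-1}=d_k$, even recovers the dimension of the new summand intrinsically.

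\emph{Irreducibility of $\ker\Cm_k$.} You take this from \cref{prop:kernel_top_irrep}. If you want the argument independent of that proposition, invoke the standard multiplicity-free decomposition $W_k\cong\bigoplus_{r=0}^{k}V_r$ (Pieri rule). Since $\Dm_k(W_{k-1})\cong W_{k-1}\cong\bigoplus_{r<k}V_r$, the orthogonal complement $\ker\Cm_k$ must then be isomorphic to $V_k$.
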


\begin{proposition}{(Eigenvalues of raising and lowering maps on irreducible components).}\label{prop:CD_DC}    
         The maps $\Cm_{k+1} \circ \Dm_{k+1} : W_{k} \rightarrow W_k$ and  $\Dm_k \circ \Cm_k : W_{k} \rightarrow W_k$ act as scalar multiples of the identity on each  irreducible component  $\lambda_r^{(k)}$ for $0 \leq r\leq k$, i.e.
        \begin{align}
            \Cm_{k+1} \circ \Dm_{k+1}|_{\lambda_r^{(k)}} &= \beta_{r,k} \Id_{\lambda_r^{(k)}} \\
            \Dm_k \circ \Cm_k|_{\lambda_r^{(k)}} &= \beta_{r,k-1} \Id_{\lambda_r^{(k)}}
        \end{align}
        where $\beta_{r,k} =  (k-r+1) (m+r+k)$.    \end{proposition}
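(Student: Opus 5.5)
The plan is to combine Schur's lemma with an explicit commutation identity between $\Cm$ and $\Dm$, and then induct on $k$.

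\textbf{Equivariance and Schur's lemma.} First I will check that $\Cm$ and $\Dm$ are $U(m)$-equivariant for the adjoint action. Under a passive transformation, $a_s \mapsto \sum_t U_{st} a_t$. The sums $\sum_s a_s(\cdot)a_s^\dagger$ and $\sum_s a_s^\dagger(\cdot)a_s$ are invariant under this change by unitarity of $U$. Hence $\Cm_{k+1}\circ\Dm_{k+1}$ and $\Dm_k\circ\Cm_k$ are equivariant endomorphisms of $W_k$. By \cref{eq:decomp_Wn}, $W_k=\bigoplus_{r=0}^k\lambda_r^{(k)}$, and this decomposition is multiplicity-free: the dimensions $d_r^{(k)}$ are distinct, and the highest weights are distinct. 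Schur's lemma then says each composite map acts as a scalar on each $\lambda_r^{(k)}$. What remains is to identify the scalars.

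\textbf{Commutation identity.} For $X\in W_k$, viewed as an operator on $\mathcal F_k$, I will expand
\[
\Cm\Dm(X)=\sum_{s,t}a_s a_t^\dagger X a_t a_s^\dagger
\]
using $a_s a_t^\dagger=a_t^\dagger a_s+\delta_{st}$ on the left and $a_t a_s^\dagger=a_s^\dagger a_t+\delta_{st}$ on the right. This gives four groups of terms:
\begin{itemize}
\item the product of the two non-$\delta$ parts gives $\Dm\Cm(X)$;
\item the two mixed terms give $\hat N X$ and $X\hat N$, each equal to $kX$ on this sector;
\item the $\delta\delta$ term gives $mX$.
\end{itemize}
The result is
\begin{equation*}
\Cm_{k+1}\circ\Dm_{k+1}=\Dm_k\circ\Cm_k+(m+2k)\,\Id_{W_k}.
\end{equation*}
This computation is the step needing most care. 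I must track which photon sector each $a_s$ maps between, so that $\hat N$ is evaluated as $k$ on both sides rather than $k\pm1$. I would verify it on a single-mode example.

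\textbf{Induction on $k$.} Suppose that on $\lambda_r^{(k)}$ we have $\Dm_k\Cm_k=\beta_{r,k-1}\Id$. The commutation identity then gives $\Cm_{k+1}\Dm_{k+1}=\beta_{r,k-1}+m+2k$ there. A direct check shows
\[
(k-r+1)(m+r+k)-(k-r)(m+r+k-1)=m+2k,
\]
so this equals $\beta_{r,k}$. It therefore suffices to establish the $\Dm\Cm$ eigenvalues, and I split into two cases.
\begin{itemize}
\item \emph{Top component $r=k$.} \Cref{prop:kernel_top_irrep} gives $\Dm_k\Cm_k=0=\beta_{k,k-1}$, since $\beta_{k,k-1}$ has the factor $(k-k)=0$.
\item \emph{Lower components $r<k$.} By \cref{prop:lower_irreps}, every element of $\lambda_r^{(k)}$ has the form $\Dm_k Y$ with $Y\in\lambda_r^{(k-1)}$. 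The induction hypothesis gives $\Cm_k\Dm_k Y=\beta_{r,k-1}Y$. Hence $\Dm_k\Cm_k(\Dm_k Y)=\beta_{r,k-1}\Dm_k Y$.
\end{itemize}
The base case $k=0$ is immediate: $W_0$ is spanned by the vacuum projector, and $\Cm\Dm$ acting on it gives $m=\beta_{0,0}$.

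\textbf{Expected obstacle.} Beyond the sector bookkeeping in the commutation identity, I must make sure $\Dm_k$ does not annihilate $\lambda_r^{(k-1)}$. This follows from the induction itself: $\Cm_k\Dm_k=\beta_{r,k-1}\neq0$ there, because both factors of $\beta_{r,k-1}$ are positive when $r\le k-1$. So $\Dm_k$ is injective on $\lambda_r^{(k-1)}$, and \cref{prop:lower_irreps} really does produce all of $\lambda_r^{(k)}$.
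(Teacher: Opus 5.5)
Your proof is correct. Expanding $\sum_{s,t}a_sa_t^\dagger X a_ta_s^\dagger$ with the canonical commutation relations gives $\Cm_{k+1}\circ\Dm_{k+1}=\Dm_k\circ\Cm_k+(m+2k)\,\Id_{W_k}$. This works because $X=P_kXP_k$, so both $\hat N X$ and $X\hat N$ equal $kX$. The difference identity $\beta_{r,k}-\beta_{r,k-1}=m+2k$ holds, and the induction closes:
\begin{itemize}
\item the base case is $\Cm_1\Dm_1=m=\beta_{0,0}$ on the vacuum projector;
\item the top component is handled by \cref{prop:kernel_top_irrep};
\item the lower components follow by transporting the level-$(k-1)$ eigenvalue through $\Dm_k$.
\end{itemize}
The paper gives no proof of this proposition. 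It is recalled from \cite{mhiri_boson_2026} alongside \cref{prop:kernel_top_irrep,prop:lower_irreps}, so there is no in-paper argument to compare with.

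Two remarks on your route. First, the Schur's-lemma step is superfluous, since the induction exhibits the scalar action directly.

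Second, and more substantively, \cref{prop:lower_irreps} presupposes that $\Dm$ does not annihilate the lower irreps. That is exactly the kind of fact this eigenvalue result is normally used to establish, so using it as an input risks circularity relative to the source. Your ``expected obstacle'' paragraph is the way out, and it can be pushed further without any induction:
\begin{itemize}
\item By the Hilbert--Schmidt adjointness $\langle\Cm_k(B),A\rangle=\langle B,\Dm_k(A)\rangle$ of \cref{eq:CD_adjoint}, $\Dm_k\circ\Cm_k$ is positive semidefinite.
\item Your identity then gives $\Cm_{k+1}\circ\Dm_{k+1}\succeq(m+2k)\,\Id$, so $\Dm_{k+1}$ is injective on every $W_k$.
\item Equivariance and multiplicity-freeness force $\Dm_k(\lambda_r^{(k-1)})=\lambda_r^{(k)}$. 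Here the label $r$ must denote the same isomorphism type in every $W_k$, as in the standard decomposition.
\item Consequently $\mathrm{Ker}(\Cm_k)=(\mathrm{Im}\,\Dm_k)^\perp=\lambda_k^{(k)}$.
\end{itemize}
In this form your commutation identity yields \cref{prop:kernel_top_irrep,prop:lower_irreps} as by-products rather than needing them. That is what your approach buys over simply quoting the cited results.
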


\subsection{Irrep support construction proofs}

\begin{proposition}\label{propop:degree_cutoff_app}
    A monomial $[{M}_{\bm p, \bm q}]_d^{(n)}$ of degree $d \leq n$ acting on $n$ particles has support on at most the first $d+1$ irreps of $W_n$, namely $\lambda_0^{(n)} ,\dots, \lambda_d^{(n)}$. 
\end{proposition}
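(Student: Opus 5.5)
The plan is to show that the $n$-photon restriction of a degree-$d$ monomial lies in the image of $d$-particle operators under iterated raising maps. Proposition~\ref{prop:lower_irreps} then confines the possible irreducible components.

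\textbf{Step 1: an identity for the raising map.} For any normally ordered $X=[M_{\bm p,\bm q}]_d$, use the canonical commutation relations to move $a_s$ through and compute
\begin{equation}
\Dm(X)=\sum_{s=1}^m a_s^\dagger X a_s .
\end{equation}
On the fixed-particle sector we have $\sum_s a_s^\dagger a_s=\hat N$. Since $X$ is normally ordered with $d$ annihilators, one gets
\begin{equation}
\sum_s a_s^\dagger X a_s = X(\hat N-d)
\end{equation}
when acting on $\mathcal F_{n}$ with $X$ mapping $\mathcal F_n\to\mathcal F_n$. To see this, write $X=A^\dagger B$ with $B$ a product of $d$ annihilators, so that $a_s^\dagger A^\dagger B a_s = A^\dagger a_s^\dagger a_s B$ after commuting $a_s$ past $B$, which is free since annihilators commute. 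Summing over $s$ gives $A^\dagger \hat N B$, and $B$ lowers the photon number by $d$. Restricted to the relevant sectors, this yields
\begin{equation}
\Dm_{n}\bigl([M_{\bm p,\bm q}]_d^{(n-1)}\bigr)=(n-d)\,[M_{\bm p,\bm q}]_d^{(n)} .
\end{equation}
I would double-check the index convention for $\Dm_n$ (whether it maps $W_{n-1}\to W_n$) against Proposition~\ref{prop:CD_DC}.

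\textbf{Step 2: iterate.} Iterating from $n=d+1$ up to $n$ gives
\begin{equation}
[M_{\bm p,\bm q}]_d^{(n)}=\frac{1}{(n-d)!}\,\Dm_n\circ\cdots\circ\Dm_{d+1}\bigl([M_{\bm p,\bm q}]_d^{(d)}\bigr).
\end{equation}
All the factors are nonzero because $n\ge d$. When $d=n$ there is nothing to prove, since the statement is then trivially true.

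\textbf{Step 3: conclude.} Decompose $[M_{\bm p,\bm q}]_d^{(d)}\in W_d=\bigoplus_{r=0}^d\lambda_r^{(d)}$. The raising maps are $U(m)$-equivariant, being built from the sums $\sum_s a_s^\dagger(\cdot)a_s$, which are invariant under $a\mapsto Ua$. Hence each $\lambda_r^{(d)}$ maps into an isotypic copy of type $r$. By Proposition~\ref{prop:lower_irreps}, the image $\Dm_n\circ\cdots\circ\Dm_{d+1}(\lambda_r^{(d)})$ lies in $\lambda_r^{(n)}$. Indeed $\lambda_r^{(d)}=\Dm_d\circ\cdots\circ\Dm_{r+1}(\lambda_r^{(r)})$, and the composition equals the corresponding chain producing $\lambda_r^{(n)}$. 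The irreps $\lambda_r^{(n)}$ are multiplicity-free in $W_n$ because their dimensions $d_r$ are distinct, so equivariance alone already places the image in $\lambda_r^{(n)}$. It follows that $[M_{\bm p,\bm q}]_d^{(n)}\in\bigoplus_{r=0}^d\lambda_r^{(n)}$, which is the claim.

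The main obstacle is Step 1: getting the normal-ordering identity and its sector bookkeeping exactly right, so that the proportionality constant is nonzero. Lemma~\ref{lemma:com_property} is available as a fallback for handling the commutators if the direct computation is messy.
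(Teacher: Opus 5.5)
Your proposal is correct and follows the paper's proof essentially line by line. The paper uses the same identity $\Dm_k\bigl([M_{\bm p,\bm q}]_d^{(k-1)}\bigr)=(k-d)\,[M_{\bm p,\bm q}]_d^{(k)}$, obtained by sliding $\sum_s a_s^\dagger a_s=\hat N$ between the creators and annihilators, and iterates it to $[M_{\bm p,\bm q}]_d^{(n)}=\frac{1}{(n-d)!}\Dm_n\circ\cdots\circ\Dm_{d+1}\bigl([M_{\bm p,\bm q}]_d^{(d)}\bigr)$. Your Step 3, via Proposition~\ref{prop:lower_irreps} or via equivariance plus multiplicity-freeness, makes explicit the final inference that the paper leaves implicit. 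The index convention you wanted to check, $\Dm_k:W_{k-1}\to W_k$, is the one the paper uses.
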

\begin{proof}
     It suffices to show that there exists $[X]^{(d)}_d \in W_d$ such that $[M_{I,J}]_d^{(n)}= R_n \dots R_{d+1}([X]^{(d)}_d)$, meaning that the monomial has zero support on the top $n-d$ irreps. Indeed, in what follows we show that by construction $[X]^{(d)}_d= \frac{1}{(n-d)!}[M_{I,J}]_d^{(d)}$.

    First we start by showing that the action of the raising map $R_k$ on a normally ordered monomials of degree $d \leq k$   reduces to
\begin{align}\label{eq:raising_trivial}
    R_k([M_{I,J}]_d^{(k-1)}) = (k-d) [M_{I,J}]_d^{(k)}
\end{align}
where we recall that $[M_{I,J}]_d^{(k-1)} := P_{k-1} [M_{I,J}]_d P_{k-1}$ and $[M_{I,J}]_d^{(k)} := P_k [M_{I,J}]_d P_k$.

Indeed, we can further expand the LHS in \cref{eq:raising_trivial} as follows
\begin{align}
    R_k([M_{I,J}]_d^{(k-1)}) &= P_k R([M_{I,J}]_d) P_k\\
    &= P_k \sum_{s=1}^m a_J^\dagger a_s^\dagger  a_s  a_I P_k \\
    &=  P_k  a_J^\dagger \sum_{s=1}^m(a_s^\dagger  a_s)  a_I P_k\\
    &= P_k  a_J^\dagger \hat{N}  a_I P_k\\
    &= a_J^\dagger P_{k-d} \hat{N} P_{k-d}  a_I \\
    &= (k-d) P_k a_J^\dagger a_I P_k := (k-d)[M_{I,J}]_d^{(k)}    
\end{align}
where in the last two equalities, we applied the bosonic commutation relations, $k-d$ times.

By using the recursive relation in \cref{eq:raising_trivial}, we get that 
\begin{align}
    R_{d+1}([M_{I,J}]_d^{(d)}) &= (d+1-d) [M_{I,J}]_d^{(d+1)}\\
     R_{d+2} R_{d+1}([M_{I,J}]_d^{(d)}) &= 1 \times (d+2-d)  [M_{I,J}]_d^{(d+2)}\\
     &\vdots\\
     R_n \dots R_{d+1}([M_{I,J}]_d^{(d)}) &= (n-d)![M_{I,J}]_d^{(n)}
\end{align}
which concludes the proof.
\end{proof}

\begin{proposition}\label{propop:monomial_single_irrep_support_app}
     A monomial $[{M}_{\bm p, \bm q}]_d^{(n)}$ of degree $d\leq n$ acting on $n$ particles such that every indexes in $\bm p$ and $\bm q$ are different has only support on the irrep $\lambda_d^{(n)}$.
\end{proposition}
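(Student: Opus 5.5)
Using the factorisation from the proof of \cref{propop:degree_cutoff_app}, namely $[M_{\bm p,\bm q}]_d^{(n)} = \frac{1}{(n-d)!}\,\Dm_n\circ\cdots\circ\Dm_{d+1}\bigl([M_{\bm p,\bm q}]_d^{(d)}\bigr)$, I reduce the statement to the degree-$d$ sector. By \cref{prop:lower_irreps}, the chain of raising maps sends $\lambda_r^{(d)}$ into $\lambda_r^{(n)}$ for each $r$. By \cref{prop:CD_DC}, the composition $\Cm\circ\Dm$ acts on $\lambda_r^{(k)}$ as the scalar $\beta_{r,k}=(k-r+1)(m+r+k)$, which is nonzero. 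Hence each raising map is injective on every irreducible component and never mixes components. It is therefore enough to show that $[M_{\bm p,\bm q}]_d^{(d)}$ lies entirely in $\lambda_d^{(d)}$; its image under the raising chain then lies in $\lambda_d^{(n)}$.

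By \cref{prop:kernel_top_irrep}, $\lambda_d^{(d)}=\mathrm{Ker}(\Cm_d)$, so the key step is to verify
\[
\Cm_d\bigl([M_{\bm p,\bm q}]_d^{(d)}\bigr) = P_{d-1}\sum_{s=1}^m a_s\, a_{p_1}^\dagger\cdots a_{p_d}^\dagger a_{q_1}\cdots a_{q_d}\, a_s^\dagger\, P_{d-1} = 0 .
\]
To compute this, I normal-order $a_s\,M\,a_s^\dagger$. Writing $a_s M a_s^\dagger = M a_s a_s^\dagger + [a_s,M]a_s^\dagger$ and expanding both commutators gives
\[
a_s M a_s^\dagger = a_s^\dagger M a_s + M + [a_s,M]\,a_s^\dagger + a_s^\dagger\,[M,a_s^\dagger],
\]
up to a double-commutator term $[a_s,[M,a_s^\dagger]]$. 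Because all indices in $\bm p$ and $\bm q$ are distinct, each mode $s$ appears at most once among the creators and at most once among the annihilators, and never in both (I read the hypothesis as $\bm p\cap\bm q=\emptyset$ with no repetitions). The double commutator then vanishes, in line with \cref{lemma:com_property} with the roles of $p$ and $q$ never both $\geq 1$ on the same mode. Summing over $s$, the term $\sum_s a_s^\dagger M a_s$ is $\Dm(M)$, which annihilates $\mathcal F_{d-1}$ on the right because $M$ removes $d$ photons and $d-1<d$. The terms $M$ and $[a_s,M]a_s^\dagger$ similarly contain $d$ annihilators acting on at most $d$ photons after $a_s^\dagger$; I need to track them carefully.

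The cleaner route I would actually pursue is direct: on $\mathcal F_{d-1}$, apply $a_s^\dagger$ to get $d$ photons, then $a_{q_1}\cdots a_{q_d}$ to reach the vacuum, then recreate $d$ photons and annihilate one with $a_s$. So $\Cm_d(M)=\sum_s a_s a_{\bm p}^\dagger\ket{0}\bra{0}a_{\bm q}a_s^\dagger$ restricted to $\mathcal F_{d-1}$. Since the indices are distinct, $a_s a_{\bm p}^\dagger\ket{0}$ is nonzero only for $s=p_i$, and $\bra{0}a_{\bm q}a_s^\dagger$ is nonzero only for $s=q_j$. The sum therefore vanishes exactly when $\{p_i\}\cap\{q_j\}=\emptyset$.

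The main obstacle is this disjointness. If some $p_i=q_j$, the diagonal terms survive and the claim fails; for example, $a_1^\dagger a_1$ at $d=1$ has a nonzero trace component. The proof must therefore rest on the interpretation that all $2d$ indices are pairwise distinct, consistent with \cref{propop:MonomialLastIrrepSupport} at zero overlap. With that reading, the kernel computation above closes the argument.
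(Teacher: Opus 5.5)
Your proof is correct and is essentially the paper's: you use the same factorisation $[M_{\bm p,\bm q}]_d^{(n)}=\frac{1}{(n-d)!}\Dm_n\circ\cdots\circ\Dm_{d+1}([M_{\bm p,\bm q}]_d^{(d)})$ to reduce the claim to $[M_{\bm p,\bm q}]_d^{(d)}\in\mathrm{Ker}(\Cm_d)=\lambda_d^{(d)}$. Your rank-one evaluation via $a_{\bm p}^\dagger\ket{0}\bra{0}a_{\bm q}$ on $\mathcal F_d$ is a cleaner derivation of the paper's formula $\sum_s\alpha_s(J)\beta_s(I)\,(a^\dagger)^{J-e_s}a^{I-e_s}$, so you can delete the commutator-expansion paragraph (its identity is mis-written: the correct version has the term $[M,a_s^\dagger]a_s$ in place of $a_s^\dagger[M,a_s^\dagger]$ and holds exactly, with no double commutator), and, as you observe, only disjointness of the mode sets of $\bm p$ and $\bm q$ is used, which is precisely the condition $\alpha_s(J)\beta_s(I)=0$ for all $s$ on which the paper's argument rests.
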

\begin{proof}
 From Proposition \ref{propop:degree_cutoff_app}, we know that the monomial $[M_{I,J}]_d^{(n)}$ has support on at most the first $d+1$ irreps of $W_n$. Moreover, we have 
 \begin{equation}
      [M_{I,J}]_d^{(n)} = \frac{1}{(n-d)!} R_n \dots R_{d+1}([M_{I,J}]_d^{(d)})  
 \end{equation}

 To prove that $[M_{I,J}]_d^{(n)}$ is fully supported on the irrep $\lambda_d^{(n)}$, it suffices to show that  $[M_{I,J}]_d^{(d)} \in Ker(L_d)$, since $Ker(L_d) = \lambda_d^{(d)}$ as established in \cref{prop:kernel_top_irrep}.

The action of the lowering map on a monomial of degree $d\leq k$ give rise to a sum of terms of degree $d-1, d $ and $d+1$ projected on the  subspace $W_{k-1}$. Precisely, we have 
\begin{align}
    L([M_{I,J}]_d^{(k)}) &= \sum_{s=1}^m a_s P_k (a^\dagger)^{\alpha(J)} a^{\beta(I)} P_k a_s^\dagger\\
    &= P_{k-1} \left(\sum_{s=1}^m  \left(\prod_{l \neq s} (a^\dagger)^{\alpha_l(J)}   \right)   a_s  (a^\dagger)^{\alpha_s(J)}    a^{\beta_s(I)}    a_s^\dagger  \left(\prod_{l \neq s} a^{\beta_l(I)}   \right) \right)P_{k-1} \label{eq:contract_dev}
\end{align}
where in the first equality we used the monomial expression introduced in \cref{eq:compact_monomial} and in the second equality, we used the bosonic commutation relations to pull back the projectors.

Using the commutation identity from Lemma \ref{lemma:com_property}, the term $a_s  (a^\dagger)^{\alpha_s(J)}    a^{\beta_s(I)}    a_s^\dagger$ can be expressed as 
\begin{align}
    a  (a^\dagger)^{\alpha}    a^{\beta}    a^\dagger &= \left( (a^\dagger)^{\alpha}a + \alpha (a^\dagger)^{\alpha-1} [a,a^\dagger] \right) \left( a^\dagger a^\beta + \beta a^{\beta-1} [a,a^\dagger]\right)\\
    &= \alpha \beta (a^\dagger)^{\alpha-1}  a^{\beta-1} + (\alpha +\beta) (a^\dagger)^{\alpha}  a^{\beta} + (a^\dagger)^{\alpha} a a^\dagger a^\beta\\
    &= \alpha \beta (a^\dagger)^{\alpha-1}  a^{\beta-1} + (\alpha +\beta) (a^\dagger)^{\alpha}  a^{\beta} + (a^\dagger)^{\alpha} ( a^\dagger a + \Id) a^\beta \\
    &= \alpha \beta (a^\dagger)^{\alpha-1}  a^{\beta-1}  + (\alpha +\beta+1) (a^\dagger)^{\alpha}  a^{\beta} +  (a^\dagger)^{\alpha+1} a^{\beta+1} \label{eq:contract_mode}
\end{align}

By plugging the result in \cref{eq:contract_mode} back in \cref{eq:contract_dev}, we get that the contraction map give rise to three monomials of degree $k-1, k$ and $k+1$ respectively, all projected onto the $k-1$ particle subspace.

For an observable of degree exactly $d=k$, the only surviving term in the contraction map expansion is the one corresponding to a degree $k-1$ monomial,
\begin{align}\label{eq:contraction_degree}
    C([M_{I,J}]_k^{(k)}) = P_{k-1} \left(\sum_{s=1}^m \alpha_s(J) \beta_s(I) (a^\dagger)^{J-e_s} a^{I-e_s}   \right)  P_{k-1} 
\end{align}
where $e_s$ is the vector of length $m$ with all entries zeros except for 1 in position $s$. 

Therefore the contraction on the monomial $[M_{I,J}]_d^{(d)}$ with $\sigma(I) \neq \sigma(J) \;,\forall \sigma \in S_d$ yields zero since $\alpha_s(J)\beta_s(J) = 0 \; \forall 1 \leq s \leq m $. 
\end{proof}

\begin{proposition}\label{propop:MonomialLastIrrepSupport_app}
    A monomial $[{M}_{\bm p, \bm q}]_d^{(n)}$ of degree $d\leq n$ acting on $n$ particles such that $|\p \cap \q| = k$, i.e., $\bm p$ and $\bm q$ have exactly $k-1$ common indices, has support only on the $K$ consecutive irreps ending at $\lambda_d^{(n)}$, namely $\lambda_{d-k+1}^{(n)},\dots,\lambda_d^{(n)}$.
\end{proposition}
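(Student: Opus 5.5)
The plan is to reduce the statement to the sector $W_d$ and then read off the irrep support by applying the lowering map $\Cm$ repeatedly. Throughout I read the overlap hypothesis as saying that $\bm p$ and $\bm q$ share $c:=k-1$ indices, counted with multiplicity, i.e.\ $\sum_s \min(\alpha_s(\bm p),\alpha_s(\bm q)) = c$. The claim to prove is that $[M_{\bm p,\bm q}]_d^{(n)}$ has no component in $\lambda_r^{(n)}$ for $r<d-c$. Together with \cref{propop:degree_cutoff_app}, this confines the support to $\lambda_{d-c}^{(n)},\dots,\lambda_d^{(n)}$. Only this upper bound on the support is needed.

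\emph{Step 1 (reduction to $W_d$).} The proof of \cref{propop:degree_cutoff_app} gives
\[
[M_{\bm p,\bm q}]_d^{(n)}=\tfrac{1}{(n-d)!}\,\Dm_n\circ\cdots\circ\Dm_{d+1}\big([M_{\bm p,\bm q}]_d^{(d)}\big).
\]
The raising maps are $U(m)$-equivariant. By \cref{prop:lower_irreps} they carry $\lambda_r^{(j)}$ onto $\lambda_r^{(j+1)}$. By \cref{prop:CD_DC}, $\Cm_{j+1}\circ\Dm_{j+1}=\beta_{r,j}\neq0$ on $\lambda_r^{(j)}$, so each raising map is injective there. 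Hence the irrep support of $[M]_d^{(n)}$ in $W_n$ coincides with that of $[M]_d^{(d)}$ in $W_d$, and it suffices to work in $W_d$.

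\emph{Step 2 (lowering kills the common indices).} Restricted to top-degree monomials in $W_j$, the contraction formula \cref{eq:contraction_degree} shows that $\Cm_j$ sends $[M_{\bm p,\bm q}]_j^{(j)}$ to
\[
\sum_s \alpha_s\beta_s\,[M_{\bm p-e_s,\bm q-e_s}]_{j-1}^{(j-1)}.
\]
Only modes $s$ common to creation and annihilation contribute, and each surviving term has one fewer common index. Iterating $c$ times, every monomial appearing in $\Cm^{c}([M]_d^{(d)})\in W_{d-c}$ is a degree-$(d-c)$ monomial with no shared index. By the argument of \cref{propop:monomial_single_irrep_support_app}, each such monomial is annihilated by $\Cm_{d-c}$, so each lies in $\lambda_{d-c}^{(d-c)}$ by \cref{prop:kernel_top_irrep}. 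Consequently $\Cm^{c}([M]_d^{(d)})\in\lambda_{d-c}^{(d-c)}$.

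\emph{Step 3 (transfer back).} Decompose $[M]_d^{(d)}=\sum_{r\le d}X_r$ with $X_r\in\lambda_r^{(d)}$. The map $\Cm^c$ is equivariant, so it sends $X_r$ into $\lambda_r^{(d-c)}$ when $r\le d-c$, and to zero when $r>d-c$. Moreover, $\Dm\circ\Cm=\beta_{r,j-1}\,\Id$ on $\lambda_r^{(j)}$ with $\beta_{r,j-1}=(j-r)(m+r+j-1)\neq0$ whenever $r<j$, so $\Cm^c$ is injective on $\lambda_r^{(d)}$ for every $r\le d-c$. Comparing with Step 2, the components $\Cm^c(X_r)$ must vanish for all $r<d-c$. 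Injectivity then forces $X_r=0$ for $r<d-c$, and Step 1 transports this conclusion to $W_n$.

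The main obstacle is Step 2. One has to check that \cref{eq:contraction_degree} really kills the degree-$j$ and degree-$(j+1)$ terms at every stage, which holds because each intermediate monomial is again top-degree in its own sector. One also has to handle repeated indices carefully, because the weights $\alpha_s\beta_s$ and the count $c$ must be taken with multiplicity. I would verify this first on a small case, such as $a_1^{\dagger2}a_1a_2$ with $c=1$.
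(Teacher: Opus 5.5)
Your proposal is correct and follows essentially the same route as the paper. Both arguments reduce to $W_d$ through the raising maps, use the top-degree contraction formula to remove the common indices one lowering at a time, and combine $\mathrm{Ker}(\Cm_j)=\lambda_j^{(j)}$ with the nonzero scalar action of the lowering/raising composites on lower irreps to force the components on $\lambda_r$ with $r<d-k+1$ to vanish. The differences are minor: you stop after $k-1$ lowerings and invoke the kernel characterization, whereas the paper applies one more lowering and shows the result is zero (equivalent statements). Your reading of the overlap as counted with multiplicity is exactly what the paper's iteration implicitly requires.
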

\begin{proof}
    Combining the result from Proposition \ref{propop:degree_cutoff_app} and the result from \cref{eq:irrep_expression} in Proposition \ref{prop:lower_irreps},
    we get that 
    \begin{align}
        [M_{I,J}]_d^{(n)} &= \frac{1}{(n-d)!} R_n \dots R_{d+1}([M_{I,J}]_d^{(d)}) \\
        [M_{I,J}]_d^{(d)} &= \sum_{j=0}^d R_d \dots R_{j+1}(H_j) \label{eq:monomial_decomp_irrep}
    \end{align}
    where $H_j \in \lambda_j^{(j)}$ depend on $[M_{I,J}]_d^{(d)}$.

    By applying the lowering map on $[M_{I,J}]_d^{(d)}$ $K$ times, we get
    \begin{align}
     L_{d-K+1}\dots L_{d-1} L_d([M_{I,J}]_d^{(d)}) &= \sum_{j=0}^d L_{d-K+1}\dots L_{d-1} L_dR_d \dots R_{j+1}(H_j)\\
      &= \sum_{j=0}^{d-K} L_{d-K+1}\dots L_{d-1} L_dR_d \dots R_{j+1}(H_j) + \sum_{j=d-K+1}^d  L_{d-K+1} \dots  L_j (L_{j+1}\dots L_{d-1} L_dR_d \dots R_{j+1})(H_j) \label{eq:contract1}
    \end{align}

    The second term in \cref{eq:contract1} turns out to be simply zero. Precisely,  by exploiting the fact that the map $L_{j+1}\dots L_{d-1} L_dR_d \dots R_{j+1}$ maps $\rm Ker(L_j)$ to $\rm Ker(L_j)$, we get that
    \begin{align}
       \sum_{j=d-K+1}^d  L_{d-K+1} \dots  L_j (L_{j+1}\dots L_{d-1} L_dR_d \dots R_{j+1})(H_j) =0 
    \end{align}

    Hence, the application of the contraction map $K$ times on $[M_{I,J}]_d^{(d)}$ cancels out its components  on the top $K$ irreps, i.e.  $\lambda_{d-K+1}^{(d)}, \dots, \lambda_d^{(d)}$ and \cref{eq:contract1} reduces to 
    \begin{align}
        L_{d-K+1}\dots L_{d-1} L_d([M_{I,J}]_d^{(d)}) &= \sum_{j=0}^{d-K} L_{d-K+1}\dots L_{d-1} L_dR_d \dots R_{j+1}(H_j)\\
        &=( L_{d-K+1}\dots L_{d-1} L_dR_d \dots R_{d-K+1}) \left( \sum_{j=0}^{d-K} R_{d-K} \dots R_{j+1}(H_j)\right) \label{eq:contract2}
    \end{align}
    where we recall that each element in the above sum of the form $R_{d-K} \dots R_{j+1}(H_j)$ belongs to the irrep $\lambda_j^{(d-K)}$ according to the result in \cref{eq:irrep_expression} from Proposition \ref{prop:lower_irreps}. Moreover the map $L_{d-K+1}\dots L_{d-1} L_dR_d \dots R_{d-K+1}$ acts as a scalar multiple of identity on such irreps as proved in Proposition \ref{prop:CD_DC}.

    Now, we show that applying the lowering map $K$ times on a monomial $[M_{I,J}]_d^{(d)}$  with $I$ and $J$ overlapping on $K$ modes yields zero. Combined with \cref{eq:contract2}, this implies that the monomial has zero support on the first $d+1-K$ irreps, thus proving the desired result.

    As shown in \cref{eq:contraction_degree}, we have that a single application of the lowering map results in a sum of degree $d-1$ monomials, i.e.
    \begin{align}
       C([M_{I,J}]_d^{(d)}) &=  \sum_{s=1}^m \alpha_s(J) \beta_s(I) \left( P_{k-1}(a^\dagger)^{J-e_s} a^{I-e_s}    P_{k-1} \right)\\
       &=\sum_{s=1}^m \alpha_s(J) \beta_s(I) [M_{J-e_s,I-e_s}]_{d-1}^{(d-1)}
    \end{align}

    Since each monomial of the form $[M_{J-e_s,I-e_s}]_{d-1}^{(d-1)}$ appearing in the above equation is of degree $d-1$ acting on $d-1$ particles, the  application of the lowering map again will further reduce its degree by one. Iteratively, one can see that after applying the lowering map $K$ times, we do indeed get zero, which concludes the proof.
\end{proof}

\newpage
\section{Simulation Techniques}\label{app:Simulation_Techniques}

Let ${M}_{\bm p, \bm q}$ be a monomial in creation annihilation operators of degree $k$ as defined in \cref{eq:monomial} 
\begin{equation*}
    {M}_{\bm p, \bm q} = a_{1}^{p_1\dagger} \dots a_{m}^{p_m\dagger} a_1^{q_1} \dots a_m^{q_m}
\end{equation*}

 with $\bm p, \bm q \in \Phi_m^k$ and consider the Fock state  $\ketbra{S}{S}$ with $n$ photons. 
In \cite{thomas_shedding_2025}, the authors show that the expectation value $f_U(\ketbra{S}{S},{M}_{\bm p, \bm q})$ of the form in \cref{eq:linear_form_n}  can be expressed as
\begin{equation}\label{eq:expValPerm}
f_U(\ketbra{S}{S},{M}_{\bm p, \bm q})  
    = \sum_{\substack{\bm \ell \in \Phi_m^k \\ \bm \ell \preccurlyeq \bm n}}\frac{\bm n!}{(\bm n - \bm \ell)!\bm\ell!^2} \per{U_{\bm p, \bm \ell}}^* \per{U_{\bm q, \bm \ell}},
\end{equation}
where $\bm \ell \preccurlyeq \bm n$ is the partial entry-wise order. From \cref{eq:expValPerm}, it is clear that several quantities can be leveraged to efficiently compute either exactly or approximately the expectation value, namely, the number of summands and the dimension and rank of $U_{\bm p, \bm \ell}$ and $U_{\bm q, \bm \ell}$. Finally,
we observe that $\bm p = \bm q$ allows us to recover the result of
\cite{mayer_counting_2011}.

    \subsection{Proof of \cref{simtech:direct_expansion}}

In order to prove \cref{simtech:direct_expansion}, we present different simulation techniques. First, based on \cref{eq:expValPerm}, we bound the number of summands required to approximate $f_U(\ketbra{S}{S},{M}_{\bm p, \bm q})$ via direct expansion.

\begin{proposition}[Cardinality of the summand]\label{prop:cardGoodSummands}
    Let $n \in \mathbb{N}, 1 \leq k \leq n$ and $m \geq n$. Then, for an occupation $\bm n \in \Phi_m^n$, it holds that
    \begin{equation}
        \left|\left\{\bm \ell \ | \ \bm \ell \in \Phi_m^k : \bm \ell \preccurlyeq \bm n\right\}\right| \leq \binom{ \min(k, n-k) + \occ S - 1}{\occ S-1}.
    \end{equation}
\end{proposition}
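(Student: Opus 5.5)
The plan is a direct counting argument in two steps, followed by taking the better of the two bounds. Write $r := \occ S$ and $\bm n = (n_1,\dots,n_m)$ for the occupation vector of $\ket{S}$. The starting observation is that the constraint $\bm \ell \preccurlyeq \bm n$ forces $\ell_i = 0$ whenever $n_i = 0$. So every admissible $\bm \ell$ is supported on the $r$ occupied modes of $S$, and the set to be counted injects into the set of weak compositions of $k$ into $r$ nonnegative parts.

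\textbf{First bound.} Forgetting the upper constraints $\ell_i \le n_i$, the stars-and-bars count of weak compositions of $k$ into $r$ parts gives
\begin{equation*}
\left|\left\{\bm \ell \in \Phi_m^k : \bm \ell \preccurlyeq \bm n\right\}\right| \le \binom{k + r - 1}{r-1}.
\end{equation*}

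\textbf{Second bound.} I will use the complementation map $\bm \ell \mapsto \bm n - \bm \ell$. Since $\bm \ell \preccurlyeq \bm n$ and $|\bm \ell| = k$, the image $\bm n - \bm \ell$ has nonnegative entries and total $n-k$. It also satisfies $\bm n - \bm \ell \preccurlyeq \bm n$ and is supported on the occupied modes. The map is an involution on vectors dominated by $\bm n$, hence a bijection between $\{\bm \ell \in \Phi_m^k : \bm \ell \preccurlyeq \bm n\}$ and $\{\bm \ell' \in \Phi_m^{n-k} : \bm \ell' \preccurlyeq \bm n\}$. Applying the same stars-and-bars argument to the second set bounds the cardinality by $\binom{n-k+r-1}{r-1}$.

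\textbf{Combining.} The function $x \mapsto \binom{x + r - 1}{r-1}$ is nondecreasing in $x \ge 0$ for fixed $r \ge 1$. The minimum of the two bounds is therefore $\binom{\min(k,n-k) + r - 1}{r-1}$, which is the claimed inequality. The hypotheses $m \ge n$ and $1 \le k \le n$ only guarantee that the objects are well defined; they play no further role. There is no real obstacle here. The only point needing care is checking that complementation preserves the dominance constraint and the support, so that it is a genuine bijection and not merely an injection in one direction.
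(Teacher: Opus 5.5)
Your proof is correct and has the same skeleton as the paper's: restrict to the $r=\mathrm{occ}(S)$ occupied modes, use the symmetry $k\leftrightarrow n-k$, and bound by stars-and-bars.

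The difference is in how the symmetry is obtained. The paper introduces the generating function $g_{\bm n}(x)=\prod_{i=1}^m(1+x+\dots+x^{n_i})$. It writes the exact count $[x^k]g_{\bm n}(x)$ as an inclusion--exclusion sum. It then asserts the palindromic symmetry $[x^k]g_{\bm n}=[x^{n-k}]g_{\bm n}$ without justification, and finally passes to the stars-and-bars bound on the occupied modes with $b=\min(k,n-k)$.

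Your complementation involution $\bm\ell\mapsto\bm n-\bm\ell$ is exactly the combinatorial content of that symmetry, and you prove it. Your argument is therefore a self-contained, more elementary version of the same proof. It also skips the exact formula, which the inequality never uses and whose displayed form in the paper contains typos (e.g.\ $(1+x)^{-m}$ where $(1-x)^{-m}$ is meant). What the generating-function route buys is an exact expression for the count, useful if one wants sharper estimates than stars-and-bars.

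One small simplification: you do not need monotonicity of $x\mapsto\binom{x+r-1}{r-1}$. The target $\binom{\min(k,n-k)+r-1}{r-1}$ literally coincides with one of your two bounds.
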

\begin{proof}
Indeed, there are at most $|\Phi_m^k| = \binom{m+k-1}{k}$ such matrix
permanents. However, this bound can be very loose, as for instance if $S = (n,
0, \dots, 0)$, only $\bm \ell = (k, 0, \dots, 0)$ satisfies $\bm \ell
\preccurlyeq S$. To improve on this, we count the number of $\bm \ell \in
\Phi_m^k$ satisfying $\bm \ell \preccurlyeq S$. The number of such $\bm\ell$
can be found from the generating function 
\begin{equation}\label{eq:Gn}
    \begin{aligned}
        g_{\bm n}(x)
        = \prod_{i=1}^m (1 + x + x^2 + \cdots + x^{n_i})
        =  \frac{1}{(1-x)^m} \prod_{i=1}^m (1- x^{n_i +1}).
    \end{aligned}
\end{equation} 
In particular, one is interested in the coefficient of $x^k$ in $g_{S}(x)$,
which we write $[x^k] g_{\bm n}(x)$. First, using the binomial theorem with
negative exponent, we find
\begin{equation}
    (1+x)^{-m} = \sum_{j \geq 0} \binom{m+j-1}{j}x^j.
\end{equation}
Second, the right-most product of \cref{eq:Gn} expands as 
\begin{equation}
    \prod_{i=1}^m (1- x^{n_i +1}) = \sum_{S \subseteq \{1, \cdots, m\}} (-1)^{|S|} x^{\sum_{i\in S} n_i +1}.
\end{equation}
Combining everything yields the number of $\bm \ell$ contributing to the sum of
\cref{eq:expValPerm}:
\begin{equation}\label{eq:coefficientOfTk}
    \begin{aligned}
        |\{\bm \ell\  |\ \bm \ell \in \Phi_m^k,\, \bm \ell \preccurlyeq S \} |
            & = [x^k]g_{S}(x)  \\
            & = [x^k] \sum_{j \geq 0} \binom{m+j-1}{j}x^j \sum_{S \subseteq \{1, \cdots, m\}} (-1)^{|S|} x^{\sum_{i\in S} x_i +1} \\
            & = \sum_{S \subseteq \{1, \cdots, m\}} (-1)^{|S|} \binom{k-\lpr\sum_{i\in S}n_i + 1\rpr + m - 1}{m-1}.    \end{aligned}
\end{equation}
Moreover, $[x^k]g_{S}(x)$ is symmetric with respect to $n$, namely
$[x^k]g_{S}(x) = [x^{n-k}]g_{S}(x)$ for $k>1$. To simplify the analysis, we
now upper bound \cref{eq:coefficientOfTk}. Let $\occ\s = |\{i \ | \ s_i >
0\}|$ be the number of non-empty modes in the occupancy vector $S$. As only
these modes give freedom in the choice of $\bm\ell$ (since $n_i = 0$ enforces
$l_i = 0$), letting $b = \min(k, n-k)$ we obtain:
\begin{equation}\label{eq:upperBoundTk}
    [x^k]g_{S}(x) \leq \binom{ b + \occ S - 1}{\occ S-1},
    % \sim \sqrt{\frac{b+s-1}{2\pi(s-1)b}} \exp{b\log(1 + \frac{s}{b}) + s\log(1 + \frac{b}{s})},
\end{equation}
with equality when $b = 1$. Interestingly, for
$k=n$, \cref{eq:coefficientOfTk} yields $[x^n]g(x) = 1$.
\end{proof}

We now present two simulation methods that can be used to compute exactly $f_U(\ketbra{S}{S},{M}_{\bm p, \bm q})$ in polynomial time:

\begin{simtech}[Ryser based]\label{prop:efficientCompExpValRyser}
    $f_U(\ketbra{S}{S},{M}_{\bm p, \bm q})$ can be computed exactly if $k= \mathcal{O}(1)$ or $\occ S = \mathcal{O}(\log n)$ and $k = \mathcal{O}(\log n)$ 
\end{simtech}

\begin{proof}
The matrices $U_{\bm p, \bm \ell}$ and $U_{\bm q, \bm \ell}$ are $k\times k$
matrices whose permanent can be computed exactly in time $\bigo{k2^k}$ using
Ryser's method \cite{ryser_combinatorial_1973}. The constraints we impose on $k$ ensures the runtime of Ryser's algorithm is polynomial. To conclude, in order for the bound of \cref{prop:cardGoodSummands} it be polynomial in $n$, resulting in polynomially-many matrix permanents to compute, one must ensure $k= \mathcal{O}(1)$ or or $\occ S = \mathcal{O}(\log n)$ and $k = \mathcal{O}(\log n)$ as claimed.
\end{proof}

\begin{simtech}[Barvinok based]\label{prop:efficientCompExpValBarvinok}
    $f_U(\ketbra{S}{S},{M}_{\bm p, \bm q})$ can be computed exactly if $k = \mathcal{O}(n)$ and $\occ S = \mathcal{O}(1)$ or $\loc{{M}_{\bm p, \bm q}} = \max(\occ{\bm p}, \occ{\bm p}) = \mathcal{O}(1)$.
\end{simtech}
\begin{proof}
Even for large $k$, it is still
possible to compute the permanent exactly if the matrix has small rank using
Barvinok's algorithm \cite{barvinok_two_1996}.

More precisely, Barvinok's algorithm allows one to compute the permanent of a $\rank(r)$ matrix $A \in \mathbb C^{n\times n}$ in time $n^{\mathcal{O}(r)}$. Now, observe that 
$\rank(U_{\bm p, \bm
    \ell}) = \min(\occ{\bm{p}}, \occ{\bm\ell}) \leq \min(\occ{\bm{p}}, \occ{\bm n}) $, as the number of nonzero entries in $\bm \ell$ is at
    most $\occ{\bm n}$,

\end{proof}

\newpage
\section{Initial state purities}\label{app:Initial_State_Purities}

    \subsection{Irrep purities of Fock states}\label{subsec:First_Irrep_Purity}

\begin{lemma}{(Irrep purities  for Fock states, adapted from \cite{mhiri_boson_2026}).}\label{lemma:puR_fock_2}
    Consider a Fock state $R$ with a total number of $n$ photons, i.e $R \in \Phi_m^n$. For every $0\leq k \leq n$, the term $g_k(\ketbra{R}{R})$ defined in \cref{eq:g_k_th} can be expressed as 
    \begin{equation}
        g_k(\ketbra{R}{R}) = ((n-k)!)^2 \sum_{\substack{|\boldsymbol{b}| = n-k\\0 \leq \boldsymbol{b}\leq R}} \prod_{i=1}^m \binom{R_i}{b_i}^2\;.
    \end{equation}

    The $k-$purity is then given by 
    \begin{equation}
    \left\| P_k^{(n)}\!\left(\ketbra{S}\right) \right\|_2^2 = \frac{1}{\alpha_{k,k+1,n}} \sum_{l=0}^k \frac{(-1)^{k-l}}{(k-l)! (k+l+ m-1)_{k-l}} g_l\;,
\end{equation}
where $(x)_p$ denotes the Pochhammer symbol defined as $(x)_p := x (x+1) \dots (x+p-1)$ and $\alpha_{k,k+1,n} = \frac{(n-k)! (m+n+k-1)!}{(m+2k-1)!}$.

In particular, 
\begin{equation*}
    \left\| P_1^{(n)}\!\left(\ketbra{S}\right) \right\|_2^2 = \binom{n+m}{n-1}^{-1} \left(\|S\|_2^2-\frac{n^2}{m} \right).
\end{equation*}
\end{lemma}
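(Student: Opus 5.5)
The plan is to compute the partial traces $\Tr_{n-k}[\ketbra{R}{R}]$ explicitly by iterating the lowering map $\Cm$ of \cref{eq:C_map}. Once we have them, the purity formula follows directly from \cref{th:recursive_irrep_norm_app}, and the $k=1$ case reduces to evaluating $g_0$ and $g_1$.

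\textbf{Step 1 (one lowering step).} Using $a_s\ket{R}=\sqrt{R_s}\ket{R-e_s}$ we get
\begin{equation*}
\Cm(\ketbra{R}{R})=\sum_{s=1}^m R_s\,\ketbra{R-e_s}{R-e_s}.
\end{equation*}
So $\Cm$ maps diagonal Fock projectors to nonnegative combinations of diagonal Fock projectors.

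\textbf{Step 2 (iterating $n-k$ times).} Iterating Step 1 $n-k$ times gives a sum over ordered sequences $(s_1,\dots,s_{n-k})$ of removed photons. Group the sequences by the multiset of removals $\boldsymbol b$, with $|\boldsymbol b|=n-k$ and $0\le \boldsymbol b\le R$. Each ordered sequence with removal vector $\boldsymbol b$ contributes the same weight, namely the product of falling factorials $\prod_i R_i!/(R_i-b_i)!$. The number of such sequences is the multinomial $(n-k)!/\prod_i b_i!$. Hence
\begin{equation*}
\Tr_{n-k}[\ketbra{R}{R}] = (n-k)!\sum_{\boldsymbol b}\prod_i\binom{R_i}{b_i}\,\ketbra{R-\boldsymbol b}{R-\boldsymbol b}.
\end{equation*}
Distinct $\boldsymbol b$ give distinct, hence orthogonal, Fock projectors. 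Squaring and tracing therefore gives exactly the stated $g_k$. This combinatorial bookkeeping (order-independence of the weight and the multinomial count) is the only delicate point, and I would check it by induction on $n-k$.

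\textbf{Step 3 (general $k$-purity).} Substitute $g_l(\ketbra{R}{R})$ into \cref{th:recursive_irrep_norm_app}.

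\textbf{Step 4 (the case $k=1$).} First compute the two needed values of $g$:
\begin{itemize}
\item For $g_0$, the only allowed vector is $\boldsymbol b=R$, so $g_0=(n!)^2$.
\item For $g_1$, the allowed vectors are $\boldsymbol b=R-e_j$ with $R_j>0$, each with $\prod_i\binom{R_i}{b_i}=R_j$. So $g_1=((n-1)!)^2\sum_j R_j^2=((n-1)!)^2\|S\|_2^2$.
\end{itemize}
The theorem with $k=1$ gives $\alpha_{1,2,n}^{-1}\bigl(g_1-g_0/m\bigr)$, since the Pochhammer factor is $(m)_1=m$. This equals $\alpha_{1,2,n}^{-1}((n-1)!)^2\bigl(\|S\|_2^2-n^2/m\bigr)$. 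Finally, with $\alpha_{1,2,n}=(n-1)!(m+n)!/(m+1)!$, the prefactor becomes $(n-1)!(m+1)!/(m+n)!=\binom{n+m}{n-1}^{-1}$, which is the claimed expression.
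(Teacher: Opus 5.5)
Your proposal is correct and follows the paper's proof. The $k=1$ case is obtained exactly as there: you specialize the irrep-norm theorem to $g_1-g_0/m$, evaluate $g_0=(n!)^2$ and $g_1=((n-1)!)^2\|S\|_2^2$, and simplify $\alpha_{1,2,n}^{-1}((n-1)!)^2=\binom{n+m}{n-1}^{-1}$. The one addition is that you derive the $g_k$ formula yourself, by iterating the lowering map and counting ordered removal sequences (weight $\prod_i R_i!/(R_i-b_i)!$ times the multinomial count $(n-k)!/\prod_i b_i!$), whereas the paper imports that formula from the cited reference; your derivation is sound.
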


\begin{proof}
Setting $k=1$ in \cref{th:recursive_irrep_norm_app}, the inner sum has only two terms.
Evaluating the Pochhammer factors $(k+l+m-1)_{k-l}$ at $(k,l)=(1,0)$ and $(1,1)$ gives
$(m)_1=m$ and $(m+1)_0=1$, so
\begin{equation*}
    \sum_{l=0}^{1}\frac{(-1)^{1-l}}{(1-l)!\,(1+l+m-1)_{1-l}}\,g_l
    = g_1 - \frac{1}{m}\,g_0 .
\end{equation*}
With $\alpha_{1,2,n}=\dfrac{(n-1)!\,(m+n)!}{(m+1)!}$ this yields
\begin{equation}\label{eq:P1-intermediate}
    \left\| P_1^{(n)}\!\left(\ketbra{S}\right) \right\|_2^2
    = \frac{(m+1)!}{(n-1)!\,(m+n)!}\left(g_1-\frac{g_0}{m}\right).
\end{equation}

It remains to evaluate $g_0$ and $g_1$ for the Fock state $S\in\Phi_m^n$ via \cref{lemma:puR_fock_2}.
For $k=0$, the constraints $|\boldsymbol b|=n$ and $\boldsymbol 0\le\boldsymbol b\le S$ together with $|S|=n$
force $\boldsymbol b=S$, whence $\prod_i\binom{S_i}{S_i}^2=1$ and
\begin{equation*}
    g_0 = (n!)^2 .
\end{equation*}
For $k=1$, $|\boldsymbol b|=n-1$ with $\boldsymbol b\le S$ forces $\boldsymbol b=S-\boldsymbol e_j$ for some mode $j$
with $S_j\ge 1$; then $\prod_i\binom{S_i}{b_i}^2=\binom{S_j}{S_j-1}^2=S_j^2$, and summing over $j$
(empty modes contribute nothing),
\begin{equation*}
    g_1 = ((n-1)!)^2\sum_{j=1}^m S_j^2 = ((n-1)!)^2\,\|S\|_2^2 .
\end{equation*}
Using $(n!)^2=n^2\,((n-1)!)^2$,
\begin{equation*}
    g_1-\frac{g_0}{m} = ((n-1)!)^2\left(\|S\|_2^2-\frac{n^2}{m}\right),
\end{equation*}
and substituting into \eqref{eq:P1-intermediate},
\begin{equation*}
    \left\| P_1^{(n)}\!\left(\ketbra{S}\right) \right\|_2^2
    = \frac{(m+1)!\,(n-1)!}{(m+n)!}\left(\|S\|_2^2-\frac{n^2}{m}\right)
    = \binom{n+m}{n-1}^{-1}\!\left(\|S\|_2^2-\frac{n^2}{m}\right). \qedhere
\end{equation*}
\end{proof}

\begin{theorem}[Purity of the top irrep of a Fock state]\label{th:top_irrep_fock_bound}
Let $S \in \Phi_m^n$. Then
\begin{equation}\label{eq:top_irrep_fock_bound}
    1-\frac{\|S\|_2^2}{m+2n-2}
    \;\leq\;
    \left\| P_n^{(n)}\!\left(\ketbra{S}\right) \right\|_2^2
    \;\leq\;
    1-\frac{\|S\|_2^2}{n(m+n-1)} \;.
\end{equation}
\end{theorem}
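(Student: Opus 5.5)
The plan is to avoid the explicit alternating sum of \cref{th:recursive_irrep_norm_app} and work instead with the ladder structure of \cref{prop:kernel_top_irrep,prop:lower_irreps,prop:CD_DC}. Write $\rho=\ketbra{S}{S}$ and $p_r:=\|P_r^{(n)}(\rho)\|_2^2$. Since $\sum_{r=0}^n p_r=\Tr[\rho^2]=1$, it suffices to show
\begin{equation*}
\frac{\|S\|_2^2}{n(m+n-1)}\;\le\;\sum_{r=0}^{n-1}p_r\;\le\;\frac{\|S\|_2^2}{m+2n-2}.
\end{equation*}

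\textbf{Step 1: adjointness of $\Cm$ and $\Dm$.} Viewed as maps between $W_n$ and $W_{n-1}$, $\Dm_n$ is the Hilbert--Schmidt adjoint of $\Cm_n$. This follows from cyclicity of the trace:
\begin{equation*}
\Tr\!\Big[\Big(\textstyle\sum_s a_s X a_s^\dagger\Big)^{\dagger} Y\Big]=\Tr\!\Big[X^\dagger \textstyle\sum_s a_s^\dagger Y a_s\Big],
\end{equation*}
and the sector projectors $P_n,P_{n-1}$ commute through appropriately. Consequently
\begin{equation*}
\|\Cm_n(\rho)\|_2^2=\Tr\!\left[\rho\,(\Dm_n\circ\Cm_n)(\rho)\right].
\end{equation*}

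\textbf{Step 2: diagonalise $\Dm_n\circ\Cm_n$ on the irreps.} By \cref{prop:CD_DC}, $\Dm_n\circ\Cm_n$ acts on $\lambda_r^{(n)}$ as the scalar $\beta_{r,n-1}=(n-r)(m+r+n-1)$. This scalar vanishes at $r=n$, which is consistent with \cref{prop:kernel_top_irrep}. The irreps are mutually orthogonal, so
\begin{equation*}
\|\Cm_n(\rho)\|_2^2=\sum_{r=0}^{n-1}(n-r)(m+r+n-1)\,p_r .
\end{equation*}
As a function of $r$ on $[0,n-1]$, $\beta_{r,n-1}$ is strictly decreasing, since its derivative is $-m-2r+1<0$. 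Its minimum is therefore $m+2n-2$ (at $r=n-1$) and its maximum is $n(m+n-1)$ (at $r=0$). Hence
\begin{equation*}
(m+2n-2)\sum_{r<n}p_r\;\le\;\|\Cm_n(\rho)\|_2^2\;\le\;n(m+n-1)\sum_{r<n}p_r .
\end{equation*}

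\textbf{Step 3: evaluate the contraction on a Fock state.} We have
\begin{equation*}
\Cm(\ketbra{S}{S})=\sum_s a_s\ketbra{S}{S}a_s^\dagger=\sum_{s:\,S_s\ge1}S_s\,\ketbra{S-e_s}{S-e_s}.
\end{equation*}
The states $\ket{S-e_s}$ are orthonormal and pairwise distinct, so the squared Hilbert--Schmidt norm is $\sum_s S_s^2=\|S\|_2^2$. This also matches $g_{n-1}$ in \cref{lemma:puR_fock_2}. Dividing the two inequalities of Step 2 by the respective constants and substituting $1-p_n=\sum_{r<n}p_r$ gives \cref{eq:top_irrep_fock_bound}.

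\textbf{Main obstacle.} The only delicate point is to check carefully that the eigenvalue of \cref{prop:CD_DC} applies with the right indices: $\Dm_k\circ\Cm_k$ on $\lambda_r^{(k)}$ with $k=n$ gives $\beta_{r,n-1}$. It also has to be confirmed that the adjointness identity holds with the sector projections in place, so that no boundary terms between $W_n$ and $W_{n-1}$ appear. The remaining steps are routine.
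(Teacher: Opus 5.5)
Your proposal is correct and is essentially the paper's proof: both identify $\|\Cm_n(\ketbra{S})\|_2^2=\|S\|_2^2$ with $\sum_{r}\beta_{r,n-1}\|P_r^{(n)}(\ketbra{S})\|_2^2$, using the Hilbert--Schmidt adjointness of $\Cm_n$ and $\Dm_n$ together with \cref{prop:CD_DC}, and then bracket the weights $\beta_{r,n-1}$ between $m+2n-2$ and $n(m+n-1)$. The differences are cosmetic: you apply adjointness once to the whole operator and use orthogonality of the decomposition of $W_n$, whereas the paper argues that the images $\Cm_n(O_r^{(n)})$ lie in distinct irreps of $W_{n-1}$ and are therefore orthogonal; and the ``obstacle'' you flag is harmless, since $\Cm$ and $\Dm$ automatically map between the fixed-photon sectors $W_n$ and $W_{n-1}$.
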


\begin{proof}
Write $O := \ketbra{S} \in W_n$, so that $P_r^{(n)}(O) = O_r^{(n)}$ is the component of $O$ on
$\lambda_r^{(n)}$ and $\|P_r^{(n)}(O)\|_2^2 = \Tr[(O_r^{(n)})^2]$. The proof rests on the
observation that $g_{n-1}(O)$ is a positive combination of the irrep purities of $O$, whose
weights are the eigenvalues $\beta_{r,n-1}$ of \cref{prop:CD_DC}, and that these weights are
sharply bracketed.

It follows from \cref{eq:C_map,eq:D_map} and the bosonic commutation relations that $\Dm$ is the
adjoint of $\Cm$ with respect to the Hilbert--Schmidt inner product, since for $A \in W_{n-1}$
and $B \in W_n$
\begin{equation}\label{eq:CD_adjoint}
    \left\langle \Cm_n(B) , A \right\rangle
    = \sum_{s=1}^m \Tr\!\left[a_s^\dagger B^\dagger a_s A\right]
    = \left\langle B , \Dm_n(A)\right\rangle \;.
\end{equation}
Moreover the lowering map preserves the irrep label: for $r<n$ we have
$\lambda_r^{(n)} = \Dm_n(\lambda_r^{(n-1)})$ by \cref{eq:irrep_expression}, so that
$\Cm_n(\lambda_r^{(n)}) = \Cm_n \circ \Dm_n(\lambda_r^{(n-1)}) \subseteq \lambda_r^{(n-1)}$ by
\cref{prop:CD_DC}, while $\Cm_n$ annihilates $\lambda_n^{(n)} = \rm{Ker}(\Cm_n)$ according to
\cref{eq:top_irrep_ker}. Decomposing $O = \sum_{r=0}^n O_r^{(n)}$ along \cref{eq:decomp_Wn}, the
operators $\Cm_n(O_r^{(n)})$ therefore lie in pairwise distinct irreps of $W_{n-1}$ and are
consequently pairwise orthogonal, whence, using \cref{eq:CD_adjoint} with $B = O_r^{(n)}$ and
$A = \Cm_n(O_r^{(n)})$ together with
$\Dm_n \circ \Cm_n|_{\lambda_r^{(n)}} = \beta_{r,n-1}\Id$ from \cref{prop:CD_DC},
\begin{equation}\label{eq:gnm1_expansion}
    g_{n-1}(O)
    = \sum_{r=0}^{n} \left\| \Cm_n\!\left(O_r^{(n)}\right)\right\|_2^2
    = \sum_{r=0}^{n} \beta_{r,n-1} \left\| P_r^{(n)}(O)\right\|_2^2 \;.
\end{equation}

Both sides of \cref{eq:gnm1_expansion} are explicit for a Fock state. On the one hand, since
$a_s\ket{S} = \sqrt{S_s}\ket{S-e_s}$, the map \cref{eq:C_map} gives
$\Cm_n(O) = \sum_{s=1}^m S_s \ketbra{S-e_s}$, where the states $\ket{S-e_s}$ with $S_s \geq 1$ are
pairwise orthogonal, so that
\begin{equation}\label{eq:gnm1_fock}
    g_{n-1}(O) = \sum_{s,t=1}^m S_sS_t\left|\braket{S-e_s|S-e_t}\right|^2 = \|S\|_2^2 \;.
\end{equation}
On the other hand the decomposition \cref{eq:decomp_Wn} is orthogonal, so the irrep purities of
$O$ sum to $\Tr[O^2]=1$.

It remains to bracket the weights $\beta_{r,n-1} = (n-r)(m+n+r-1)$. They vanish at $r=n$ and are
strictly decreasing in $r$, since $\partial_r\beta_{r,n-1} = 1-m-2r<0$, so that
$m+2n-2 = \beta_{n-1,n-1} \leq \beta_{r,n-1} \leq \beta_{0,n-1} = n(m+n-1)$ for $0 \leq r \leq n-1$.
Inserting these two inequalities into \cref{eq:gnm1_expansion}, whose term $r=n$ vanishes, and
using \cref{eq:gnm1_fock} together with
$\sum_{r=0}^{n-1}\|P_r^{(n)}(O)\|_2^2 = 1 - \|P_n^{(n)}(O)\|_2^2$, we obtain
\begin{equation*}
    (m+2n-2)\left(1-\left\|P_n^{(n)}(O)\right\|_2^2\right)
    \;\leq\; \|S\|_2^2 \;\leq\;
    n(m+n-1)\left(1-\left\|P_n^{(n)}(O)\right\|_2^2\right) \;,
\end{equation*}
and dividing the left inequality by $m+2n-2$, the right one by $n(m+n-1)$, and solving both for
$\|P_n^{(n)}(O)\|_2^2$ yields \cref{eq:top_irrep_fock_bound}.
\end{proof}

    \subsection{Irrep profile discussion}

We propose additional simulations and comments for the purity distribution of initial state in passive linear optics. First, we observe that the distribution of Fock state $\ket{R}$ irrep purities is determined by the occupation integer string $R = (R_1, \dots, R_m)$ such that $R \in \Phi_{m,n}$:

\begin{figure}[h!]
    \centering
    \includegraphics[width=0.89\linewidth]{plots/Fock_States/Fock_States_Purities_tikz.tex}
    \caption{\justifying Fock state purity distribution for $n=10$ particles in $m=2n$ modes. We only consider the Fock states that are all different under any permutation of the occupation.}
    \label{fig:Fock_State_Irrep}
\end{figure}

To derive the projection of a Fock state $\ket{R}$ over the different irreps, one can use the technique presented in \cref{app:2ndMoment_Calculus_Iterative_Projection} by considering $O = \ket{R}\bra{R}$. Notice that any permutation of the modes do not affect the result of \cref{th:recursive_irrep_norm_app}, only $\occ{R}$ affect the decomposition for a fixed number of particles $n$. In this work, we restrict the study of the second moment to the consideration of input Fock state with minimum and maximum bunching. We observe that for small irreps, the purities are sorted according to $||R||_2$, as illustrated by \cref{fig:Fock_States_Scatter_k_const}. Fock state with more bunching, i.e., with particles in the same modes, the maximum value for the norm of the purity is achieved for smaller irrep. In \cref{fig:Fock_States_Scatter_k_const}, we show that the extreme case is given by the state with maximum bunching where all the particles are in the same mode, and is achieved for the irrep $k \simeq 3n/4$.

\newpage

\begin{figure}[t!]
    \centering
    \includegraphics[width=0.89\linewidth]{plots/Fock_States/Fock_States_Scatter_k_const.tex}
    \caption{\justifying Fock state purity on the irreducible component $k=\lfloor n/4 \rfloor$ for different number of particles $n$, and $m=2n$ modes.}
    \label{fig:Fock_States_Scatter_k_const}
\end{figure}

However, as illustrated in \cref{fig:Fock_State_Irrep}, 
We observe that Fock state with bunching maximise their support on irreps  

\begin{figure}[h!]
    \centering
    \includegraphics[width=0.89\linewidth]{plots/Fock_States/Fock_States_Kstar_Scatter.tex}
    \caption{\justifying Irreducible components normalized index $k/n$ that maximize the Fock state purity $\|P_k(\ket{R}\bra{R})\|_2^2$ for different number of particles $n$, and $m=2n$ modes. The gradient color of the bars show every Fock state, according to their $2$-norm divided by the number of particles, that have maximum support on the same irrep.}
    \label{fig:Fock_States_Kstar_Scatter}
\end{figure}

\newpage

\newpage

% ─────────────────────────────────────────────
\section{Observable Study - Product of Number Operators}\label{app:Product_Number_Operators}
% ─────────────────────────────────────────────

    \subsection{Definition and properties:}

We consider the case of product of photon number operators:
\begin{equation}
    O = \prod_{i \in I} \hat{n}_i = \sum_{R \in \Phi_{m,n}} \left( \prod_{i \in I} R_i \right) \ket{R}\bra{R}
\end{equation}

The $2$-norm of this observable is given by:
\begin{equation}
    ||O||_2^2 = \sum_{R \in \Phi_{m,n}} \left( \prod_{i \in I} R_i \right)^2    
\end{equation}

And its infinite norm is given by:
\begin{equation}
    ||O||_\infty = \max_{R \in \Phi_{m,n}} \left( \prod_{i \in I} R_i \right)
\end{equation}

    \subsection{Irrep Decomposition:}

The $k$-application of the lowering map $\Cm^k$, introduced in \cref{eq:C_map}, on this operator is given by:
\begin{equation}
    \Cm^k(O) = \sum_{R \in \Phi_{m,n}} (\prod_{i \in I} R_j) \sum_{|b|=k, b \leq R} \left( \frac{k!}{b_1! \dots b_m!} \prod_{l=1}^m \frac{R_l!}{(R_l - b_l)!} \right) \ket{R-b}\bra{R-b}
\end{equation}

By using the change of variable $S = (S_1 \dots S_m)$, with $|S|=n-k$, and $R = S + b$, we obtain
\begin{equation}
    \Cm^k(O) = \sum_{S \in \Phi_{m,n-k}} \lambda_S \ket{S}\bra{S} \;,
\end{equation}
and $\lambda_S = \sum_{|b|=k} \left(\prod_{j \in J}(S_j + b_j)\right) \frac{k!}{b_1! \dots b_m!} \prod_{l=1}^m \frac{(S_i + b_i)!}{S_i!}$.

Because $\Cm^k(O)$ is diagonal, we have: 
\begin{equation}
    Tr[(\Cm^k(O))^2] = \sum_{S \in \Phi_{m,n-k}} \lambda_S^2
\end{equation}

\begin{figure}
    \centering
    \includegraphics[width=0.89\linewidth]{plots/Product_Photon_Number/Product_Number_Operators_Purities.tex}
    \caption{\justifying Purity evolution of a product of number operators. We consider the case where all the number operators are applied to the same mode, and the case where they are all applied to different modes. The number of terms in the product evolves logarithmically with $n$.}
    \label{fig:Purities_Evolution_Product_Number_Operators}
\end{figure}

Notice that, as explained by \cref{propop:degree_cutoff}, the product of $p \leq n$ number operators gives an operator of degree $p$, and thus only has support on the first $p+1$ irreps  of $W_n$.

    \subsection{Classical Simulation}

Due to the fact that a product of $p \leq n$ parity number operator is an observable of degree $p$, and thus only has support on the $p+1$ first irrep, a passive linear optics model based on such observable can always be simulate classically using \cref{simtech:exact_gsim}.

In addition, such observable can be written as a sum of monomial terms ${M}_{\bm p, \bm q}$ of degree $p$. The number of term depends on the number of distinct modes in the product. Indeed, one can write: 
\begin{equation}
     O = \prod_{i \in I} \hat{n}_i = \prod_{i \in I} \hat{a}_i \hat{a}^\dagger_i \,.
\end{equation}
Because the bosonic creation and annihilation operators satisfy the following properties
\begin{equation}
    [a_i,a_j^\dagger]= \delta_{i,j}, \qquad [a_i,a_j]=[a_i^\dagger,a_j^\dagger]=0 \, ,
\end{equation}
we have that the number of monomial terms depends on the number of different modes used in the product. When all the number operators are applied on different mode, every creation and annihilation operators commute and $O$ can be written as a monomial term of degree $p$. Therefore, one can use \cref{simtech:direct_expansion} to simulate the corresponding quantum model expectation value.

    \subsection{First Moment Studies:}

We consider the first moment of the observable expectation value, for different values of $p$.

\begin{figure}[h!]
    \centering
    \includegraphics[width=1.0\linewidth]{plots/Product_Photon_Number/First_Moment_Product_Number_Operators.tex}
    \caption{Study of the first moment for the expectation value of a product of photon number operators.}
    \label{fig:First_Moment_Product_Number_Operators}
\end{figure}

    \subsection{Second Moment Studies:}

To study the second moment, we propose to verify numerically the decay of the second moment given in \cref{eq:sec_m_exp}, and recalled here:
\begin{equation*}
    \e[U \sim U(m)]{f_U(\rho,O)^2} =  \sum_{k=0}^n \frac{\|P_k^{(n)}(\rho)\|_2^2 \|P_k^{(n)}(O)\|_2^2}{d_{k}^{(n)}} \,.
\end{equation*}

In order to make sure that a polynomial decay is not only due to the first irreps, which would allow to apply \cref{simtech:approx_gsim}, we propose to also study the higher terms of the second moment:
\begin{equation}
    \sum_{k=k_0}^n \frac{\|P_k^{(n)}(\rho)\|_2^2 \|P_k^{(n)}(O)\|_2^2}{d_{k}^{(n)}} \,,
\end{equation}
for different scaling of $k_0$, with $k_0=1$ corresponding to the entire second moment. In order to make the observable efficient on a quantum processor, we normalized the observable by the square of its infinite norm. We compute the evolution of this quantity of interest, and we provide the behaviour diagnosis through the regression match of each curve, with corresponding R2 value:

\begin{table*}[h!]
\centering
\begin{tabular}{|c|c|c|c|c|c|c|c|c|c|}
    \hline
    \multirow{3}{*}{\diagbox[width=1.6cm, height=1.4cm]{\small$p$}{\small$O$}}
    & \multirow{3}{*}{} 
    & \multicolumn{4}{c|}{$\hat{n}_1^p$} 
    & \multicolumn{4}{c|}{$\prod_{i=1}^p \hat{n}_i$} 
    \\ \hhline{~|~--------|}

     % --- Sub-header row: cols 1-2 empty (covered by multirow), cols 3-14 alternate ---
    & Input
    & \multicolumn{2}{c|}{$k_0=1$} 
    & \multicolumn{2}{c|}{$k_0=\lfloor\log_2 n \rfloor$}
    & \multicolumn{2}{c|}{$k_0=1$} 
    & \multicolumn{2}{c|}{$k_0=\lfloor\log_2 n \rfloor$}
    \\ \hhline{~|~--------|}
    
    % --- Sub-header row: cols 1-2 empty (covered by multirow), cols 3-14 alternate ---
    & 
    & Decay & $R^2$
    & Decay & $R^2$
    & Decay & $R^2$
    & Decay & $R^2$
    \\ \hline
    %------------------------------------------------------------------
    \multirow{4}{*}{$2$}
    & \multicolumn{1}{@{}c@{}|}{\cellcolor{tabred!60}Min}
    & \small\textcolor{blue!70!black}{\textbf{P}} $(n^{-3.90})$
    & \small\textcolor{blue!70!black}{\textbf{0.9998}}
    & \multicolumn{1}{@{}c@{}|}{\cellcolor{gray!15}\small ---}
    & \multicolumn{1}{@{}c@{}|}{\cellcolor{gray!15}\small ---}
    & \small\textcolor{blue!70!black}{\textbf{P}} $(n^{-3.73})$
    & \small\textcolor{blue!70!black}{\textbf{0.9985}}
    & \multicolumn{1}{@{}c@{}|}{\cellcolor{gray!15}\small ---}
    & \multicolumn{1}{@{}c@{}|}{\cellcolor{gray!15}\small ---}
    \\ \hhline{~|---------|} 
    & \multicolumn{1}{@{}c@{}|}{\cellcolor{tabred!60}Min}
    & \small\textcolor{red!70!black}{E}
    & \small\textcolor{red!70!black}{0.8936}
    & \multicolumn{1}{@{}c@{}|}{\cellcolor{gray!15}\small ---}
    & \multicolumn{1}{@{}c@{}|}{\cellcolor{gray!15}\small ---}
    & \small\textcolor{red!70!black}{E}
    & \small\textcolor{red!70!black}{0.9038}
    & \multicolumn{1}{@{}c@{}|}{\cellcolor{gray!15}\small ---}
    & \multicolumn{1}{@{}c@{}|}{\cellcolor{gray!15}\small ---}
    \\ \hhline{~|---------|} 
    & \multicolumn{1}{@{}c@{}|}{\cellcolor{tabblue!60}Max}
    & \small\textcolor{blue!70!black}{\textbf{P}} $(n^{-3.66})$
    & \small\textcolor{blue!70!black}{\textbf{0.9987}}
    & \multicolumn{1}{@{}c@{}|}{\cellcolor{gray!15}\small ---}
    & \multicolumn{1}{@{}c@{}|}{\cellcolor{gray!15}\small ---}
    & \small\textcolor{blue!70!black}{\textbf{P}} $(n^{-3.41})$
    & \small\textcolor{blue!70!black}{\textbf{0.9931}}
    & \multicolumn{1}{@{}c@{}|}{\cellcolor{gray!15}\small ---}
    & \multicolumn{1}{@{}c@{}|}{\cellcolor{gray!15}\small ---}
    \\ \hhline{~|---------|} 
    & \multicolumn{1}{@{}c@{}|}{\cellcolor{tabblue!60}Max}
    & \small\textcolor{red!70!black}{E}
    & \small\textcolor{red!70!black}{0.9052}
    & \multicolumn{1}{@{}c@{}|}{\cellcolor{gray!15}\small ---}
    & \multicolumn{1}{@{}c@{}|}{\cellcolor{gray!15}\small ---}
    & \small\textcolor{red!70!black}{E}
    & \small\textcolor{red!70!black}{0.9239}
    & \multicolumn{1}{@{}c@{}|}{\cellcolor{gray!15}\small ---}
    & \multicolumn{1}{@{}c@{}|}{\cellcolor{gray!15}\small ---}
    \\ \hline
    %------------------------------------------------------------------
    \multirow{4}{*}{$\lfloor\log n \rfloor$}
    & \multicolumn{1}{@{}c@{}|}{\cellcolor{tabred!60}Min}
    & \small\textcolor{blue!70!black}{P} $(n^{-9.33})$
    & \small\textcolor{blue!70!black}{0.9420}
    & \small\textcolor{blue!70!black}{P} $(n^{-16.21})$
    & \small\textcolor{blue!70!black}{0.9424}
    & \small\textcolor{blue!70!black}{P} $(n^{-8.90})$
    & \small\textcolor{blue!70!black}{0.9488}
    & \small\textcolor{blue!70!black}{P} $(n^{-13.93})$
    & \small\textcolor{blue!70!black}{0.9431}
    \\ \hhline{~|---------|} 
    & \multicolumn{1}{@{}c@{}|}{\cellcolor{tabred!60}Min}
    & \small\textcolor{red!70!black}{\textbf{E}}
    & \small\textcolor{red!70!black}{\textbf{0.9579}}
    & \small\textcolor{red!70!black}{\textbf{E}}${}$
    & \small\textcolor{red!70!black}{\textbf{0.9472}}
    & \small\textcolor{red!70!black}{\textbf{E}}
    & \small\textcolor{red!70!black}{\textbf{0.9528}}
    & \small\textcolor{red!70!black}{\textbf{E}}
    & \small\textcolor{red!70!black}{\textbf{0.9607}}
    \\ \hhline{~|---------|}
    & \multicolumn{1}{@{}c@{}|}{\cellcolor{tabblue!60}Max}
    & \small\textcolor{blue!70!black}{P} $(n^{-8.31})$
    & \small\textcolor{blue!70!black}{0.9422}
    & \small\textcolor{blue!70!black}{\textbf{P}} $(n^{-11.07})$
    & \small\textcolor{blue!70!black}{\textbf{0.9414}}
    & \small\textcolor{blue!70!black}{P} $(n^{-7.92})$
    & \small\textcolor{blue!70!black}{0.9472}
    & \small\textcolor{blue!70!black}{P} $(n^{-8.79})$
    & \small\textcolor{blue!70!black}{0.9462}
    \\ \hhline{~|---------|} 
    & \multicolumn{1}{@{}c@{}|}{\cellcolor{tabblue!60}Max}
    & \small\textcolor{red!70!black}{\textbf{E}}
    & \small\textcolor{red!70!black}{\textbf{0.9623}}
    & \small\textcolor{red!70!black}{E}
    & \small\textcolor{red!70!black}{0.9287}
    & \small\textcolor{red!70!black}{\textbf{E}}
    & \small\textcolor{red!70!black}{\textbf{0.9579}}
    & \small\textcolor{red!70!black}{\textbf{E}}${}$
    & \small\textcolor{red!70!black}{\textbf{0.9489}}
    \\ \hline
    %------------------------------------------------------------------
    \multirow{4}{*}{$\lfloor \sqrt{n} \rfloor$}
    & \multicolumn{1}{@{}c@{}|}{\cellcolor{tabred!60}Min}
    & \small\textcolor{blue!70!black}{P} $(n^{-10.30})$
    & \small\textcolor{blue!70!black}{0.9328}
    & \small\textcolor{blue!70!black}{P} $(n^{-16.39})$
    & \small\textcolor{blue!70!black}{0.9445}
    & \small\textcolor{blue!70!black}{P} $(n^{-9.81})$
    & \small\textcolor{blue!70!black}{0.9401}
    & \small\textcolor{blue!70!black}{P} $(n^{-14.39})$
    & \small\textcolor{blue!70!black}{0.9460}
    \\ \hhline{~|---------|}
    & \multicolumn{1}{@{}c@{}|}{\cellcolor{tabred!60}Min}
    & \small\textcolor{red!70!black}{\textbf{E}}
    & \small\textcolor{red!70!black}{\textbf{0.9749}}
    & \small\textcolor{red!70!black}{\textbf{E}}
    & \small\textcolor{red!70!black}{\textbf{0.9541}}
    & \small\textcolor{red!70!black}{\textbf{E}}
    & \small\textcolor{red!70!black}{\textbf{0.9711}}
    & \small\textcolor{red!70!black}{\textbf{E}}
    & \small\textcolor{red!70!black}{\textbf{0.9738}}
    \\ \hhline{~|---------|} 
    & \multicolumn{1}{@{}c@{}|}{\cellcolor{tabblue!60}Max}
    & \small\textcolor{blue!70!black}{P} $(n^{-9.12})$
    & \small\textcolor{blue!70!black}{0.9329}
    & \small\textcolor{blue!70!black}{\textbf{P}} $(n^{-11.28})$
    & \small\textcolor{blue!70!black}{\textbf{0.9466}}
    & \small\textcolor{blue!70!black}{P} $(n^{-8.69})$
    & \small\textcolor{blue!70!black}{0.9382}
    & \small\textcolor{blue!70!black}{P} $(n^{-9.24})$
    & \small\textcolor{blue!70!black}{0.9505}
    \\ \hhline{~|---------|}  
    & \multicolumn{1}{@{}c@{}|}{\cellcolor{tabblue!60}Max}
    & \small\textcolor{red!70!black}{\textbf{E}}
    & \small\textcolor{red!70!black}{\textbf{0.9772}}
    & \small\textcolor{red!70!black}{E}
    & \small\textcolor{red!70!black}{0.9410}
    & \small\textcolor{red!70!black}{\textbf{E}}
    & \small\textcolor{red!70!black}{\textbf{0.9741}}
    & \small\textcolor{red!70!black}{\textbf{E}}
    & \small\textcolor{red!70!black}{\textbf{0.9687}}
    \\ \hline
\end{tabular}
\caption{\justifying Decay regime of the normalised second moment $\mathbb{E}[f_U^2]$ for $n\in\{2,\ldots,35\}$ ($m=2n$ modes). Each cell gives two rows: power-law fit (\textcolor{blue!70!black}{P}, exponent $n^{\hat\alpha}$, log--log $R^2$) and exponential fit (\textcolor{red!70!black}{E}, semi-log $R^2$). The dominant regime (higher $R^2$) is \textbf{bold}. Min: $\ket{1,\ldots,1}$; Max: $\ket{n,0,\ldots,0}$. Cells --- indicate $p=\Theta(1)$ where a varying $k_0$ is not applicable.}
\label{tbl:Decay_Product_Number_Operator}
\end{table*}

To characterise the decay of a quantity of interest with the system size $n$, we fit the numerical data with two competing models: a power-law decay $y \sim n^\alpha$ and an exponential decay $y \sim e^{-\beta n}$. Both fits are performed by ordinary least-squares linear regression in logarithmic space so that the exponents of the coefficients of the polynomial and exponential decay models are extracted as the slope of a straight-line fit. The quality of each fit is assessed by the coefficient of determination $R^2$, computed in the same logarithmic space as the regression: $R^2 = 1 - \mathrm{S}_\mathrm{res}/\mathrm{S}_\mathrm{tot}$, where $\mathrm{S}_\mathrm{res}$ is the residual sum of squares and $\mathrm{S}_\mathrm{tot}$ is the total variance of $\log y$ around its mean. An $R^2$ close to $1$ indicates that the chosen model captures the observed decay almost perfectly on the plotted range, while a low $R^2$ signals a poor fit. Comparing the two $R^2$ values provides a data-driven criterion for distinguishing polynomial from exponential scaling: whichever model achieves the higher $R^2$ is retained as the better description of the finite-$n$ behaviour.

In \cref{tbl:Decay_Product_Number_Operator}, we only the present the scaling for $k_0=1$ and $k_0=\lfloor \log_2n \rfloor$. Considering $k_0 = \lfloor \sqrt{n} \rfloor$ would only be relevant for the case $p \geq \lfloor \sqrt{n} \rfloor$. However, we can see that the second moment is already vanishing exponentially when considering $k_0=\lfloor \log_2n \rfloor$.

\begin{figure}[h!]
    \centering
    \includegraphics[width=1.0\linewidth]{plots/Product_Photon_Number/Second_Moments.tex}
    \caption{\justifying Study of the second moment, for different evolution of the number of terms in the product, corresponding to \cref{tbl:Decay_Product_Number_Operator}.}
    \label{fig:Product_Photon_Number_Second_Moments}
\end{figure}

\subsection{Cause of Barren Plateaus}\label{subsec:Cause_BP_Product_Number}

We notice that, in the particular case of a product of photon number operators, the presence of Barren Plateaus when $p = \Omega (\log_2 n)$ is due to the fact that the purity distribution of the input state $\{ P_k^{(n)}(\rho) \}_{k=0}^n$ is not aligned with the purity distribution of the observable $\{ P_k^{(n)}(O) \}_{k=0}^n$. This difference mainly comes from the fact that Fock states have mainly support on the last irreps (see \cref{app:Initial_State_Purities}) while a product of $p \leq n$ photon number operators only have support on the first $p+1$ irreps. In the case where the operators are applied on a large number of different modes, a necessary condition to avoid \cref{simtech:direct_expansion}, the 2-norm can decrease and thus all the purities can also decrease. In the cases chosen here $p=\lfloor \log_2 n \rfloor$ and $p = \lfloor \sqrt{n} \rfloor$, the values of:
\begin{equation}
    \sum_{k=k_0}^n \frac{\|P_k^{(n)}(O)\|_2^2}{d_{k}^{(n)}} \,
\end{equation}
increases and that, even when considering only large irreps ($k_0 = \log_2 n$) as illustrated in \cref{fig:Product_Photon_Numbers_Cause_BP}. 

\begin{figure}[h!]
    \centering
    \includegraphics[width=0.99\linewidth]{plots/Product_Photon_Number/sec3_observable_only.tex}
    \caption{\justifying Evolution of the sum of the observable purities divided by the dimension of the irrep, from irrep $k_0$ to irrep $n$.}
    \label{fig:Product_Photon_Numbers_Cause_BP}
\end{figure}

\newpage
% ─────────────────────────────────────────────
\section{Observable Study - Single Support Observable}\label{app:Single_Support_Obs}
% ─────────────────────────────────────────────

    \subsection{Definition and properties:}

Based on \cref{propop:monomial_single_irrep_support}, we now that a monomial $[{M}_{\bm p, \bm q}]_d^{(n)}$ of degree $d\leq n$ acting on $n$ particles such that $p_i \neq p_j \;,\forall i,j$ has full support on the highest irrep $\lambda_d^{(n)}$. We propose the study of the following observable that respect such condition and only has support on the irrep $p$:
\begin{equation}
    \prod_{k=1}^p \frac{\hat{a}_{i_k}^\dagger \hat{a}_{j_k} + \hat{a}_{j_k}^\dagger \hat{a}_{i_k}}{\sqrt{2}} \, .
\end{equation}

Such observable can be constructed using Beam Splitters between every $p$ pairs of modes and by applying a photon number measurement on all of them. Because the observable only has support on the irrep $k$, we do not have to compute its irrep decomposition, and we can simply consider its two norm:
\begin{equation}
    Tr \left[ O O^\dagger \right] = \frac{1}{2^p} Tr \left[ \prod_{k=1}^p 2 \hat{n}_{i_k} \hat{n}_{j_k} + \hat{n}_{i_k} + \hat{n}_{j_k}  \right] \,.
\end{equation}
As a result, we have (considering that every indices in the expression of the observable operator are different):
\begin{equation}
    Tr \left[ O O^\dagger \right] = \frac{1}{2^p} \sum_{q=0}^p Tr[\hat{n}_{i_1} \dots \hat{n}_{i_{p+q}}] = \frac{1}{2^p}  \sum_{q=0}^p \binom{p}{q} \frac{(m+n-1)!}{(n-p-q)!(m+p+q-1)!}
\end{equation}

To ensure that such choice of observable does not affect shot noise, we normalize this observable by the square of its infinite norm $||O||^2_\infty$. For the observable $\prod_{k=1}^K \hat n_{l_k}$, where the indices $(l_1,\dots,l_K)$ are all distinct, the maximal eigenvalue on the fixed $n$-photon subspace is $ \lambda_{\max}(N)=(c+1)^r c^{K-r}$, where the integers $c$ and $r$ are defined by the Euclidean division of $n$ by $K$: 
$$n=Kc+r,\qquad 0\le r<K.$$

Where $(c+1)^r c^{K-r}$ is the maximal product obtained by distributing the $n$ photons as evenly as possible among the $K$ selected modes, namely: $r$ modes contain $c+1$ photons, and $K-r$ modes contain $c$ photons. Equivalently, the optimal occupation numbers of the $K$ selected modes are $$(c+1,\dots,c+1,c,\dots,c),$$ with exactly $r$ occurrences of $c+1$ and $K-r$ occurrences of $c$.

    \subsection{Classical Simulation}

The single support observable proposed here is not a simple monomial of degree $p$, and its decomposition into a sum of monomial would require to consider a large number of terms. This prevent the use of \cref{simtech:direct_expansion} for $p = \Omega(\log n)$. For constant degree $p$, the observable only has support on the irrep $\lambda_p^{(n)}$ of polynomial size, allowing the use of \cref{simtech:exact_gsim}. 

\newpage

\begin{figure}[h!]
    \centering
    \includegraphics[width=0.60\linewidth]{plots/Single_Irrep_Observable/Single_Irrep_Purities.tex}
    \caption{Evolution of the $2$-norm and infinite norm of the single irrep observable of degree $\log_2 n$, with $n$ the number of particles and $m=2n$.}
    \label{fig:Single_Irrep_Purities}
\end{figure}

    \subsection{Second Moment Studies:}

To study the second moment, we propose to verify numerically the decay of the second moment given in \cref{eq:sec_m_exp}, and recalled here:
\begin{equation*}
    \e[U \sim U(m)]{f_U(\rho,O)^2} =  \sum_{k=0}^n \frac{\|P_k^{(n)}(\rho)\|_2^2 \|P_k^{(n)}(O)\|_2^2}{d_{k}^{(n)}} \,.
\end{equation*}

\begin{figure}[b!]
    \centering
    \includegraphics[width=1.0\linewidth]{plots/Single_Irrep_Observable/Second_Moments.tex}
    \caption{\justifying Study of the second moment, for different evolution of the degree, corresponding to \cref{tbl:Decay_Single_Irrep_Observable}.}
    \label{fig:Single_Irrep_Observable_Second_Moments}
\end{figure}

In the particular case of the single support observable, we don't have to compute the purity distribution of the observable, and the second moment can be written as:
\begin{equation}
    \e[U \sim U(m)]{f_U(\rho,O)^2} =  \frac{\|P_p^{(n)}(\rho)\|_2^2 ||O||^2_2}{d_{p}^{(n)}} \,.
\end{equation}

As a result, we can do larger simulation, and we don't have to consider a decomposition of the second moment to understand if Barren Plateaus can be avoided simply due to the contribution of polynomially large irreps. We compute the evolution of the second moment, and we provide the behaviour
diagnosis through the regression match of each curve, with corresponding R2 value:

\begin{table*}[h!]
\centering
\begin{tabular}{|c|c|c|c|c|c|c|}
    \hline
    \multirow{3}{*}{} 
    & \multicolumn{6}{c|}{$\prod_{k=1}^p (\hat{a}_{i_k}^\dagger \hat{a}_{j_k} + \hat{a}_{j_k}^\dagger \hat{a}_{i_k}) / \sqrt{2} $} 
    \\ \hhline{|~------|}
     % --- Sub-header row: cols 1-2 empty (covered by multirow), cols 3-14 alternate ---
    Input
    & \multicolumn{2}{c|}{$p = 2$} 
    & \multicolumn{2}{c|}{$p = \log_2 n$}
    & \multicolumn{2}{c|}{$p = \sqrt{n}$} 
    \\ \hhline{|~------|}
    
    % --- Sub-header row: cols 1-2 empty (covered by multirow), cols 3-14 alternate ---
    Bunching
    & Decay & $R^2$
    & Decay & $R^2$
    & Decay & $R^2$
    \\ \hline
    %------------------------------------------------------------------
    \multicolumn{1}{@{}c@{}|}{\cellcolor{tabred!60}Min}
    & \small\textcolor{blue!70!black}{\textbf{P}} $(n^{-6.90})$
    & \small\textcolor{blue!70!black}{\textbf{0.9998}}
    & \small\textcolor{blue!70!black}{\textbf{P}} $(n^{-11.40})$
    & \small\textcolor{blue!70!black}{\textbf{0.9558}}
    & \small\textcolor{blue!70!black}{P} $(n^{-29.66})$
    & \small\textcolor{blue!70!black}{0.9824}
    \\ \hline
    \multicolumn{1}{@{}c@{}|}{\cellcolor{tabred!60}Min}
    & \small\textcolor{red!70!black}{E}
    & \small\textcolor{red!70!black}{0.8929}
    & \small\textcolor{red!70!black}{E}
    & \small\textcolor{red!70!black}{0.9198}
    & \small\textcolor{red!70!black}{\textbf{E}}
    & \small\textcolor{red!70!black}{\textbf{0.9824}}
    \\ \hline
    \multicolumn{1}{@{}c@{}|}{\cellcolor{tabblue!60}Max}
    & \small\textcolor{blue!70!black}{\textbf{P}} $(n^{-3.38})$
    & \small\textcolor{blue!70!black}{\textbf{0.9926}}
    & \small\textcolor{blue!70!black}{\textbf{P}} $(n^{-11.40})$
    & \small\textcolor{blue!70!black}{\textbf{0.9558}}
    & \small\textcolor{blue!70!black}{P} $(n^{-18.48})$
    & \small\textcolor{blue!70!black}{0.9817}
    \\ \hline
    \multicolumn{1}{@{}c@{}|}{\cellcolor{tabblue!60}Max}
    & \small\textcolor{red!70!black}{E}
    & \small\textcolor{red!70!black}{0.9288}
    & \small\textcolor{red!70!black}{E}
    & \small\textcolor{red!70!black}{0.9135}
    & \small\textcolor{red!70!black}{\textbf{E}}
    & \small\textcolor{red!70!black}{\textbf{0.9824}}
    \\ \hline
\end{tabular}
\caption{\justifying Decay regime of the normalised second moment $\mathbb{E}[f_U^2]$ for $n\in\{2,\ldots,35\}$ for the case $p=2$, and $n \leq 130$ with $\log_2 n \in \mathbb{N}$ for $p= \log_2 n$, and  $n \leq 130$ with $\sqrt{n} \in \mathbb{N}$ for $p= \sqrt{n} $ ($m=2n$ modes). The dominant regime (higher $R^2$) is \textbf{bold}. Min: $\ket{1,\ldots,1}$; Max: $\ket{n,0,\ldots,0}$.}
\label{tbl:Decay_Single_Irrep_Observable}
\end{table*}

To characterise the decay of a quantity of interest with the system size $n$, we fit the numerical data with two competing models: a power-law decay $y \sim n^\alpha$ and an exponential decay $y \sim e^{-\beta n}$. Both fits are performed by ordinary least-squares linear regression in logarithmic space so that the exponents of the coefficients of the polynomial and exponential decay models are extracted as the slope of a straight-line fit. The quality of each fit is assessed by the coefficient of determination $R^2$, computed in the same logarithmic space as the regression: $R^2 = 1 - \mathrm{S}_\mathrm{res}/\mathrm{S}_\mathrm{tot}$, where $\mathrm{S}_\mathrm{res}$ is the residual sum of squares and $\mathrm{S}_\mathrm{tot}$ is the total variance of $\log y$ around its mean. An $R^2$ close to $1$ indicates that the chosen model captures the observed decay almost perfectly on the plotted range, while a low $R^2$ signals a poor fit. Comparing the two $R^2$ values provides a data-driven criterion for distinguishing polynomial from exponential scaling: whichever model achieves the higher $R^2$ is retained as the better description of the finite-$n$ behaviour.

    \subsection{Cause of Barren Plateaus}\label{subsec:Cause_BP_Single_Support_Obs}

We notice that, as for the case of number operator products in \cref{app:Product_Number_Operators}, the presence of Barren Plateaus when $p = \Omega (\log_2 n)$ is due to the fact that the purity distribution of the input state $\{ P_k^{(n)}(\rho) \}_{k=0}^n$ is vanishing exponentially and is too small with respect to the $2$-norm of the observable . This difference mainly comes from the fact that Fock states have mainly support on the last irreps (see \cref{app:Initial_State_Purities}) while the single irrep observable only has support on the irrep $p$. As explained previously, this observable can be made with $p$ beamsplitters and $2p$ number operators applied on different modes. As a result, increasing too much $p$ will decrease the $2$-norm of the observable, and even if the purities of the input state and observable could match, the later would not be able to compensate the irrep dimension.

Notice that, studying the value of:
\begin{equation}\label{eq:Norm2_dim_Single_Irrep_Obs}
    \sum_{k=0}^n \frac{\|P_k^{(n)}(O)\|_2^2}{d_{k}^{(n)}} = \frac{||O||^2_2}{d_{p}^{(n)}} \, .
\end{equation}
This quantity can increase even when considering an evolution of $p$ with $n$ that makes the irrep dimension exponential with respect to the system size. In \cref{fig:Single_Irrep_Observable_Cause_BP}, we give several examples of linear evolution of $p$ with respect to $n$. By studying the evolution of \cref{eq:Norm2_dim_Single_Irrep_Obs} for different choices of $\alpha$ with $p=n/\alpha$, we notice an inflexion point showing that the $2$-norm of the normalised observable can only compensate the irrep dimension for small values of $\alpha$. Even if this corresponds to cases where the irreps are exponential with respect to the system size, it does not match the irreps on which Fock states are concentrated.

\begin{figure}
    \centering
    \includegraphics[width=0.5\linewidth]{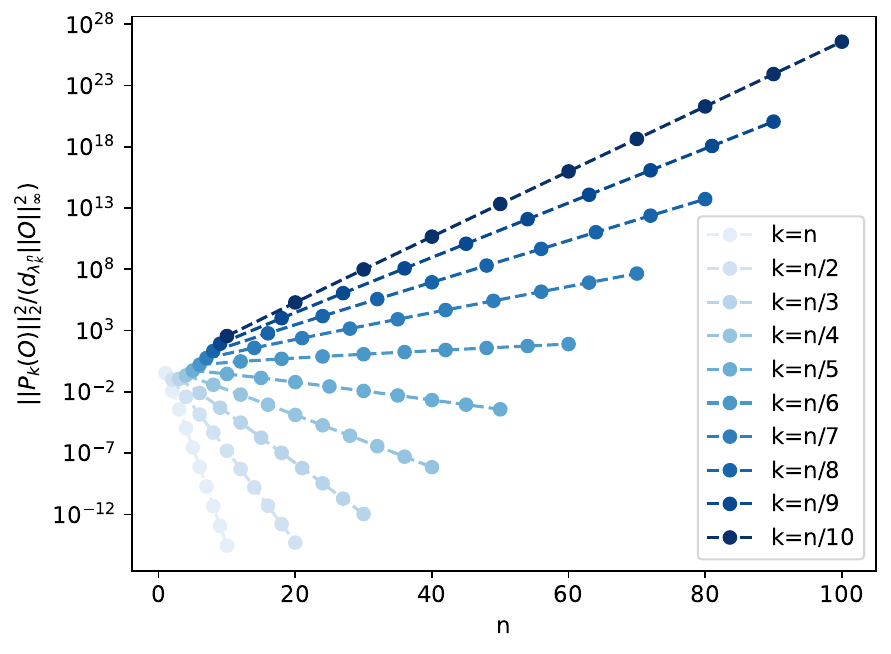}
    \caption{\justifying Evolution of \cref{eq:Norm2_dim_Single_Irrep_Obs} when considering linear evolutions of $p$ with $n$.}
    \label{fig:Single_Irrep_Observable_Cause_BP}
\end{figure}
\newpage
% ─────────────────────────────────────────────
\section{Observable Study - Parity and Number-phase Operators}\label{app:Parity_Operators}
% ─────────────────────────────────────────────

    \subsection{Definition and properties:}

In this Section, we consider a very general set of parity measurement. We call $g_p(\{\hat{n}_i \}_{i \in I})$, a polynomial function of number operators of degree $p \leq n$, with $n$ the number of particles. In this Section, we focus on observables of the form:
\begin{equation}\label{eq:Poly_Parity_Operator}
    O = (-1)^{g_d(\{\hat{n}_i \}_{i \in I})} = \sum_{R \in \Phi_m^n} (-1)^{g_d(\{R_i \}_{i \in I})} \ket{R}\bra{R}\;.
\end{equation}

All eigenvalues are bounded by $1$, so we have:
\begin{equation}
    \|O||_\infty \leq 1 \quad \text{and} \quad \|O\|_2^2 \leq |\Phi_m^n| \;.
\end{equation}

In this work we consider two cases. The first one:
\begin{equation}\label{eq:Parity_Measurement_Observable}
    O = \prod_{i \in I} (-1)^{\hat{n}_i} \,,
\end{equation}
corresponds to the parity measurement observable used in bosonic generative modelling \cite{kurkin_universality_2026, kolarovszki_generative_2026, gottlieb_efficient_2026}. The eigenvalues are exactly $\pm 1$, and the $2$-norm is given by $|\Phi_m^n|$.

The second case we consider, in order to avoid de-quantization techniques such as presented in the later part of this appendix, that we call \emph{number-phase observable}, is given by:
\begin{equation}
    O = \prod_{i \in I} (-1)^{\alpha_i \hat{n}_i^2} \,,
\end{equation}
with $\alpha_i = 1/r_i$, and $r_i$ the $i^\text{th}$ prime number. This observable physically corresponds a layer of a Kerr gates, and can be achieved by a post-processing of photon- number measurement outcomes applied on the subset of modes $I$. 

    \subsection{Irrep Decomposition:}

Applying the change of variable $S = R - b$ ($|b|=k$, $b \leq R$), the $k$-fold  application of the lowering map $\Cm$, introduced in \cref{eq:C_map}, on an operator of the form of \cref{eq:Poly_Parity_Operator} is given by:
\begin{equation}
    \Cm^k(O) = \sum_{S \in \Phi_{m}^{n-k}} \lambda_S \ket{S}\bra{S},\qquad
  \lambda_S = \sum_{\substack{b \in \mathbb{Z}_{\geq 0}^m\\|b|=k}} (-1)^{g_d(\{S_i + b_i \}_{i \in I})}\,\frac{k!}{\prod_l b_l!}\prod_l\frac{(S_l+b_l)!}{S_l!}.
\end{equation}

Since $\Cm^k(O)$ is diagonal: $\mathrm{Tr}[(\Cm^k(O))^2] = \sum_{S \in \Phi_{m}^{n-k}} |\lambda_S|^2$.

\begin{figure}
    \centering
    \includegraphics[width=0.99\linewidth]{plots/Parity_Operators/Sum_Poly_Parity_Operators_Purities.tex}
    \caption{Irrep purities profile of linear and quadratic number-phase observables normalized by their infinity norm, for different values of photons $n=2,4,8,16,32$ in $m=2n$ modes.}
    \label{fig:Parity_Operators_Purities}
\end{figure}

    \subsection{Classical Simulation}\label{subsec:Classical_Sim_Parity_Op}

        \subsubsection*{Kerr gate decomposition}

The particular case of parity measurement observable described by \cref{eq:Parity_Measurement_Observable} can be efficiently simulated classically. As highlighted in \cite{kurkin_universality_2026}, the expectation value physically corresponds to that of a layer phase-shifter of the form $\bra{\phi}\varphi_{n}(U)^\dagger \, \prod_{j=1}^p e^{\phi_j \hat{n}_j} \, \varphi_{n}(U)\ket{\psi}$, with $\varphi_{n}(U)^\dagger \, \prod_{j=1}^p e^{\phi_j \hat{n}_j} \, \varphi_{n}(U)$ a unitary matrix. Such quantity can be approximated by \cref{simtech:Expected_value_LO}.

In general, for any polynomial $g_d(\{\hat{n}_i \}_{i \in I})$, we can decompose the corresponding parity operator given by \cref{eq:Poly_Parity_Operator} using a discrete Fourier transform. We extent the decomposition used in \cite{upreti_exponentially-improved_2026, jin-wei_superposition_1996}: 

\begin{lemma}[Inspired by \cite{upreti_exponentially-improved_2026}]\label{lem:Decomposition_Phase_Shifter}
   Consider a product of $p$ number operator $M_p = \prod_{i=1}^p \hat n_i^{s_i}$ of degree $p\leq n$. We have that:
    \begin{equation}\label{eq:poly_phase_decomp}
        \exp\!\left( i\pi\,\frac{a}{b} \prod_{i=1}^{p} \hat{n}_i^{\,s_i} \right)
        \;=\; \sum_{j_1,\dots,j_p=0}^{B-1} g_{j_1\cdots j_p}\,
              \prod_{i=1}^{p} \exp\!\left( i\,\frac{2\pi j_i}{B}\,\hat{n}_i \right),
    \end{equation}
    with coefficients given by the $p$-dimensional inverse discrete Fourier transform:
    \begin{equation}\label{eq:poly_phase_coeffs}
        g_{j_1\cdots j_p}
        \;=\; \frac{1}{B^{\,p}} \sum_{n_1,\dots,n_p=0}^{B-1}
              \exp\!\left( i\pi\,\frac{a}{b} \prod_{i=1}^{p} n_i^{\,s_i} \right)
              \prod_{i=1}^{p} \exp\!\left( -i\,\frac{2\pi j_i}{B}\,n_i \right),
    \end{equation}
    where $B=b$ when $\tfrac{a}{b}\prod_i n_i^{s_i}$ is invariant modulo $2$ under every shift $n_i\mapsto n_i+b$, and $B=2b$ otherwise. 
\end{lemma}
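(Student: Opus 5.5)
The plan is to reduce the operator identity to a scalar identity on each Fock basis state. Every operator involved is diagonal in the Fock basis. The left-hand side is a function of the commuting number operators $\hat n_1,\dots,\hat n_p$. The right-hand side is a linear combination of products of phase shifters $\exp(i 2\pi j_i \hat n_i/B)$, which are also diagonal. So it suffices to check that for every occupation tuple $(n_1,\dots,n_p)\in\mathbb{Z}_{\geq 0}^p$ we have
\begin{equation*}
h(n_1,\dots,n_p):=\exp\!\left(i\pi\tfrac{a}{b}\textstyle\prod_i n_i^{s_i}\right)=\sum_{j_1,\dots,j_p=0}^{B-1} g_{j_1\cdots j_p}\prod_{i=1}^p e^{i 2\pi j_i n_i/B}.
\end{equation*}
Throughout we take $a/b$ rational, as implicitly assumed, so that $h$ is a periodic function.

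The first key step is to establish periodicity of $h$ in each variable with period $B$. Shifting $n_i\mapsto n_i+b$ and expanding $(n_i+b)^{s_i}$ binomially, the exponent changes by $\tfrac{a}{b}\big((n_i+b)^{s_i}-n_i^{s_i}\big)\prod_{l\neq i}n_l^{s_l}$. Every term of $(n_i+b)^{s_i}-n_i^{s_i}$ contains a factor $b$, so this change equals $a$ times an integer, and the phase changes by $e^{i\pi a\cdot\text{integer}}$, which is a sign $\pm1$.

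If that integer times $a$ is always even, which is precisely the hypothesis ``invariant modulo $2$ under shifts by $b$'', then $h$ is $b$-periodic and we set $B=b$. Otherwise, shifting by $2b$ gives $(n_i+2b)^{s_i}-n_i^{s_i}$, which is divisible by $2b$, so the phase change is $e^{i\pi a\cdot 2\cdot\text{integer}}=1$, and $B=2b$ works. This bookkeeping is the only non-routine point, and I expect it to be the main (mild) obstacle. In particular one must make sure the divisibility argument holds uniformly over all the other coordinates $n_l$; it does, since they enter only as integer multipliers.

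The second step is that, once $h$ is $B$-periodic on $\mathbb{Z}^p$, it is determined by its values on the torus $(\mathbb{Z}/B\mathbb{Z})^p$. Its $p$-dimensional discrete Fourier expansion $h(\mathbf n)=\sum_{\mathbf j}\hat h_{\mathbf j}e^{2\pi i\,\mathbf j\cdot\mathbf n/B}$ then holds for every $\mathbf n$, with $\hat h_{\mathbf j}$ given exactly by the inverse-DFT formula \cref{eq:poly_phase_coeffs}. This follows from the orthogonality relation $\frac1B\sum_{n=0}^{B-1}e^{2\pi i (j-j')n/B}=\delta_{jj'}$, applied coordinatewise; substituting the coefficient formula and summing over $\mathbf j$ first reproduces $h$ via $\frac{1}{B^p}\sum_{\mathbf j}e^{2\pi i\mathbf j\cdot(\mathbf n-\mathbf n')/B}=\delta_{\mathbf n\equiv\mathbf n' \bmod B}$.

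Finally, we lift back to operators by spectral calculus. Since $e^{i2\pi j_i\hat n_i/B}\ket{R}=e^{i2\pi j_iR_i/B}\ket{R}$, both sides of \cref{eq:poly_phase_decomp} act identically on every Fock state. Because Fock states form a basis, this proves the operator identity, and it holds on each $\mathcal F_n$ and hence on all of $\mathcal F$.
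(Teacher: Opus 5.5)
Your proposal is correct. It follows exactly the route the lemma implicitly relies on: you reduce to a scalar identity on Fock states, obtain $B$-periodicity in each $n_i$ from the binomial expansions of $(n_i+b)^{s_i}$ and $(n_i+2b)^{s_i}$, and finish with discrete Fourier inversion on $(\mathbb{Z}/B\mathbb{Z})^p$. The paper gives no separate proof; it takes the statement from Upreti et al.\ and only names the coefficients as an inverse DFT, so your argument fills in that omitted justification along the same lines.
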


More specifically, in the numerical simulations we consider the observable constructed from the unitary operator 
corresponding to a product of quadratic number-phase shifts,
\begin{equation}\label{eq:quadratic_obs}
    U_{\boldsymbol{r}}
    =
    \prod_{i=1}^{p}
    \exp\!\left(
        i\frac{\pi}{r_i}\hat n_i^2
    \right),
\end{equation}
where \(r_i > 2\) is a prime number.

By adapting the results from \cref{lem:Decomposition_Phase_Shifter} 
to this this operator, we get

\begin{equation}\label{eq:unitary_fourier_decomp}
   U_{\bm r}
    =
    \sum_{\bm q\in\mathcal Q_1\times\cdots\times\mathcal Q_p}
    \left( \prod_{i=1}^{p}c_{i,q_i} \right)\,
    \exp\!\left(
         i \pi
        \sum_{i=1}^{p}
        \frac{q_i}{r_i}\hat n_i
    \right),
\end{equation}
where
\begin{equation}
     \mathcal Q_i = \{1,3,\ldots,2r_i-1\},
 \end{equation}
 and
\begin{equation} \label{eq:prime_product_fourier_coefficients}
    c_{i,q}
    =
    \frac{1}{2r_i}
    \sum_{s=0}^{B_i-1}
    \exp\!\left(     \frac{i\pi}{r_i}s^2 -\frac{ i \pi }{r_i}qs \right).
\end{equation}
The decomposition in \cref{eq:unitary_fourier_decomp} contains exactly
\begin{equation}
    N_{\mathrm F}=  \prod_{i=1}^{p}r_i = \exp\!\left[(1+o(1))p\log p\right],
\end{equation}
non zero Fourier coefficients, and their\(\ell_1\)-norm is
\begin{equation}\label{eq:l1_norm}
    \sum_{\bm q}|c_{\bm q}|    =\sqrt{\prod_{i=1}^{p}r_i}   = \sqrt{N_{\mathrm F}}.
\end{equation}

While each linear phase shifter can be approximation using \cref{simtech:Transition_Approx_LO}, the number of terms in the expansion scales exponentially in $n$ for $p$ growing with $n$. Instead of computing each term in the Fourier expansion individually, one may consider importance sampling. However, one can easily show that, in this particular case, the number of samples needed will be lower bounded by the \(\ell_1\)-norm of the Fourier coefficient vector, which also grows exponentially, as derived in \cref{eq:l1_norm}. In addition, since the Fourier coefficients norm is the same, truncation methods based on Fourier decay are also ruled out.

        \subsubsection*{Max bunching case}

The maximally bunched input $\rho_n = |n,0,\dots,0\rangle\!\langle n,0,\dots,0|$ deserves a separate comment, because the expectation value of the quadratic number-phase observable can be estimated classically to any inverse-polynomial additive accuracy in polynomial time, for a worst-case interferometer and without any large-irrep truncation. The reason is that a passive linear-optical interferometer $V\in U(m)$ acts on this input as
\begin{equation}
  \varphi_n(V)\,|n,0,\dots,0\rangle
  = \frac{1}{\sqrt{n!}}\Big(\sum_{j=1}^{m}V_{j1}a_j^{\dagger}\Big)^{\!n}|0\rangle ,
\end{equation}
that is, as a single power of a single linear combination of creation operators. Expanding the power multinomially yields exactly one term per measurement outcome, so no interference between distinct many-body paths survives and no permanent appears. The photon-number distribution follows exactly the one of the multinomial distribution $S\sim\mathrm{Mult}(n;q_1,\dots,q_m)$, the distribution of counts from $n$ independent trials, where each trial lands in one of $m$ categories with probabilities $q_1, \dots, q_m$, and
with $q_j=|V_{j1}|^{2}$, which is to say the $n$ photons populate the output modes independently. \\

Because the observable is diagonal in the Fock basis and bounded, its expectation value is an ordinary classical expectation over this distribution, and can be obtained in two ways. The first is sampling: a multinomial distribution can be sampled from classically efficiently, its marginal on the $p$ measured modes is again multinomial once the remaining modes are lumped into a single category, and the observable takes values in $[-1,1]$, so Hoeffding's inequality gives additive accuracy $\epsilon$ with probability $1-\delta$ with $O(\epsilon^{-2}\log(1/\delta))$ samples, independently of $n$, $m$ and $p$. The second is deterministic: writing the observable as $h(s)=\mathrm{Re}\prod_{i=1}^{p}e^{\mathrm{i}\pi s_i^{2}/r_i}$ exhibits it as a product over modes, so its expectation can be accumulated mode by mode by a dynamic program whose running state is merely the number of photons placed so far, at a cost of $O(pn^{2})$ arithmetic operations. \\

The second route that generalizes for states with fixed number of occupied modes. For any input with $k$ occupied modes the same mode-by-mode accumulation still applies, provided the running state records how many photons each occupied input mode has emitted so far, separately for the amplitude and for its conjugate; its size is then at most $\prod_{i=1}^{k}(n_i+1)^{2}$, with $n_i$ the input occupations. This is polynomial in whenever $k$ is constant. It is also where the argument stops: the minimally bunched input $|1,\dots,1,0,\dots,0\rangle$ has indeed $k=n$ occupied modes, so keeping track of all coefficients requires $O(2^{n})$ space, and its output amplitudes are proportional to matrix permanents. A polynomially decaying high-irrep tail is therefore necessary but not sufficient for classical intractability, and inputs with a constant number of occupied modes cannot populate the regime in which barren plateaus are absent and no efficient classical simulation is known.

    \subsection{Second Moment Studies:}

To study the second moment, we propose to verify numerically the decay of the second moment given in \cref{eq:sec_m_exp}, and recalled here:
\begin{equation*}
    \e[U \sim U(m)]{f_U(\rho,O)^2} =  \sum_{k=0}^n \frac{\|P_k^{(n)}(\rho)\|_2^2 \|P_k^{(n)}(O)\|_2^2}{d_{k}^{(n)}} \,.
\end{equation*}

In order to make sure that a polynomial decay is not only due to the first irreps, which would allow to apply \cref{simtech:approx_gsim}, we propose to also study separately the higher terms of the second moment defined in \cref{eq:irrep_simulation_tail}:
\begin{equation}
    \Delta_{k_0}(\rho, O) = \sum_{k=k_0}^n \frac{\|P_k^{(n)}(\rho)\|_2^2 \|P_k^{(n)}(O)\|_2^2}{d_{k}^{(n)}} \,,
\end{equation}
for different scaling of $k_0$, with $k_0=1$ corresponding to the variance of the model. In order to make the observable efficient to estimate on a quantum processor, we normalize it by the square of its infinity norm. We compute the evolution $\Delta_{k_0}(\rho, O)$ for varying number of photons and plot the results in \cref{fig:Second_Moments_Sum_Parity_Operators}, and we provide the behaviour diagnosis through the regression match of each curve, with corresponding R2 value in \cref{tbl:Decay_Parity_Operators}.

\newpage

\begin{table*}[h!]
\centering
\begin{tabular}{|c|c|c|c|c|c|c|c|c|c|c|c|c|c|}
    \hline
    \multirow{3}{*}{\diagbox[width=1.6cm, height=1.4cm]{\small$p$}{\small$O$}}
    & \multirow{3}{*}{} 
    & \multicolumn{6}{c|}{$\displaystyle\prod_{i=1}^p (-1)^{\hat{n}_i}$} 
    & \multicolumn{6}{c|}{$\displaystyle\prod_{i=1}^p (-1)^{\alpha_i \hat{n}_i^2}$} 
    \\ \hhline{~|~------------|}

    & Input
    & \multicolumn{2}{c|}{$k_0=1$} 
    & \multicolumn{2}{c|}{$k_0=\lfloor\log_2 n \rfloor$}
    & \multicolumn{2}{c|}{$k_0=\lfloor \sqrt{n} \rfloor$}
    & \multicolumn{2}{c|}{$k_0=1$} 
    & \multicolumn{2}{c|}{$k_0=\lfloor\log_2 n \rfloor$}
    & \multicolumn{2}{c|}{$k_0=\lfloor \sqrt{n} \rfloor$}
    \\ \hhline{~|~------------|}
    
    & Bunching
    & Decay & $R^2$
    & Decay & $R^2$
    & Decay & $R^2$
    & Decay & $R^2$
    & Decay & $R^2$
    & Decay & $R^2$
    \\ \hline
    %------------------------------------------------------------------
    \multirow{4}{*}{$2$}
    & \multicolumn{1}{@{}c@{}|}{\cellcolor{tabred!60}Min}
    & \small\textcolor{blue!70!black}{\textbf{P}} $(n^{-1.49})$
    & \small\textcolor{blue!70!black}{$\textbf{0.9359}$}
    & \small\textcolor{blue!70!black}{\textbf{P}} $(n^{-7.23})$
    & \small\textcolor{blue!70!black}{$\textbf{0.9557}$}
    & \small\textcolor{blue!70!black}{P} $(n^{-8.01})$
    & \small\textcolor{blue!70!black}{$0.9486$}
    & \small\textcolor{blue!70!black}{\textbf{P}} $(n^{-1.33})$
    & \small\textcolor{blue!70!black}{$\textbf{0.9735}$}
    & \small\textcolor{blue!70!black}{\textbf{P}} $(n^{-7.28})$
    & \small\textcolor{blue!70!black}{$\textbf{0.9537}$}
    & \small\textcolor{blue!70!black}{P} $(n^{-8.12})$
    & \small\textcolor{blue!70!black}{$0.9468$}
    \\ \hhline{|~|-------------|} 
    & \multicolumn{1}{@{}c@{}|}{\cellcolor{tabred!60}Min}
    & \small\textcolor{red!70!black}{E}
    & \small\textcolor{red!70!black}{$0.7064$}
    & \small\textcolor{red!70!black}{E}
    & \small\textcolor{red!70!black}{$0.9426$}
    & \small\textcolor{red!70!black}{E}
    & \small\textcolor{red!70!black}{$0.9654$}
    & \small\textcolor{red!70!black}{E}
    & \small\textcolor{red!70!black}{$0.7806$}
    & \small\textcolor{red!70!black}{E}
    & \small\textcolor{red!70!black}{$0.9405$}
    & \small\textcolor{red!70!black}{E}
    & \small\textcolor{red!70!black}{$0.9649$}
    \\ \hhline{|~|-------------|} 
    & \multicolumn{1}{@{}c@{}|}{\cellcolor{tabblue!60}Max}
    & \small\textcolor{blue!70!black}{\textbf{P}} $(n^{0.03})$
    & \small\textcolor{blue!70!black}{$\textbf{0.8178}$}
    & \small\textcolor{blue!70!black}{\textbf{P}} $(n^{-1.64})$
    & \small\textcolor{blue!70!black}{$\textbf{0.8881}$}
    & \small\textcolor{blue!70!black}{\textbf{P}} $(n^{-2.02})$
    & \small\textcolor{blue!70!black}{$\textbf{0.9101}$}
    & \small\textcolor{blue!70!black}{\textbf{P}} $(n^{-0.08})$
    & \small\textcolor{blue!70!black}{$\textbf{0.7308}$}
    & \small\textcolor{blue!70!black}{\textbf{P}} $(n^{-1.90})$
    & \small\textcolor{blue!70!black}{$\textbf{0.8911}$}
    & \small\textcolor{blue!70!black}{\textbf{P}} $(n^{-2.34})$
    & \small\textcolor{blue!70!black}{$\textbf{0.9274}$}
    \\ \hhline{|~|-------------|} 
    & \multicolumn{1}{@{}c@{}|}{\cellcolor{tabblue!60}Max}
    & \small\textcolor{red!70!black}{E}
    & \small\textcolor{red!70!black}{$0.6072$}
    & \small\textcolor{red!70!black}{E}
    & \small\textcolor{red!70!black}{$0.8221$}
    & \small\textcolor{red!70!black}{E}
    & \small\textcolor{red!70!black}{$0.9078$}
    & \small\textcolor{red!70!black}{E}
    & \small\textcolor{red!70!black}{$0.4577$}
    & \small\textcolor{red!70!black}{E}
    & \small\textcolor{red!70!black}{$0.7799$}
    & \small\textcolor{red!70!black}{E}
    & \small\textcolor{red!70!black}{$0.8833$}
    \\ \hline
    %%
    %------------------------------------------------------------------
    \multirow{4}{*}{$\lfloor \log_2 n \rfloor$}
    & \multicolumn{1}{@{}c@{}|}{\cellcolor{tabred!60}Min}
    & \small\textcolor{blue!70!black}{\textbf{P}} $(n^{-2.80})$
    & \small\textcolor{blue!70!black}{$\textbf{0.9685}$}
    & \small\textcolor{blue!70!black}{\textbf{P}} $(n^{-7.66})$
    & \small\textcolor{blue!70!black}{$\textbf{0.9547}$}
    & \small\textcolor{blue!70!black}{P} $(n^{-8.35})$
    & \small\textcolor{blue!70!black}{$0.9546$}
    & \small\textcolor{blue!70!black}{\textbf{P}} $(n^{-1.32})$
    & \small\textcolor{blue!70!black}{$\textbf{0.9940}$}
    & \small\textcolor{blue!70!black}{\textbf{P}} $(n^{-7.14})$
    & \small\textcolor{blue!70!black}{$\textbf{0.9514}$}
    & \small\textcolor{blue!70!black}{P} $(n^{-7.94})$
    & \small\textcolor{blue!70!black}{$0.9430$}
    \\ \hhline{|~|-------------|} 
    & \multicolumn{1}{@{}c@{}|}{\cellcolor{tabred!60}Min}
    & \small\textcolor{red!70!black}{E}
    & \small\textcolor{red!70!black}{$0.8465$}
    & \small\textcolor{red!70!black}{E}
    & \small\textcolor{red!70!black}{$0.9403$}
    & \small\textcolor{red!70!black}{E}
    & \small\textcolor{red!70!black}{$0.9656$}
    & \small\textcolor{red!70!black}{E}
    & \small\textcolor{red!70!black}{$0.8575$}
    & \small\textcolor{red!70!black}{E}
    & \small\textcolor{red!70!black}{$0.9458$}
    & \small\textcolor{red!70!black}{E}
    & \small\textcolor{red!70!black}{$0.9673$}
    \\ \hhline{|~|-------------|} 
    & \multicolumn{1}{@{}c@{}|}{\cellcolor{tabblue!60}Max}
    & \small\textcolor{blue!70!black}{\textbf{P}} $(n^{-1.22})$
    & \small\textcolor{blue!70!black}{$\textbf{0.8837}$}
    & \small\textcolor{blue!70!black}{\textbf{P}} $(n^{-2.15})$
    & \small\textcolor{blue!70!black}{$\textbf{0.9052}$}
    & \small\textcolor{blue!70!black}{\textbf{P}} $(n^{-2.45})$
    & \small\textcolor{blue!70!black}{$\textbf{0.9513}$}
    & \small\textcolor{blue!70!black}{P} $(n^{-0.02})$
    & \small\textcolor{blue!70!black}{$0.0113$}
    & \small\textcolor{blue!70!black}{\textbf{P}} $(n^{-1.72})$
    & \small\textcolor{blue!70!black}{$\textbf{0.8958}$}
    & \small\textcolor{blue!70!black}{\textbf{P}} $(n^{-2.11})$
    & \small\textcolor{blue!70!black}{$\textbf{0.9180}$}
    \\ \hhline{|~|-------------|} 
    & \multicolumn{1}{@{}c@{}|}{\cellcolor{tabblue!60}Max}
    & \small\textcolor{red!70!black}{E}
    & \small\textcolor{red!70!black}{$0.8319$}
    & \small\textcolor{red!70!black}{E}
    & \small\textcolor{red!70!black}{$0.8036$}
    & \small\textcolor{red!70!black}{E}
    & \small\textcolor{red!70!black}{$0.8917$}
    & \small\textcolor{red!70!black}{E}
    & \small\textcolor{red!70!black}{$0.1059$}
    & \small\textcolor{red!70!black}{E}
    & \small\textcolor{red!70!black}{$0.8078$}
    & \small\textcolor{red!70!black}{E}
    & \small\textcolor{red!70!black}{$0.8963$}
    \\ \hline
    %------------------------------------------------------------------
    \multirow{4}{*}{$\lfloor \sqrt n \rfloor$}
    & \multicolumn{1}{@{}c@{}|}{\cellcolor{tabred!60}Min}
    & \small\textcolor{blue!70!black}{\textbf{P}} $(n^{-3.15})$
    & \small\textcolor{blue!70!black}{$\textbf{0.9778}$}
    & \small\textcolor{blue!70!black}{\textbf{P}} $(n^{-7.84})$
    & \small\textcolor{blue!70!black}{$\textbf{0.9601}$}
    & \small\textcolor{blue!70!black}{P} $(n^{-8.51})$
    & \small\textcolor{blue!70!black}{$0.9476$}
    & \small\textcolor{blue!70!black}{P} $(n^{-1.35})$
    & \small\textcolor{blue!70!black}{$0.9950$}
    & \small\textcolor{blue!70!black}{\textbf{P}} $(n^{-7.14})$
    & \small\textcolor{blue!70!black}{$\textbf{0.9513}$}
    & \small\textcolor{blue!70!black}{P} $(n^{-7.93})$
    & \small\textcolor{blue!70!black}{$0.9438$}
    \\ \hhline{|~|-------------|} 
    & \multicolumn{1}{@{}c@{}|}{\cellcolor{tabred!60}Min}
    & \small\textcolor{red!70!black}{E}
    & \small\textcolor{red!70!black}{$0.8987$}
    & \small\textcolor{red!70!black}{E}
    & \small\textcolor{red!70!black}{$0.9525$}
    & \small\textcolor{red!70!black}{E}
    & \small\textcolor{red!70!black}{$0.9640$}
    & \small\textcolor{red!70!black}{E}
    & \small\textcolor{red!70!black}{$0.8650$}
    & \small\textcolor{red!70!black}{E}
    & \small\textcolor{red!70!black}{$0.9456$}
    & \small\textcolor{red!70!black}{E}
    & \small\textcolor{red!70!black}{$0.9676$}
    \\ \hhline{|~|-------------|} 
    & \multicolumn{1}{@{}c@{}|}{\cellcolor{tabblue!60}Max}
    & \small\textcolor{blue!70!black}{P} $(n^{-1.51})$
    & \small\textcolor{blue!70!black}{$0.9042$}
    & \small\textcolor{blue!70!black}{\textbf{P}} $(n^{-2.32})$
    & \small\textcolor{blue!70!black}{$\textbf{0.9478}$}
    & \small\textcolor{blue!70!black}{\textbf{P}} $(n^{-2.59})$
    & \small\textcolor{blue!70!black}{$\textbf{0.9314}$}
    & \small\textcolor{blue!70!black}{P} $(n^{-0.03})$
    & \small\textcolor{blue!70!black}{$0.0373$}
    & \small\textcolor{blue!70!black}{\textbf{P}} $(n^{-1.71})$
    & \small\textcolor{blue!70!black}{$\textbf{0.8942}$}
    & \small\textcolor{blue!70!black}{\textbf{P}} $(n^{-2.10})$
    & \small\textcolor{blue!70!black}{$\textbf{0.9218}$}
    \\ \hhline{|~|-------------|} 
    & \multicolumn{1}{@{}c@{}|}{\cellcolor{tabblue!60}Max}
    & \small\textcolor{red!70!black}{E}
    & \small\textcolor{red!70!black}{$0.9146$}
    & \small\textcolor{red!70!black}{E}
    & \small\textcolor{red!70!black}{$0.8687$}
    & \small\textcolor{red!70!black}{E}
    & \small\textcolor{red!70!black}{$0.8927$}
    & \small\textcolor{red!70!black}{E}
    & \small\textcolor{red!70!black}{$0.1711$}
    & \small\textcolor{red!70!black}{E}
    & \small\textcolor{red!70!black}{$0.8059$}
    & \small\textcolor{red!70!black}{E}
    & \small\textcolor{red!70!black}{$0.8979$}
    \\ \hline
\end{tabular}
\caption{\justifying Decay regime of the normalised second moment $\mathbb{E}[f_U^2]$ for $n\in\{2,\ldots,35\}$ ($m=2n$ modes). Each cell gives two rows: power-law fit (\textcolor{blue!70!black}{P}, exponent $n^{\hat\alpha}$, log--log $R^2$) and exponential fit (\textcolor{red!70!black}{E}, semi-log $R^2$). When a clear dominant regime can be highlighted (higher $R^2$), the text is \textbf{bold}. Min: $\ket{1,\ldots,1}$; Max: $\ket{n,0,\ldots,0}$. }
\label{tbl:Decay_Parity_Operators}
\end{table*}

\begin{figure}[h!]
    \centering
    \includegraphics[width=1.0\linewidth]{plots/Parity_Operators/Second_Moments_Sum_Parity_Operators.tex}
    \caption{\justifying Study of the second moment of the parity operator $\prod_{i=1}^p (-1)^{\hat n_i}$, and of the polynomial parity operator $\prod_{i=1}^p (-1)^{\alpha_i \hat n_i^2}$, with $\alpha_i= 1/r_i$ and $r_i$ the i-th prime number for different evolution of $p$, corresponding to \cref{tbl:Decay_Parity_Operators}.}
    \label{fig:Second_Moments_Sum_Parity_Operators}
\end{figure}

In \cref{tbl:Decay_Parity_Operators}, the truncated signals do not vanish exponentially for the large truncation orders $k_0=\lfloor\log_2 n\rfloor$ and $k_0=\lfloor\sqrt{n}\rfloor$. They do, however, exhibit a faster polynomial decay than the untruncated signal. Consequently, one can use a surrogate model as defined in \cref{simtech:approx_gsim}, retaining only the irreps of polynomial size, at the cost of a polynomially small error given by the tail of the remaining irreps in \cref{eq:irrep_simulation_tail}.

The polynomial decay of the last irrep contribution depends on the input state. As explained in \cref{subsec:Classical_Sim_Parity_Op}, the case of maximal bunching, where all particles occupy the same mode, can be simulated classically. In \cref{tab:sec5_decay_scaling} we report the decay of $\Delta_{k_0}$ for two scaling of the input state: a fixed value of $\|R\|_2^2/n$, and the case where half the particles are in a single mode while the rest occupy distinct modes.

\begin{table*}[h!]
\centering
\begin{tabular}{|c|c|c|c|c|c|c|}
    \hline
    \multirow{2}{*}{Input}
    & \multicolumn{2}{c|}{$k_0=1$}
    & \multicolumn{2}{c|}{$k_0=\lfloor\log_2 n\rfloor$}
    & \multicolumn{2}{c|}{$k_0=\lfloor\sqrt{n}\rfloor$}
    \\ \cline{2-7}
    & Decay & $R^2$
    & Decay & $R^2$
    & Decay & $R^2$
    \\ \hline
    %------------------------------------------------------------------
    \multicolumn{1}{|@{}c@{}|}{\cellcolor{inpA}$\|R\|_2/n$}
    & \small\textcolor{blue!70!black}{\textbf{P} $(n^{-1.35})$}
    & \small\textcolor{blue!70!black}{\textbf{0.995}}
    & \small\textcolor{blue!70!black}{\textbf{P} $(n^{-7.24})$}
    & \small\textcolor{blue!70!black}{\textbf{0.950}}
    & \small\textcolor{blue!70!black}{P $(n^{-8.11})$}
    & \small\textcolor{blue!70!black}{0.937}
    \\ \arrayrulecolor{inpA}\cline{1-1}\arrayrulecolor{black}\cline{2-7}
    \multicolumn{1}{|@{}c@{}|}{\cellcolor{inpA}$\approx 0.275$}
    & \small\textcolor{red!70!black}{E}
    & \small\textcolor{red!70!black}{0.865}
    & \small\textcolor{red!70!black}{E}
    & \small\textcolor{red!70!black}{0.949}
    & \small\textcolor{red!70!black}{\textbf{E}}
    & \small\textcolor{red!70!black}{\textbf{0.970}}
    \\ \hline
    %------------------------------------------------------------------
    \multicolumn{1}{|@{}c@{}|}{\cellcolor{inpB}$\|R\|_2/n$}
    & \small\textcolor{blue!70!black}{\textbf{P} $(n^{-0.78})$}
    & \small\textcolor{blue!70!black}{\textbf{0.775}}
    & \small\textcolor{blue!70!black}{\textbf{P} $(n^{-5.90})$}
    & \small\textcolor{blue!70!black}{\textbf{0.965}}
    & \small\textcolor{blue!70!black}{\textbf{P} $(n^{-6.71})$}
    & \small\textcolor{blue!70!black}{\textbf{0.965}}
    \\ \arrayrulecolor{inpB}\cline{1-1}\arrayrulecolor{black}\cline{2-7}
    \multicolumn{1}{|@{}c@{}|}{\cellcolor{inpB}$\approx 0.310$}
    & \small\textcolor{red!70!black}{E}
    & \small\textcolor{red!70!black}{0.455}
    & \small\textcolor{red!70!black}{E}
    & \small\textcolor{red!70!black}{0.882}
    & \small\textcolor{red!70!black}{E}
    & \small\textcolor{red!70!black}{0.930}
    \\ \hline
    %------------------------------------------------------------------
    \multicolumn{1}{|@{}c@{}|}{\cellcolor{inpC}$\|R\|_2/n$}
    & \small\textcolor{blue!70!black}{\textbf{P} $(n^{-0.40})$}
    & \small\textcolor{blue!70!black}{\textbf{0.554}}
    & \small\textcolor{blue!70!black}{\textbf{P} $(n^{-4.87})$}
    & \small\textcolor{blue!70!black}{\textbf{0.948}}
    & \small\textcolor{blue!70!black}{\textbf{P} $(n^{-5.62})$}
    & \small\textcolor{blue!70!black}{\textbf{0.963}}
    \\ \arrayrulecolor{inpC}\cline{1-1}\arrayrulecolor{black}\cline{2-7}
    \multicolumn{1}{|@{}c@{}|}{\cellcolor{inpC}$\approx 0.378$}
    & \small\textcolor{red!70!black}{E}
    & \small\textcolor{red!70!black}{0.269}
    & \small\textcolor{red!70!black}{E}
    & \small\textcolor{red!70!black}{0.866}
    & \small\textcolor{red!70!black}{E}
    & \small\textcolor{red!70!black}{0.933}
    \\ \hline
    %------------------------------------------------------------------
    \multicolumn{1}{|@{}c@{}|}{\cellcolor{inpD}$\|R\|_2/n$}
    & \small\textcolor{blue!70!black}{\textbf{P} $(n^{-0.23})$}
    & \small\textcolor{blue!70!black}{\textbf{0.470}}
    & \small\textcolor{blue!70!black}{\textbf{P} $(n^{-4.12})$}
    & \small\textcolor{blue!70!black}{\textbf{0.948}}
    & \small\textcolor{blue!70!black}{\textbf{P} $(n^{-4.82})$}
    & \small\textcolor{blue!70!black}{\textbf{0.953}}
    \\ \arrayrulecolor{inpD}\cline{1-1}\arrayrulecolor{black}\cline{2-7}
    \multicolumn{1}{|@{}c@{}|}{\cellcolor{inpD}$\approx 0.457$}
    & \small\textcolor{red!70!black}{E}
    & \small\textcolor{red!70!black}{0.263}
    & \small\textcolor{red!70!black}{E}
    & \small\textcolor{red!70!black}{0.873}
    & \small\textcolor{red!70!black}{E}
    & \small\textcolor{red!70!black}{0.935}
    \\ \hline
    %------------------------------------------------------------------
    \multicolumn{1}{|@{}c@{}|}{\cellcolor{inpE}$\|R\|_2/n$}
    & \small\textcolor{blue!70!black}{\textbf{P} $(n^{-0.14})$}
    & \small\textcolor{blue!70!black}{\textbf{0.356}}
    & \small\textcolor{blue!70!black}{\textbf{P} $(n^{-3.49})$}
    & \small\textcolor{blue!70!black}{\textbf{0.948}}
    & \small\textcolor{blue!70!black}{\textbf{P} $(n^{-4.15})$}
    & \small\textcolor{blue!70!black}{\textbf{0.927}}
    \\ \arrayrulecolor{inpE}\cline{1-1}\arrayrulecolor{black}\cline{2-7}
    \multicolumn{1}{|@{}c@{}|}{\cellcolor{inpE}$\approx 0.542$}
    & \small\textcolor{red!70!black}{E}
    & \small\textcolor{red!70!black}{0.280}
    & \small\textcolor{red!70!black}{E}
    & \small\textcolor{red!70!black}{0.870}
    & \small\textcolor{red!70!black}{E}
    & \small\textcolor{red!70!black}{0.910}
    \\ \hline
    %------------------------------------------------------------------
    \multicolumn{1}{|@{}c@{}|}{\cellcolor{inpF}$\|R\|_2/n$}
    & \small\textcolor{blue!70!black}{P $(n^{-0.11})$}
    & \small\textcolor{blue!70!black}{0.245}
    & \small\textcolor{blue!70!black}{\textbf{P} $(n^{-2.99})$}
    & \small\textcolor{blue!70!black}{\textbf{0.927}}
    & \small\textcolor{blue!70!black}{\textbf{P} $(n^{-3.58})$}
    & \small\textcolor{blue!70!black}{\textbf{0.919}}
    \\ \arrayrulecolor{inpF}\cline{1-1}\arrayrulecolor{black}\cline{2-7}
    \multicolumn{1}{|@{}c@{}|}{\cellcolor{inpF}$\approx 0.629$}
    & \small\textcolor{red!70!black}{\textbf{E}}
    & \small\textcolor{red!70!black}{\textbf{0.289}}
    & \small\textcolor{red!70!black}{E}
    & \small\textcolor{red!70!black}{0.865}
    & \small\textcolor{red!70!black}{E}
    & \small\textcolor{red!70!black}{0.917}
    \\ \hline
    %------------------------------------------------------------------
    \multicolumn{1}{|@{}c@{}|}{\cellcolor{inpG}$\|R\|_2/n$}
    & \small\textcolor{blue!70!black}{P $(n^{-0.12})$}
    & \small\textcolor{blue!70!black}{0.502}
    & \small\textcolor{blue!70!black}{\textbf{P} $(n^{-2.60})$}
    & \small\textcolor{blue!70!black}{\textbf{0.920}}
    & \small\textcolor{blue!70!black}{\textbf{P} $(n^{-3.12})$}
    & \small\textcolor{blue!70!black}{\textbf{0.927}}
    \\ \arrayrulecolor{inpG}\cline{1-1}\arrayrulecolor{black}\cline{2-7}
    \multicolumn{1}{|@{}c@{}|}{\cellcolor{inpG}$\approx 0.722$}
    & \small\textcolor{red!70!black}{\textbf{E}}
    & \small\textcolor{red!70!black}{\textbf{0.631}}
    & \small\textcolor{red!70!black}{E}
    & \small\textcolor{red!70!black}{0.840}
    & \small\textcolor{red!70!black}{E}
    & \small\textcolor{red!70!black}{0.910}
    \\ \hline
    %------------------------------------------------------------------
    \multicolumn{1}{|@{}c@{}|}{\cellcolor{inpH}$\|R\|_2/n$}
    & \small\textcolor{blue!70!black}{P $(n^{-0.05})$}
    & \small\textcolor{blue!70!black}{0.092}
    & \small\textcolor{blue!70!black}{\textbf{P} $(n^{-2.19})$}
    & \small\textcolor{blue!70!black}{\textbf{0.901}}
    & \small\textcolor{blue!70!black}{\textbf{P} $(n^{-2.68})$}
    & \small\textcolor{blue!70!black}{\textbf{0.899}}
    \\ \arrayrulecolor{inpH}\cline{1-1}\arrayrulecolor{black}\cline{2-7}
    \multicolumn{1}{|@{}c@{}|}{\cellcolor{inpH}$\approx 0.807$}
    & \small\textcolor{red!70!black}{\textbf{E}}
    & \small\textcolor{red!70!black}{\textbf{0.222}}
    & \small\textcolor{red!70!black}{E}
    & \small\textcolor{red!70!black}{0.819}
    & \small\textcolor{red!70!black}{E}
    & \small\textcolor{red!70!black}{0.885}
    \\ \hline
    %------------------------------------------------------------------
    \multicolumn{1}{|@{}c@{}|}{\cellcolor{inpI}$\|R\|_2/n$}
    & \small\textcolor{blue!70!black}{P $(n^{-0.08})$}
    & \small\textcolor{blue!70!black}{0.231}
    & \small\textcolor{blue!70!black}{\textbf{P} $(n^{-2.02})$}
    & \small\textcolor{blue!70!black}{\textbf{0.890}}
    & \small\textcolor{blue!70!black}{\textbf{P} $(n^{-2.48})$}
    & \small\textcolor{blue!70!black}{\textbf{0.922}}
    \\ \arrayrulecolor{inpI}\cline{1-1}\arrayrulecolor{black}\cline{2-7}
    \multicolumn{1}{|@{}c@{}|}{\cellcolor{inpI}$\approx 0.911$}
    & \small\textcolor{red!70!black}{\textbf{E}}
    & \small\textcolor{red!70!black}{\textbf{0.360}}
    & \small\textcolor{red!70!black}{E}
    & \small\textcolor{red!70!black}{0.797}
    & \small\textcolor{red!70!black}{E}
    & \small\textcolor{red!70!black}{0.900}
    \\ \hline
    %------------------------------------------------------------------
    \multicolumn{1}{|@{}c@{}|}{\cellcolor{inpJ}$\|R\|_2/n$}
    & \small\textcolor{blue!70!black}{P $(n^{-0.04})$}
    & \small\textcolor{blue!70!black}{0.050}
    & \small\textcolor{blue!70!black}{\textbf{P} $(n^{-1.73})$}
    & \small\textcolor{blue!70!black}{\textbf{0.898}}
    & \small\textcolor{blue!70!black}{\textbf{P} $(n^{-2.16})$}
    & \small\textcolor{blue!70!black}{\textbf{0.914}}
    \\ \arrayrulecolor{inpJ}\cline{1-1}\arrayrulecolor{black}\cline{2-7}
    \multicolumn{1}{|@{}c@{}|}{\cellcolor{inpJ}$=1$}
    & \small\textcolor{red!70!black}{\textbf{E}}
    & \small\textcolor{red!70!black}{\textbf{0.198}}
    & \small\textcolor{red!70!black}{E}
    & \small\textcolor{red!70!black}{0.816}
    & \small\textcolor{red!70!black}{E}
    & \small\textcolor{red!70!black}{0.906}
    \\ \hline\hline
    %------------------------------------------------------------------
    % Specific example: half-one family (ket split across the two lines)
    \multicolumn{1}{|@{}c@{}|}{\cellcolor{EmphasisColorPurple}$\bigl|\lfloor n/2\rfloor, \underbrace{1,\ldots,1}, 0, \dots\bigr\rangle$}
    & \small\textcolor{blue!70!black}{\textbf{P} $(n^{-0.26})$}
    & \small\textcolor{blue!70!black}{\textbf{0.654}}
    & \small\textcolor{blue!70!black}{\textbf{P} $(n^{-4.07})$}
    & \small\textcolor{blue!70!black}{\textbf{0.949}}
    & \small\textcolor{blue!70!black}{\textbf{P} $(n^{-4.79})$}
    & \small\textcolor{blue!70!black}{\textbf{0.945}}
    \\ \arrayrulecolor{EmphasisColorPurple}\cline{1-1}\arrayrulecolor{black}\cline{2-7}
    \multicolumn{1}{|@{}c@{}|}{\cellcolor{EmphasisColorPurple}$n-\lfloor n/2\rfloor$}
    & \small\textcolor{red!70!black}{E}
    & \small\textcolor{red!70!black}{0.518}
    & \small\textcolor{red!70!black}{E}
    & \small\textcolor{red!70!black}{0.879}
    & \small\textcolor{red!70!black}{E}
    & \small\textcolor{red!70!black}{0.933}
    \\ \hline
\end{tabular}
\caption{\justifying Decay of $\Delta_{k_0}(\rho,O)$ with photon number $n$ for the number-phase observable with $p =\lfloor\sqrt{n}\rfloor$ and $m=2n$ modes, across input Fock states grouped by their bunching parameter $\|R\|_2/n$ (from minimum bunching, top, to maximum bunching, $\|R\|_2/n=1$, bottom). For each input family and each threshold $k_0$, a power-law fit $\sim n^\alpha$ (\textcolor{blue!70!black}{P}, blue) and an exponential fit $\sim e^{-\beta n}$ (\textcolor{red!70!black}{E}, red) are compared; the better fit (higher $R^2$, in log space) is shown in bold. The double rule separates the fixed-$\|R\|_2/n$ families (nearest-partition approach) from the specific example $|\lfloor n/2\rfloor,1,\ldots,1,0, \dots \rangle$, in which half the photons are bunched in a single mode.}
\label{tab:sec5_decay_scaling}
\end{table*}

Using the results from \cref{tab:sec5_decay_scaling}, we see that the contribution of the last irrep always appears to be polynomially smaller than that of the polynomially-large irreps, although the degree of the polynomial decay depends on the input and can be rather small. We leave open the question of the impact of such a polynomial error on a surrogate model based on \cref{simtech:approx_gsim}.

\begin{figure}[h!]
    \centering
    \includegraphics[width=1.0\linewidth]{plots/Parity_Operators/Second_Moments_all_fock_states.tex}
    \caption{\justifying \justifying Decay of $\Delta_{k_0}(\rho,O)$ with photon number $n$ for the number-phase observable with $p =\lfloor\sqrt{n}\rfloor$ and $m=2n$ modes corresponding to \cref{tab:sec5_decay_scaling}.}
    \label{fig:Second_Moments_all_fock_states}
\end{figure}
\newpage
\section{On the plausibility of non-negligible signal from large irreps}\label{app:pos}

Let \(m=cn\), where \(c>1\) is fixed and \(cn\in\mathbb N\), and denote
\begin{equation}
    |\Phi_n^m|
    =
    \binom{m+n-1}{n},
    \qquad
    d_k^{(n)}
    =
    \frac{2k+m-1}{m-1}
    \binom{k+m-2}{k}^{\!2}.
\end{equation}
The contribution of the \(k\)-th irrep to the Haar variance is
\begin{equation}
    \Gamma_k(\rho,O)
    =
    \frac{\bigl\|P_k^{(n)}(\rho)\bigr\|_2^2\bigl\|P_k^{(n)}(O)\bigr\|_2^2}{d_k^{(n)}}.
    \label{eq:single_irrep_variance_contribution}
\end{equation}
Since \(P_k^{(n)}\) is Hilbert--Schmidt orthogonal and
\(\|O\|_\infty=1\),
\begin{equation}
    \bigl\|P_k^{(n)}(O)\bigr\|_2^2
    \leq
    \|O\|_2^2
    \leq
    |\Phi_n^m|.
\end{equation}
Moreover, \(\bigl\|P_k^{(n)}(\rho)\bigr\|_2^2\leq\Tr(\rho^2)\leq1\), and hence
\begin{equation}
    \Gamma_k(\rho,O)
    \leq
    \bigl\|P_k^{(n)}(\rho)\bigr\|_2^2\frac{|\Phi_n^m|}{d_k^{(n)}}
    \leq
    \frac{|\Phi_n^m|}{d_k^{(n)}}.
    \label{eq:dimensional_variance_bound}
\end{equation}

For \(k=\beta n\), with \(\beta\in(0,1]\) fixed, Stirling's formula gives
\begin{align}
    \log |\Phi_n^m|
    &=
    n s_c-\frac12\log n+\mathcal O_c(1),
    \\
    \log d_{\beta n}^{(n)}
    &=
    2n h_c(\beta)-\log n+\mathcal O_{c,\beta}(1),
\end{align}
where
\begin{align}
    s_c
    &=
    (c+1)\log(c+1)-c\log c,
    \\
    h_c(\beta)
    &=
    (c+\beta)\log(c+\beta)
      -c\log c-\beta\log\beta.
\end{align}
Because
\begin{equation}
    h_c'(\beta)
    =
    \log\!\left(1+\frac{c}{\beta}\right)>0,
\end{equation}
there exists a unique constant \(\alpha_c\in(0,1)\) satisfying
\begin{equation}
    2h_c(\alpha_c)=s_c.
    \label{eq:critical_irrep_fraction}
\end{equation}
Consequently, for every fixed \(\beta>\alpha_c\),
\begin{equation}
    \frac{|\Phi_n^m|}{d_{\beta n}^{(n)}}
    =
    \exp\!\left[
        -n\gamma_c(\beta)+\mathcal O(\log n)
    \right],
    \qquad
    \gamma_c(\beta)
    :=
    2h_c(\beta)-s_c>0.
\end{equation}
Combining this with \eqref{eq:dimensional_variance_bound} yields
\begin{equation}
    \boxed{
    \Gamma_{\beta n}(\rho,O)
    \leq
    \exp\!\left[
        -n\gamma_c(\beta)+\mathcal O(\log n)
    \right].
    }
\end{equation}
Thus, independently of the input state and of the structure of the
bounded observable, every irrep with
\(\lim_{n\to\infty}k_n/n>\alpha_c\) has an exponentially-vanishing
contribution to the variance. More uniformly, for every \(\varepsilon>0\),
\begin{equation}
    \sup_{k\geq(\alpha_c+\varepsilon)n}
    \Gamma_k(\rho,O)
    \leq
    \exp\!\left[-\gamma_{c,\varepsilon}n+\mathcal O(\log n)\right],
\end{equation}
where
\begin{equation}
    \gamma_{c,\varepsilon}
    =2h_c(\alpha_c+\varepsilon)-s_c>0.
\end{equation}
No such exponential conclusion follows at the boundary
\(k=\alpha_cn+\mathcal O(\log n)\), where \(|\Phi_n^m|/d_k^{(n)}\) is only
polynomial in \(n\).

Consider instead
\begin{equation}
    k_n=\left\lceil\kappa\sqrt n\right\rceil,
    \qquad \kappa>0.
\end{equation}
Stirling's formula in the regime \(k_n=o(m)\) gives
\begin{equation}
    \log d_{k_n}^{(n)}
    =
    \kappa\sqrt n\log n
    +
    2\kappa\sqrt n
    \log\!\left(\frac{ec}{\kappa}\right)
    +
    \mathcal O_{c,\kappa}(\log n).
\end{equation}
Therefore,
\begin{equation}
    \boxed{
    \frac{d_{k_n}^{(n)}}{|\Phi_n^m|}
    =
    \exp\!\left[
        -s_cn
        +\kappa\sqrt n\log n
        +\mathcal O_{c,\kappa}(\sqrt n)
    \right]
    =
    e^{-s_cn+o(n)}.
    }
    \label{eq:sqrtn_dimension_ratio}
\end{equation}
Hence the dimensional bound does not force the
\(k_n=\Theta(\sqrt n)\) contribution to vanish exponentially.

More precisely, if one requires
\begin{equation}
    \Gamma_{k_n}(\rho,O)\geq n^{-r},
\end{equation}
for some fixed \(r>0\), then
\eqref{eq:single_irrep_variance_contribution} implies
\begin{equation}
    \bigl\|P_{k_n}^{(n)}(\rho)\bigr\|_2^2
    \geq
    n^{-r}\frac{d_{k_n}^{(n)}}{\bigl\|P_{k_n}^{(n)}(O)\bigr\|_2^2}.
    \label{eq:exact_required_state_purity}
\end{equation}
Consequently, every such construction must satisfy
\begin{equation}
    \boxed{
    \bigl\|P_{k_n}^{(n)}(\rho)\bigr\|_2^2
    \geq
    n^{-r}\frac{d_{k_n}^{(n)}}{|\Phi_n^m|}
    =
    \exp\!\left[
        -s_cn
        +\kappa\sqrt n\log n
        +\mathcal O_{c,\kappa}(\sqrt n)
    \right].
    }
    \label{eq:necessary_sqrtn_state_purity}
\end{equation}
This necessary lower bound is itself exponentially-small. Thus an inverse-polynomial variance contribution at \(k=\Theta(\sqrt n)\) is compatible with an exponentially small state irrep purity: it can, in principle, be compensated by a sufficiently large ratio \(\bigl\|P_k^{(n)}(O)\bigr\|_2^2/d_k^{(n)}\). \cref{eq:necessary_sqrtn_state_purity} is only a necessary norm-budget condition; realizing it requires an observable whose \(k\)-th irrep component nearly saturates the available
Hilbert--Schmidt weight.

\end{document}